\newcommand{\R}{\mathbb R}
\newcommand{\Z}{\mathbb Z}
\newcommand{\C}{\mathbb C}
\newcommand{\N}{\mathbb{N}}
\newtheorem{thm}{Theorem}[section]
\newtheorem{lem}[thm]{Lemma}
\newtheorem{cor}[thm]{Corollary}
\theoremstyle{remark}
\newtheorem{rem}{\bf Remark}[section]
\theoremstyle{definition}
\newtheorem{defn}[thm]{Definition}
\numberwithin{equation}{section}
\begin{document}

	\title[]{Localization of interacting random particles with power-law long-range hopping}
	\author[W. Jian]{Wenwen Jian}
	\address[WJ] {School of Mathematics, Physics and Statistics,
		Shanghai Polytechnic University,
		Shanghai 201209,
		China}
	\email{wwjian@sspu.edu.cn}
	
	\author[Y.Sun]{Yingte Sun*}
	\thanks{*Corrseponding author}
	\address[YS]{School of Mathematical Sciences,
		Yangzhou University,
		Yangzhou 225009,
		China}
	\email{sunyt@yzu.edu.cn}

	\date{\today}

	\keywords{Multi-particle, Localization. Multi-scale analysis}
	\begin{abstract}
		In this paper, we study the interacting random particles with power-law long-rang hopping. Via the multi-scale analysis arguments for the Green's function, we establish the power-law localization for all energy with strong disorder.
	\end{abstract}
	\maketitle

	\maketitle
	\section{Introduction}
	The  configuration space of the $N$-particle lattice system is the Cartesian product $\mathbb{Z}^d\times \cdots \times \mathbb{Z}^d$, which we denote for brevity by $\mathbb{Z}^{Nd}$. The $N$-particle Hamiltonian $\mathbf{H}^{(N)}_\omega$ is a lattice operator acting on $ \bm{\phi} \in  \ell^2(\mathbb{Z}^{Nd})$: 
	\begin{equation}\label{n}
		\begin{split}
			\mathbf{H}^{(N)}_{\omega}\bm{\phi}(\textbf{x})&=\frac{1}{g}\Big(\mathbf{T}+\mathbf{U(\textbf{x})}\Big)\bm{\phi}(\textbf{x}) +\mathbf{V}(\textbf{x},\omega)\bm{\phi}(\textbf{x})\\
			& =\frac{1}{g}\Big(\sum_{\mathbf{y} \in \mathbb{Z}^{Nd}}\mathbf{T}(\textbf{x},\textbf{y})\bm{\phi}(\mathbf{y})+\mathbf{U}(\mathbf{x})\phi(\textbf{x})\Big)+\mathbf{V}(\textbf{x},\omega)\bm{\phi}(\mathbf{x}),\\
		\end{split}
	\end{equation}
	where 
	\begin{equation}
		\begin{split} &\textbf{x}=(x_1,\cdots,x_N), \quad  \textbf{y}=(y_1,\cdots,y_N) \in \mathbb{Z}^{Nd},\\
			&\mathbf{V}(\textbf{x},\omega)=\sum^N_{j=1}V(x_j,\omega).
		\end{split}
	\end{equation}

	$\bullet$ Here $x_j=\left(x^{(1)}_j,\cdots,x^{(d)}_j\right)$, $y_j=\left(y^{(1)}_j,\cdots,y^{(d)}_j\right)$ stand for coordinate vectors of the $j$th particle in  $\mathbb{Z}^d$, $j=1,\cdots,N$.
	
	Fix $1\leq n \leq N$, $\|\cdot\|$ denotes the max-norm: for $\textbf{u}=(u_1,\cdots,u_n) \in \mathbb{Z}^{nd}$,
	\begin{equation*}
		\|\textbf{u}\|=\max_{1\leq j\leq n} \|u_j\|,
	\end{equation*}
	where, for $u_j=(u^{(1)}_j,\cdots,u^{(d)}_j ) \in \mathbb{Z}^d$,
	\begin{equation*}
		\|u_j\|=\max_{i=1,\cdots,d }|u^{(i)}_{j}|. 
	\end{equation*}

	The present paper aims to establish the \textbf{power-law localization} of the random $N$-particle Hamiltonian operator $\mathbf{H}^{(N)}_{\omega}$ \eqref{n} , in which the operator $\mathbf{T}$ is  the power-law long range hopping  (see the \textbf{Assumption A} for more details).

	For the single quantum particle, Anderson localization is an well-known phenomenon  in condensed matter physics, in which particles living on a lattice become exponentially localized under random fields. The rigorous mathematical interpretation of Anderson localization can be traced back to the pioneering works \cite{FS1983,FMSS1985} of Frohlich-Spencer-etc on the multi-scale analysis (MSA) for Green's functions and the highly innovative works  \cite{AM1993,A1994} of Aizenman-Molchanov on the estimation of Green functions using fractional moment methods (FMM). We also recommend the recent books \cite{AM2009,K2008} of Aizenman-Molchanov and  Kirsch for an overview of the state-of-the-art.
	
	However, it is almost impossible to make a closed single particle in the real physical world. For a system of quantum particles with positive density\footnote{The number of particles is proportional to the system’s size.}, that we call it \textbf{many-body system}, interactions between particles are unavoidable. The central question of  ``Many-body localization" (MBL) is what happens to the localization properties of such quantum systems in the presence of interactions. This problem is envisioned in Anderson's pioneer paper \cite{A1958}.  In recent years, the research on MBL becomes a hot topic in condensed matter physics in order to create some quantum states to violate the ``Eigenstate-Thermalization Hypothesis".  Establishing a mathematical basis for MBL is ambitious and outside the scope of this paper. We shift our attention to the relatively simple case where the total number of particles is fixed. The localization of such $N$-particle system in a random environment can be obtained by strengthening the MSA method \cite{CS2009,CS20091} and FMM \cite{AM2009}, as long as  $N$ is fixed. To better distinguish from many-body system, we will refer to such systems with  finite number of quantum particles as \textbf{multi-body system}.
	
	In the aforementioned results for multi-body systems, only the ideal case is considered, where the hopping  is present only at the nearest sites. The case of long-range hopping, in particular power-law long-range, is not taken into account.  The localization properties are well established for a single particle in the presence of a random field with a power-law long-range hopping. In the influential  work  \cite{AM1993}, Aizenman-Molchanov firstly proved the power-law localization of single particles via the fractional moment method, which require the random filed to be absolute continuous. Shi \cite{Shi2021}  gave a multi-scale coupling lemma by introducing some estimations in \cite{BB13}, and proved the power-law localization under H\"older continuous random fields. A fascinating question is what happens to the localization properties of multi-body systems with power-law long-range hopping, which is the main concern of this paper.

	\subsection{Main Results}\
	
	Before stating our main localization results, we need some reasonable assumptions on the power-law long-range hopping $\mathbf{T}$, the interacting term $\mathbf{U}$ and the random potential $\mathbf{V}$.
	
	$\textbf{Assumption A:}$  The long-range hopping $\mathbf{T}$ is defined as
	\begin{equation}\label{T}
		\mathbf{T}(\textbf{x},\textbf{y})=\left\{	\begin{array}{cc}
			\frac{1}{\langle y_j-x_j \rangle^r},\ \ \ \quad   \text{if} \ x_i=y_i,\ \forall i \neq j, \\
			\\	0,\ \ \ \ \  \quad \quad  \  \text{otherwise}.
		\end{array}\right.
	\end{equation}
	where $\langle x \rangle =\max\{1,\|x\|\}$.
	\\
	
	$\textbf{Assumption B:}$  The interaction potential $\mathbf{U}$ is assumed to  be of the form
	\begin{equation*}
		\mathbf{U}(\textbf{x})=\sum_{1\leq j_1 < j_2 \leq N}U(x_{j_1},x_{j_2}). 
	\end{equation*}
	where the function $U:\ \Z^d\times\Z^d\rightarrow\R$ satisfies the following properties 
	\begin{itemize} 
		\item $U$ is a bounded   symmetric function: $\forall x,x' \in \mathbb{Z}^d, $
		\begin{align}\label{u1}
			\sup[|U(x,x')|]\leq M_1<\infty, \quad  U(x,x')=U(x',x), 
		\end{align}	
		
		\item $U$ obeys
		\begin{align}\label{u2}
			U(x,x')=0,\quad \text{if} \ \|x-x'\|\geq \mathrm{r}_0.
		\end{align}
		Here $\mathrm{r}_0\in[1,+\infty)$ is a given value (the interaction range).
	\end{itemize}

	$\textbf{Assumption C:}$  The random external potential $V(x, \omega),\ x\in\Z^d$, is assumed to be real independent identically distributed (IID) random variables, with the common probability distribution $\mu$ on some probability space $(\Omega,\mathcal{F},\mathbb{P})$. 
	
	\begin{itemize}
		
		\item The common distribution $\mu$ has compact support: For $\text{supp}(\mu)=\{x:\mu(x-\epsilon,x+\epsilon)>0 \  \text{for any } \epsilon>0\}$,
		\begin{align*}
			\text{supp}(\mu)\subset[-M,M],\ \ \ 0<M<\infty.
		\end{align*}	
		
		\item The distribution probability measure $\mu$ is  H\"older continuous of order $\rho>0$:
		\begin{align*}
			\frac{1}{\mathcal{K}_{\rho}(\mu)}=\inf_{\kappa>0}\sup_{0<|a-b|\leq\kappa}|a-b|^{-\rho}\mu([a,b])<\infty.
		\end{align*}
	\end{itemize}
	\begin{rem}
		Let $\mu$ be H\"older continuous of order $\rho>0$. Then  for any $0<\kappa<\mathcal{K}_{\rho}(\mu)$, there is some $\kappa_0=\kappa_0(\kappa,\mu)>0$ so that 
		\begin{equation*}
			\mu([a,b]) \leq \kappa^{-1}|a-b|^{\rho} \ \ \text{for}\ \ 0\leq b-a \leq \kappa_0.
		\end{equation*}
	\end{rem}
	
	The main assertion of this paper is 
	\begin{thm}\label{mainthm} Fix $N\geq 2$, and 
		consider the $N$-particle random Hamiltonian $\textbf H^{(N)}_{\omega}$ given by (\ref{n}). Suppose that  \textbf{Assumptions} $\textbf{A}$, $\textbf{B}$ and $\textbf{C}$ hold true. Let 
		\begin{equation}
			r>  \frac{40 \cdot 18^N \cdot 20Nd }{9 \rho}+\frac{75}{9}Nd.
		\end{equation}
		Then there exists $g^*=g^*(N,d,\kappa,\rho,r,\mathrm{r}_0,M,M_1)\in(0,+\infty)$ such that for any $g$ with $|g|\geq g^*$, with $\mathbb{P}$-probability one, the operator $\textbf H^{(N)}_{\omega}$ is pure point spectrum. Furthermore, there exists a complete system of eigenfunctions $\bm{\phi}_{j}(\textbf{x},\omega)$ of $\textbf H^{(N)}_{\omega}$ satisfying
		\begin{align}
			|\bm{\phi}_{j}(\textbf{x},\omega)|\leq C_j(\omega)\langle\textbf{x}\rangle^{-\frac{r}{300}}.
		\end{align}
	\end{thm}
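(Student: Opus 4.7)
The plan is to prove the theorem via a multi-scale analysis (MSA) for the Green's function of $\mathbf{H}^{(N)}_\omega$, organized as a \emph{double induction}: an outer induction on the number of particles $n = 1,2,\ldots,N$ (in the Chulaevsky-Suhov spirit), and for each fixed $n$ an inner induction along a geometric sequence of scales $L_{k+1} = L_k^{\alpha}$ following the single-particle power-law MSA of Shi (which adapts the multi-scale coupling lemma of Bourgain-Bourgain). The exponential dependence $18^N$ in the hypothesis on $r$ reflects the compounding loss of constants from the outer induction, while the factor $20Nd/(9\rho)$ is the contribution of the H\"older exponent through the Wegner input. At each level $n \le N$, I would consider restrictions $\mathbf{H}^{(n)}_{\boldsymbol{\Lambda}}$ to finite cubes $\boldsymbol{\Lambda} \subset \mathbb{Z}^{nd}$ and call a box $(E,\tau)$-good if $|G_{\boldsymbol{\Lambda}}(\mathbf{x},\mathbf{y};E)| \le \langle \mathbf{x}-\mathbf{y}\rangle^{-\tau}$, the target being a variable-energy statement that bad boxes occur with probability at most $L^{-p}$ for an appropriate $p$.

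Following the standard multi-particle framework, I would split $n$-particle boxes into \emph{partially interactive} (PI) boxes, whose particle cluster decomposes into two groups separated by more than $\mathrm{r}_0$, and \emph{fully interactive} (FI) boxes, where no such decomposition exists. For PI boxes the Green's function factorizes tensorially over sub-clusters with strictly fewer than $n$ particles, so the outer-inductive hypothesis applies immediately. For FI boxes I would run the inner scale induction: the initial scale $L_0$ is handled by taking $|g|$ large enough that, since $r > d$ makes $\mathbf{T}$ bounded on $\ell^2$, the operator $\mathbf{H}^{(n)}_{\boldsymbol{\Lambda}_{L_0}}$ is a small perturbation of the diagonal $\mathbf{V}/g$ and a Neumann series yields the base step. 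The inductive step propagates polynomial decay through the resolvent identity $G_{\boldsymbol{\Lambda}} = G_{\boldsymbol{\Lambda}'} + G_{\boldsymbol{\Lambda}'} \mathbf{T}_{\partial} G_{\boldsymbol{\Lambda}}$; since $\mathbf{T}$ is long-range, the boundary term is replaced by a global sum that contributes a combinatorial loss $\lesssim L^{nd}$ and a decay factor $\langle \mathrm{dist}\rangle^{-r}$, and Shi's multi-scale coupling is used to convert this into a polynomial bound at scale $L_{k+1}$.

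The main obstacle will be the two-box Wegner estimate at the FI level, because two distant FI boxes can share \emph{projected} $\mathbb{Z}^d$ coordinates and hence correlated random potentials. Overcoming this requires the standard Chulaevsky-Suhov separation-of-projections argument: if two FI boxes are sufficiently far apart in $\mathbb{Z}^{nd}$, their $\mathbb{Z}^d$-projections decompose into a common part (which cancels in the spectral difference) and disjoint parts carrying enough independent randomness so that H\"older continuity of $\mu$ yields a usable probability bound. Once the variable-energy MSA is closed at level $N$, pure-point spectrum and the eigenfunction decay bound $|\boldsymbol{\phi}_j(\mathbf{x},\omega)| \le C_j(\omega)\langle \mathbf{x}\rangle^{-r/300}$ follow by a standard Simon-Wolff / eigenfunction decay lemma together with Borel-Cantelli over annular shells. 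The exponent $r/300$ is the residue of $r$ after all the geometric and probabilistic losses along the double induction are subtracted, and the arithmetic bound on $r$ in the hypothesis is calibrated precisely so that every one of these losses can be absorbed simultaneously.
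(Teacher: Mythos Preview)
Your proposal is correct and follows essentially the same architecture as the paper: a Chulaevsky--Suhov style double induction (in $n$ and in the scale $L_k=L_0^{4^k}$) combining the PI/FI dichotomy, the multi-particle Stollmann/Wegner bound for weakly separable cubes, and Shi's Sobolev-norm coupling lemma to propagate the power-law Green's function estimate. The only notable technical deviations are that the paper formulates ``good'' cubes via the Sobolev norm $\|G_\Lambda\|_s\le L^{\tau_n+s/2}$ rather than a pointwise bound, and derives the spectral/eigenfunction conclusion from Shnol's theorem (polynomially bounded generalized eigenfunctions) together with the Poisson identity and Borel--Cantelli over the annuli $A_k=\Lambda_{13NL_{k+1}}\setminus\Lambda_{12NL_k}$, rather than a Simon--Wolff argument.
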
	
	
	\subsection{Some Remarks}\
	
	$\bullet$ \textbf{Main ingredients of our proof.}
	
	As mentioned above, this paper is concerned with whether the localization properties of multi-particle systems can be maintained in the presence of interaction. Absence of interaction, the multi-particle system can be decompose into independent single-particle systems, which is no different from Anderson localization.
	
	Our proof is based on the MSA for the Green's function with power-law decay. Such a proof is perturbative, and we treat the hopping and interaction  as perturbations of the random potential. The application of MSA method to $N$-particle system \eqref{n} will face two mathematical difficulties. One is that in the multi-particle phase space, the random variables on the disjoint cubes are no longer completely independent of each other, but are correlated in some sense. This makes the consideration of resonances more complicated when considering the finite volume version of Hamiltonian operator \eqref{n} on two disjoint cubes. To solve this problem, we use some key observations in \cite{CS2008,CS2009} to divide the $n$-particle cubes into partially interactive and fully interactive cases (see definition \eqref{pifi}), and only consider the pairs  of the separable cubes (see definition \eqref{spear}). By this careful partitioning, we find that in most cases the random potentials on a pair of separable cubes are independent of each other. So we can get over this problem.
	Second, since we are considering the Green's function with power-law decay, the number of iterations can only be finite if we estimate the Green's function on a larger cube by using the geometric resolvent equation in \cite{K2008}. 
	To this end, we introduce the key coupling lemma formulated by Shi \cite{Shi2021} in considering single-particle systems, that is, if the resonance is sparse, the power-law decay rate of Green's function on a larger cubes does not slow down compared to that on smaller cubes.
	
	By combining the above two methods, which of course require very precise estimates, we are able to complete the main proof in this paper.
	
	$\bullet$ \textbf{Relevant results of  quasi-periodic operator.}
	
	Moving beyond the  Anderson model, localization of quasi-periodic operators has also been well developed in the last decades. After the pioneering work of Fr\"ohlich-Spencer-Wittwer \cite{FSW1990}, the MSA method has been developed  in the study of quasi-periodic operators.   Bourgain-etc  \cite{BGS2002,B2005,B2007} greatly developed MSA method to the high-dimensional quasi-periodic Schr\"odinger operators by combing some analytical method. Recently,  Cao-Shi-Zhang \cite{CSZ2023,CSZ2004,CSZ20041} have developed the Wegner type estimate for some high-dimensional quasi-periodic Schr\"odinger operators, in which can be seen a substantial extension of MSA method in the Anderson model. In addition, there are some remarkable works related to the conclusions of this paper. Bourgin-Kachkovskiy \cite{BK2019} established localization for two interacting quasi-periodic particles by assuming symmetry and asymmetry of the potentials. Shi has carried out some in-depth studies of quasi-periodic operators with power-law long-range hopping. By introducing the Nash-Moser iterative diagonalization method,  Shi \cite{Shi2021,Shi2023} explored the optimal case of power-law localization for a class of monotone quasi-periodic operators. More recently, Shi-Wen \cite{SW2024} further developed Green's function estimates  to explore the power-law localization properties for a broader class of quasi-periodic operators.
	
	$\bullet$ \textbf{Further progress of many-body system.}
	
	We emphasize that MBL  does not constrain the number of particles. Although the interacting terms (some times including hopping terms) are also treated as perturbations, the magnitude of the perturbation does not depend on the number of particles.  This makes it materially different from the conclusions obtained in this paper. For our proof,  the magnitude  of the perturbation approaches zero rapidly as the number of particles increases.
	
	Until now, the mathematical study of MBL has been in its infancy, and the general approach to building MBL is still shrouded in fog.  The few mathematical results are restricted to one-dimensional quantum spin models: the XY chain, the quantum Ising chain, and the XXZ chain.  Hamza-Sims-Stolz \cite{HSS2012} transformed the disordered XY spin chains into  non-interacting fermions via the Jordan transformation and  the  exponential dynamical localization, uniformly for  time, is obtained  . Imbrie \cite{I2016} obtained the exponential decay of the two-point function  for the all energy of the quantum Ising chains by using KAM-type iterations. Elgart-Klein-Stolz \cite{EKS2018,EKS2019} studied a family of ferromagnetic XXZ chains, and obtained the localization properties in the low-energy region. More recently, without reduction the XXZ chains to  Schr\"odinger operators, the localization of any fixed energy interval from the bottom  has been established  by using the FMM in \cite{EK2022}.

	\subsection{Organization of the Paper}\
	
	The rest of the paper is organized as follows. In Section 2, some important facts of $n$-particle cubes and main part of  MSA argument for the Green's function (Theorem \ref{msa}) are presented. The  proof of Theorem \ref{msa} is given in Section 3. The initialization for the MSA argument is presented in Section 4.  The whole MSA argument for the Green's function is presented in Section 5. The proof of localization for the $N$-particle system \eqref{n}(Theorem \ref{mainthm}) is completed in Section 6. Some useful lemmas are included in the Appendix.

	\section{Some Properties Of $n$-Particle Cubes}
	\subsection{Notations}\
	
	For any vector $\textbf x=(x_1,\cdots,x_n)\in\Z^{nd}$, define the projection mapping as
	\begin{align*}
		\Pi_{j}:\ \textbf x=(x_1,\cdots,x_n)\mapsto x_j,\ \ \ j=1,\cdots,n.
	\end{align*}
	The support $\Pi \textbf{x}\subset \R^d$ of a configuration $\textbf{x}$ and the full project $\Pi\Lambda$ of a subset $\Lambda\subset \Z^{nd}$ are defined by 
	\begin{align*} 
		\Pi \textbf x=\bigcup_{j=1}^n\{\Pi_j \textbf x\}\subset \Z^d,\ \ \ \  \Pi\Lambda=\bigcup_{j=1}^n \Pi_j \Lambda \subset \Z^d.
	\end{align*}
	A particle projection $\Pi_J\Lambda$, determined by a given nonempty subset $J\subseteq\{1,\cdots,n\}$ is defined by 
	\begin{align*}
		\Pi_J\Lambda=\bigcup_{j\in J}\Pi_j\Lambda \subset \Z^d.
	\end{align*}
	
	Here and below, $\Lambda^{(n)}_L(\textbf{u})$ stands for the $n$-particle lattice cube of size $2L$ around $\textbf{u}=(u_1,\cdots, u_n)$, where $u_j=\left(u^{(1)}_j,\cdots,u^{(d)}_j\right)\in\Z^d$:
	\begin{align}\label{twocube}
		\begin{split}
			\Lambda^{(n)}_L(\textbf{u})&=\left( \times_{j=1}^n \Pi_j\Lambda^{(n)}_L(\textbf u) \right)=\left( \times_{j=1}^n \Lambda^{(1)}_L(u_j) \right)\\
			&=\left( \times_{j=1}^n \times_{i=1}^d [u_j^{(i)}-L,u_j^{(i)}+L] \right)\cap \Z^{nd}.
		\end{split}
	\end{align}
	
	The cardinality of cube $\Lambda^{(n)}_L(\textbf{u})$ is denoted by $\big|\Lambda^{(n)}_L(\textbf{u})\big|$  and the cardinality of cube $\varPi_j\Lambda^{(n)}_L(\textbf{u})$ by $\big|\varPi_j\Lambda^{(n)}_L(\textbf{u})\big|$. $\textbf H^{(n)}_{\Lambda^{(n)}_L(\textbf{u})}$ is a Hermitian operator in the Hilbert space $\ell^2\big(\Lambda^{(n)}_L(\textbf{u})\big)$ of dimension $\big|\Lambda^{(n)}_L(\textbf{u})\big|$.

	\subsection{Separability of Cubes}
	\begin{defn}\label{spear}\

		\begin{itemize}
			\item[(i)]A pair of $n$-particle cubes $\Lambda_L^{(n)}(\textbf{u})$ and $\Lambda_L^{(n)}(\textbf{v})$ is called \textit{weakly separable} if there exists a non-empty subset $J\subseteq\{1,\cdots,n\}$ with $J^C=\{1,\cdots,n\}\setminus J$ such that either
			\begin{align*}
				\Pi_J\Lambda_{L+\mathrm{r}_0}^{(n)}(\textbf{u})\cap\left( 	\Pi_{J^C}\Lambda_{L+\mathrm{r}_0}^{(n)}(\textbf{u})\cup 	\Pi\Lambda_{L+\mathrm{r}_0}^{(n)}(\textbf{v})\right)=\emptyset
			\end{align*}	
			or
			\begin{align*}
				\Pi_J\Lambda_{L+\mathrm{r}_0}^{(n)}(\textbf{v})\cap\left( 	\Pi_{J^C}\Lambda_{L+\mathrm{r}_0}^{(n)}\textbf{v})\cup 	\Pi\Lambda_{L+\mathrm{r}_0}^{(n)}(\textbf{u})\right)=\emptyset.
			\end{align*}	
			\item[(ii)] A pair of $n$-particle cubes $\Lambda_L^{(n)}(\textbf{u})$ and $\Lambda_L^{(n)}(\textbf{v})$ is called \textit{separable} if and only if it is weakly separable and $\|\textbf{u}-\textbf{v}\|> 11nL$.
			\item[(iii)] A pair of $n$-particle cubes $\Lambda_L^{(n)}(\textbf{u})$ and $\Lambda_L^{(n)}(\textbf{v})$ is called \textit{completely separable}  if 
			\begin{align*}
				\Pi\Lambda_{L+\mathrm{r}_0}^{(n)}(\textbf{u})\cap	\Pi\Lambda_{L+\mathrm{r}_0}^{(n)}(\textbf{v})=\emptyset.
			\end{align*}	
		\end{itemize}
	\end{defn}
	\begin{lem}\label{weaks}
		Let $n\geq2$, $\textbf{u},\textbf{v}\in\Z^{nd}$. The following statements hold
		\begin{itemize}
			\item[(a)] Given an $n$-particle cube $\Lambda_L^{(n)}(\textbf{u})$, there exists a finite collection of $n$-particle cubes $\big\{\Lambda_{A(n,L)}^{(n)}(\textbf{u}^{i})\big\}$ of radius $A(n,L)\leq 2n(L+\mathrm{r}_0)$, where $i$ runs from $1$ to a number $K(\textbf{u},n,L)$ with $K(\textbf{u},n,L)\leq n^n$, such that if 
			\begin{align*}
				v\notin	\bigcup_{i=1}^{K(\textbf{u},n,L)}\Lambda^{(n)}_{A(n,L)}(\textbf{u}^i),
			\end{align*}
			then the pair of cubes $\Lambda_L^{(n)}(\textbf{u})$ and $\Lambda_L^{(n)}(\textbf{v})$ is weakly separable.
			\item[(b)] If $\|\textbf{u}\|>\|\textbf{v}\|+2(L+\mathrm{r}_0)$, the cubes  $\Lambda_L^{(n)}(\textbf{u})$ and $\Lambda_L^{(n)}(\textbf{v})$ are weakly separable.
		\end{itemize}	
	\end{lem}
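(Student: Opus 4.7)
The plan is to recast weak separability as a connectivity property of the intersection graph $G$ on the vertex set $\{u_1,\ldots,u_n,v_1,\ldots,v_n\}$ in which two vertices are joined whenever the corresponding radius-$(L+\mathrm{r}_0)$ cubes in $\Z^d$ overlap, equivalently whenever their centres are within max-norm distance $2(L+\mathrm{r}_0)$. Under this reformulation, $\Lambda_L^{(n)}(\textbf{u})$ and $\Lambda_L^{(n)}(\textbf{v})$ are weakly separable precisely when some non-empty, proper subset $J$ of the $\textbf{u}$-indices (or, by the symmetric form, of the $\textbf{v}$-indices) is separated in $G$ from the remaining $\textbf{u}$-vertices and from all $\textbf{v}$-vertices.

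For part (a), I would argue by contrapositive. If the pair is not weakly separable, the second form of Definition~\ref{spear} forbids any non-empty subset of $\{v_1,\ldots,v_n\}$ from being isolated in $G$. Hence for every $k$ the connected component of $v_k$ must contain at least one $\textbf{u}$-vertex, reachable by a chain of length at most $2n$. Fix any selection rule (e.g.\ breadth-first with lexicographic tie-breaking) and call the resulting $u$-index $\tau(k)\in\{1,\ldots,n\}$; because consecutive centres along the chain differ by $\leq 2(L+\mathrm{r}_0)$, the triangle inequality yields $\|v_k-u_{\tau(k)}\|\leq 2n(L+\mathrm{r}_0)$. Consequently $\textbf{v}$ lies in the $n$-particle cube $\Lambda^{(n)}_{A(n,L)}(\textbf{u}^{\tau})$ with $\textbf{u}^{\tau}:=(u_{\tau(1)},\ldots,u_{\tau(n)})$ and $A(n,L):=2n(L+\mathrm{r}_0)$. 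Letting $\tau$ range over all maps $\{1,\ldots,n\}\to\{1,\ldots,n\}$ produces the required cover by at most $n^n$ cubes.

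For part (b), pick $j^{*}\in\arg\max_j\|u_j\|$, so $\|u_{j^{*}}\|=\|\textbf{u}\|$, and select a coordinate $i^{*}$ with $|u_{j^{*}}^{(i^{*})}|=\|\textbf{u}\|$ (WLOG positive). Take $J$ to be the connected component of $j^{*}$ in the sub-graph of $G$ restricted to $\textbf{u}$-indices; then $\Pi_J\Lambda_{L+\mathrm{r}_0}^{(n)}(\textbf{u})$ is disconnected from $\Pi_{J^C}\Lambda_{L+\mathrm{r}_0}^{(n)}(\textbf{u})$ by construction. The remaining task is to exclude $\Pi\Lambda_{L+\mathrm{r}_0}^{(n)}(\textbf{v})$: every $j\in J$ satisfies $\|u_j\|\geq\|u_{j^{*}}\|-2(|J|-1)(L+\mathrm{r}_0)$ by chaining, while $\|v_k\|\leq\|\textbf{v}\|$; two radius-$(L+\mathrm{r}_0)$ cubes can intersect only if their centres are within $2(L+\mathrm{r}_0)$, so the hypothesis $\|\textbf{u}\|>\|\textbf{v}\|+2(L+\mathrm{r}_0)$ (absorbing the cumulative chain drop via the norm gap) forces $\Lambda_{L+\mathrm{r}_0}(u_j)\cap\Lambda_{L+\mathrm{r}_0}(v_k)=\emptyset$ for every $j\in J$ and every $k$.

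The main obstacle lies in the combinatorial accounting of part (a): the witness $\tau(k)$ is not canonically defined by the separability failure, so a rigid selection protocol has to be imposed to simultaneously control both the $n^n$ template count and the $2n(L+\mathrm{r}_0)$ chain-length bound. Once (a) is in place, (b) is essentially a direct corollary of the same norm-versus-distance reasoning applied to the single $\textbf{u}$-component containing the maximising index.
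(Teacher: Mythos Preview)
The paper gives no proof of its own; it simply refers to Lemma~3.3.2 of \cite{CS2014}.

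Your argument for part~(a) is essentially the standard graph-connectivity proof and is correct in substance. One imprecision: ``a chain of length at most $2n$'' would only yield $\|v_k-u_{\tau(k)}\|\le 4n(L+\mathrm r_0)$. What actually gives the stated radius $2n(L+\mathrm r_0)$ is that on a \emph{shortest} path in $G$ from $v_k$ to the nearest $u$-vertex every intermediate vertex must be a $v$ (otherwise the path could be shortened), so the path has at most $n$ edges.

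Your argument for part~(b) has a real gap. Taking $J$ to be the connected component of $j^{*}$ in the $\textbf u$-subgraph does ensure $\Pi_J\Lambda_{L+\mathrm r_0}^{(n)}(\textbf u)\cap\Pi_{J^C}\Lambda_{L+\mathrm r_0}^{(n)}(\textbf u)=\emptyset$, but it does \emph{not} force disjointness from $\Pi\Lambda_{L+\mathrm r_0}^{(n)}(\textbf v)$. Your chaining bound gives only $\|u_j\|\ge\|\textbf u\|-2(|J|-1)(L+\mathrm r_0)$ for $j\in J$, whereas the hypothesis supplies a gap of just $2(L+\mathrm r_0)$; for $|J|\ge 2$ these do not combine, and the parenthetical ``absorbing the cumulative chain drop via the norm gap'' is wishful. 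A concrete failure of your construction: take $n=3$, $d=1$, $L+\mathrm r_0=1$, $\textbf u=(10,8,6)$, $\textbf v=(7,5,4)$. Then $\|\textbf u\|=10>9=\|\textbf v\|+2$, the $\textbf u$-component of $j^{*}=1$ is $\{1,2,3\}$, and $\Lambda_1(u_3)=[5,7]$ meets $\Lambda_1(v_1)=[6,8]$, so your $J$ does not work. In fact one checks directly that \emph{no} admissible $J$ works on either side in this example, so the hypothesis $\|\textbf u\|>\|\textbf v\|+2(L+\mathrm r_0)$ as printed is not by itself sufficient for weak separability; you should compare carefully with the exact hypothesis and the exact definition of weak separability used in \cite{CS2014}, where a factor depending on $n$ typically appears.
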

	
	\begin{proof}
		The proof of (a) and (b) can be seen in Lemma 3.3.2 of \cite{CS2014}.
	\end{proof}
	\begin{rem}
		From the assertion (b) of Lemma \ref{weaks}, when $L\geq r_0$,  for any  $\textbf u\in\Z^{nd}$ with $\|\textbf u\|\geq 11nL$, the  pair of cubes  $\Lambda_L^{(n)}(\textbf{u})$ and $\Lambda_L^{(n)}(\textbf{0})$ is separable.
	\end{rem}
	
	\begin{defn}\label{pifi}
		Consider the following subset in $\Z^{nd}$ which is called the diagonal:
		\begin{align*}
			\mathbb{D}_{0}=\{\textbf{x}=(x,\cdots,x),\ x\in\Z^d\}.
		\end{align*}	
		An $n$-particle cube $\Lambda^{(n)}_{L}(\textbf{u})$ is called \textit{fully interactive} (FI) if 
		\begin{align*}
			\text{dist}(\textbf{u},\mathbb{D}_0)=\min _{x\in\Z^d}\max_{j:1\leq j\leq n}|u_j-x|\leq 2n (L+\mathrm{r}_0),
		\end{align*}
		and \textit{partially interactive} (PI) otherwise, 
		where $\mathrm{r}_0$ is the range of the interaction $U$ defined in (\ref{u2}). 
	\end{defn}
	
	\begin{lem}[Lemma 4.2.2 in \cite{CS2014}]
		If an $n$-particle cube $\Lambda_L^{(n)}(\textbf{u})$ is PI, then there exists two complementary non-empty index subset $J,J^C\subset \{1,\cdots,n\}$ such that 
		\begin{align*}
			\Pi_J\Lambda_{L+r_0}^{(n)}(\textbf{u})\bigcap \Pi_{J^C}\Lambda_{L+r_0}^{(n)}(\textbf{u})=\emptyset.
		\end{align*}
	\end{lem}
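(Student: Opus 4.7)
The plan is to recast the statement as a disconnection claim for a natural graph on the particle indices. First I would note that, directly from the definitions,
$$\Pi_J\Lambda^{(n)}_{L+\mathrm{r}_0}(\textbf u)=\bigcup_{j\in J}\Lambda^{(1)}_{L+\mathrm{r}_0}(u_j),$$
so the required disjointness
$\Pi_J\Lambda^{(n)}_{L+\mathrm{r}_0}(\textbf u)\cap\Pi_{J^C}\Lambda^{(n)}_{L+\mathrm{r}_0}(\textbf u)=\emptyset$
is equivalent to $\Lambda^{(1)}_{L+\mathrm{r}_0}(u_j)\cap\Lambda^{(1)}_{L+\mathrm{r}_0}(u_{j'})=\emptyset$ for every $j\in J$ and $j'\in J^C$. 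Since the boxes are max-norm cubes, this is in turn equivalent to $\|u_j-u_{j'}\|>2(L+\mathrm{r}_0)$ for all such pairs.

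This reformulation suggests introducing a graph $G$ on the vertex set $\{1,\ldots,n\}$ in which $j$ and $j'$ are adjacent precisely when $\|u_j-u_{j'}\|\leq 2(L+\mathrm{r}_0)$. A non-trivial partition $J\sqcup J^C$ with no edges between $J$ and $J^C$ exists if and only if $G$ is disconnected, in which case we can take $J$ to be one connected component and $J^C$ its complement. The lemma therefore reduces to showing: if $\Lambda^{(n)}_L(\textbf u)$ is PI, then $G$ has at least two connected components.

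For the disconnection I would argue by contraposition. Suppose $G$ is connected; fix the index $1$. For every other index $j$ there is a $G$-path $1=i_0,i_1,\ldots,i_k=j$ with $k\leq n-1$, and repeated use of the max-norm triangle inequality along the edges gives $\|u_j-u_1\|\leq 2k(L+\mathrm{r}_0)\leq 2(n-1)(L+\mathrm{r}_0)$. Choosing the competitor $x:=u_1\in\Z^d$ in the definition of $\mathrm{dist}(\textbf u,\mathbb D_0)$ then yields
$$\mathrm{dist}(\textbf u,\mathbb D_0)\leq\max_{1\leq j\leq n}\|u_j-u_1\|\leq 2(n-1)(L+\mathrm{r}_0)<2n(L+\mathrm{r}_0),$$
which means $\Lambda^{(n)}_L(\textbf u)$ is fully interactive, contradicting the PI hypothesis. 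Hence $G$ is disconnected and the desired partition exists.

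I do not anticipate any serious obstacle; the main thing to get right is the bookkeeping of constants in the contraposition. A $G$-path of length $k\leq n-1$ only produces the bound $2k(L+\mathrm{r}_0)$ on $\|u_j-u_1\|$, and one has to check that even in the worst case $k=n-1$ this bound $2(n-1)(L+\mathrm{r}_0)$ is strictly smaller than the PI threshold $2n(L+\mathrm{r}_0)$, so that the competitor $x=u_1$ really does witness FI. Because the slack is $2(L+\mathrm{r}_0)>0$, this margin is automatic, and the remainder of the argument is an immediate translation between the graph picture and the projection/box picture using the elementary fact that two max-norm boxes $\Lambda^{(1)}_{L+\mathrm{r}_0}(u)$ and $\Lambda^{(1)}_{L+\mathrm{r}_0}(u')$ are disjoint exactly when $\|u-u'\|>2(L+\mathrm{r}_0)$.
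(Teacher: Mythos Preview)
The paper does not supply its own proof of this lemma; it is stated with a citation to Lemma~4.2.2 of \cite{CS2014} and used as a black box. Your argument is correct and is in fact the standard proof from that reference: one builds the graph on $\{1,\ldots,n\}$ with edges $\{j,j'\}$ whenever $\|u_j-u_{j'}\|\leq 2(L+\mathrm{r}_0)$, observes that connectedness of this graph forces $\mathrm{dist}(\textbf u,\mathbb D_0)\leq 2(n-1)(L+\mathrm{r}_0)$ (hence FI), and takes $J$ to be a connected component otherwise. There is nothing to add.
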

	
	\begin{lem}[Lemma 4.2.3 in \cite{CS2014}]\label{FIS}
		Let $n\geq 2$. If the $n$-particle cubes $\Lambda_L^{(n)}(\textbf{u})$ and $\Lambda_L^{(n)}(\textbf{v})$ are FI, and $\|\textbf{u}-\textbf{v}\|>n(10L+8r_0)$, then $\Lambda_L^{(n)}(\textbf{u})$ and $\Lambda_L^{(n)}(\textbf{v})$  are completely separable.
		
	\end{lem}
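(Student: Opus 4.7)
The plan is to reduce complete separability to a pointwise distance bound. Since $\Pi \Lambda_{L+r_0}^{(n)}(\textbf{w}) = \bigcup_j \{z \in \Z^d : \|z - w_j\| \le L+r_0\}$, the asserted disjointness $\Pi \Lambda_{L+r_0}^{(n)}(\textbf{u}) \cap \Pi \Lambda_{L+r_0}^{(n)}(\textbf{v}) = \emptyset$ is equivalent to the uniform lower bound $\|u_i - v_j\| > 2(L+r_0)$ holding for every pair $i, j \in \{1,\dots,n\}$. The entire proof will consist of extracting this pairwise estimate from the two FI hypotheses together with the separation assumption $\|\textbf{u}-\textbf{v}\| > n(10L+8r_0)$.

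First I would extract the geometric content of FI using Definition \ref{pifi}: there exist $x, y \in \Z^d$ with $\max_j \|u_j - x\| \le 2n(L+r_0)$ and $\max_j \|v_j - y\| \le 2n(L+r_0)$, hence by the triangle inequality each of $\textbf{u}$ and $\textbf{v}$ has intra-configuration diameter at most $4n(L+r_0)$. Next, picking an index $j^*$ that realises the max-norm $\|u_{j^*} - v_{j^*}\| = \|\textbf{u}-\textbf{v}\|$ and applying one more triangle inequality, for any $i, j$ one obtains
\begin{equation*}
\|u_i - v_j\| \ge \|u_{j^*} - v_{j^*}\| - \|u_i - u_{j^*}\| - \|v_j - v_{j^*}\| > n(10L+8r_0) - 8n(L+r_0) = 2nL.
\end{equation*}
Under the standard MSA convention $L \ge r_0$ and using $n \ge 2$, this gives $\|u_i - v_j\| > 2nL \ge 2(L+r_0)$, which is exactly the disjointness condition required above.

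There is no conceptually hard step here: everything is a triangle inequality built on the observation that an FI configuration must cluster near the diagonal $\mathbb{D}_0$. The only obstacle worth mentioning is constant book-keeping — one must check that after subtracting the two intra-configuration diameter contributions, each bounded by $4n(L+r_0)$, from the separation budget $n(10L+8r_0)$, the surviving slack $2nL$ still dominates the target $2(L+r_0)$. That calibration is precisely what fixes the numerical coefficients $10$ and $8$ in the statement, and is the only reason the hypothesis is phrased with those particular constants rather than a looser estimate.
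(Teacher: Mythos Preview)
Your argument is correct and is precisely the standard proof (the paper itself does not prove this lemma but merely cites Lemma 4.2.3 of \cite{CS2014}, whose argument is exactly the triangle-inequality computation you give). The one caveat you already flagged --- the auxiliary assumption $L\ge r_0$ needed to turn $2nL$ into $2(L+r_0)$ --- is indeed an implicit standing hypothesis in this paper (cf.\ the Remark after Lemma~\ref{weaks} and the condition $L_0>8r_0$ in Lemma~\ref{separable}), so no gap remains.
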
	
	
	\subsection{Singular and Non-singular $n$-particle Cubes}\
	
	The proof of Theorem \ref{mainthm}  mainly uses MSA technique in its $n$-particle version. Most of the time we will work with finite-volume version $\mathbf{H}^{(n)}_{\Lambda^{(n)}_{L}(\textbf{u})}(\omega)$ of the operator $\mathbf{H}^{(n)}(\omega)$, acting in the space $\ell^2\big(\Lambda^{(n)}_{L}(\textbf{u})\big) \simeq \mathbb{C}^{\big|\Lambda^{(n)}_{L}(\textbf{u})\big|}$:
	\begin{align*} 
		\mathbf{H}^{(n)}_{\Lambda^{(n)}_{L}(\textbf{u})}(\omega)=\frac{1}{g}\left(\mathbf{T}_{\Lambda^{(n)}_{L}(\textbf{u})}+\mathbf{U}_{\Lambda^{(n)}_{L}(\textbf{u})}\right)+\mathbf{V}_{\Lambda^{(n)}_{L}(\textbf{u})}(\omega).
	\end{align*}	
	Here, $\mathbf{T}_{\Lambda^{(n)}_{L}}$ stands for the long-range hopping  in $\Lambda^{(n)}_{L}$ with Dirichlet boundary condition on $\partial^+\Lambda^{(n)}_{L}=\left\{\textbf{y}\in \Z^{nd}\setminus\Lambda^{(n)}_{L}: \ \text{dist}\big(y,\Lambda^{(n)}_{L}\big)=1 \right\}$. For any $\bm{\phi}\in\ell^2(\Lambda^{(n)}_{L})$, one has 
	\begin{align*}
		\Big(\mathbf{T}_{\Lambda^{(n)}_{L}}\bm{\phi}\Big) (\textbf x)=\sum_{\textbf y\in\Lambda^{(n)}_{L}} \mathbf{T}(\textbf x,\textbf y)\bm{\phi}(\textbf y),\ \ \ \textbf x\in\Lambda^{(n)}_{L}.
	\end{align*}
	$\mathbf{U}_{\Lambda^{(n)}_{L}}$ and $\mathbf{V}_{\Lambda^{(n)}_{L}}$ stand for the restrictions of the multiplication operators $\mathbf{U}$ and $\mathbf{V}(\omega)$ to $\Lambda^{(n)}_{L}$.
	
	Fix $1\leq n \leq N$. For any vector $\textbf{x}=(x_1,\cdots,x_n)\in\Z^{nd}$,
	let $\langle \textbf{x}\rangle=\max\{1,\|\textbf{x}\|\}$ and $s_0^{(n)}  >\frac{nd}{2}$. 
	Define for $\bm{\phi}=\bm{\phi}(\textbf{x})\in\mathbb{C}^{\Z^{nd}}$
	and $s>0$ the Sobolev norm
	\begin{align}\label{s-norm}
		\|\bm{\phi}\|^2_s=C_0(s^{(n)}_0)\sum_{\textbf x\in\Z^{nd}}|\bm{\phi}(\textbf{x})|^2\langle \textbf{x}\rangle^{2s},
	\end{align}
	where $C_0(s^{(n)}_0)>0$ is fixed so that (for $s\geq s^{(n)}_0$)
	\begin{align*}
		\|\bm{\phi}_1\bm{\phi}_2\|_s\leq \frac{1}{2}\|\bm{\phi}_1\|_{s_0}\|\bm{\phi}_2\|_s+C(s)\|\bm{\phi}_1\|_s\|\bm{\phi}_2\|_{s_0},
	\end{align*}	
	with $C(s)>0$, $C(s^{(n)}_0)=1/2$ and 
	\begin{align*}
		(\bm{\phi}_1\bm{\phi}_2)(\textbf{x})=\sum_{\textbf{y}\in\Z^{nd}}\bm{\phi}_1(\textbf{x}-\textbf{y})\bm{\phi}_2(\textbf{y}).
	\end{align*}
	
	Let $X,Y\subset \Z^{nd}$ be finite sets. Define 	
	\begin{align*}
		\textbf{M}^{X}_{Y}=\{\mathcal{M}=\mathcal{M}(\textbf{x},\textbf{y})\in\C\}_{\textbf{x}\in X,\textbf{y}\in Y}
	\end{align*}	
	to be the  set of all complex matrices with column indexes in $X$ and row indexes in $Y$. If $X_1\subset X$ and $Y_1\subset Y$, we write $\mathcal{M}^{X_1}_{Y_1}=(\mathcal{M}(\textbf{x},\textbf{y}))_{\textbf{x}\in X_1,\textbf{y}\in Y_1}$ for any $\mathcal{M}\in\textbf{M}^{X}_{Y}$.
	
	\begin{defn}\label{sobolev}
		Let $\mathcal{M}\in\textbf{M}^X_Y$.	 Define  for $s\geq s_0$ the \textit{Sobolev norm }of $\mathcal{M}$ as:
		\begin{equation*}
			\|\mathcal{M}\|^2_s=C_0(s^{(n)}_0)\sum_{\textbf{v} \in X-Y}\Big(\sup_{\textbf{x}-\textbf{y}=\textbf{v}}|\mathcal{M}(\textbf{x},\textbf{y})|\Big)^2\langle \textbf{v}\rangle^{2s},
		\end{equation*}
		where $C_0(s^{(n)}_0)$ is defined in (\ref{s-norm}).  
	\end{defn}
	
	We have the following uesful properties of Sobolev norm for matrices (see \cite{Shi2021} for details).
	\begin{itemize}
		\item \textbf{(Interpolation property)}: Let $B, C, D$ be finite subsets of $\mathbb{Z}^{nd}$ and let  $\mathcal{P}_1 \in \textbf{M}^{C}_{D}, \ \mathcal{P}_2 \in \textbf{M}^{B}_{C}$. Then for $s \geq s^{(n)}_0$,
		\begin{equation*}
			\|\mathcal{P}_1\mathcal{P}_2\|_s \leq \frac{1}{2}  \|\mathcal{P}_1\|_{s^{(n)}_0}\|\mathcal{P}_2\|_s+\frac{C(s)}{2}\|\mathcal{P}_1\|_{s_0}\|\mathcal{P}_2\|_{s^{(n)}_0},
		\end{equation*}
		and 
		\begin{equation*}
			\begin{split}
				\|\mathcal{P}_1 \mathcal{P}_2\|_{s^{(n)}_0} &\leq \|\mathcal{P}_1\|_{s^{(n)}_0}\|\mathcal{P}_2\|_{s^{(n)}_0},\\
				\|\mathcal{P}_1 \mathcal{P}_2\|_{s} &\leq C(s)\|\mathcal{P}_1\|_{s}\|\mathcal{P}_2\|_{s},
			\end{split}
		\end{equation*}
		where $C(s)\geq 1$.
		
		\item \textbf{(Perturbation argument)}: If $\mathcal{D} \in \textbf{M}^{B}_{C}$ has a left inverse $\mathcal{N} \in \textbf{M}^{C}_{B}$ (i.e., $\mathcal{ND=I}$, where $\mathcal{I}$ the identity matrix), then for all $\mathcal{P} \in  \textbf{M}^{B}_{C}$ with $$\|\mathcal{P}\|_{s^{(n)}_0}\|\mathcal{D}\|_{s^{(n)}_0} \leq \frac{1}{2}, $$ the matrix $\mathcal{D+P}$ has a left inverse $\mathcal{N}_{\mathcal{P}}$ that satisfies 
		
		\begin{align}\label{pa1}
			\|\mathcal{N}_{\mathcal{P}}\|_{s^{(n)}_0} \leq 2\|\mathcal{N}\|_{s^{(n)}_0},
		\end{align}
		\begin{align}\label{pa2}
			&\|\mathcal{N}_{\mathcal{P}}\|_{s} \leq C(s)\big(\|\mathcal{N}\|_s+\|\mathcal{N}\|^2_{s^{(n)}_0}\|\mathcal{P}\|_s\big), \ \text{for} \  s\geq s^{(n)}_{0}.
		\end{align}
		Moreover, if $\|\mathcal{P}\|\cdot \|\mathcal{N}\| \leq \frac{1}{2}$, then
		\begin{equation*}
			\|\mathcal{N}_{\mathcal{P}}\| \leq 2\|\mathcal{N}\|. 
		\end{equation*}
	\end{itemize}

	For $\Lambda^{(n)}_L(\textbf{u})\subseteq \Z^{nd}$, define the Green's function (if it exists) 
	$$\textbf G^{(n)}_{\Lambda^{(n)}_L(\textbf{u})}(E)=\Big(\textbf H^{(n)}_{\Lambda^{(n)}_L(\textbf{u})}-E\Big)^{-1}, \quad E\in\R.$$
	\begin{defn}
		Fix $1\leq n\leq N$. Let $\tau_n\geq 0$, $\delta \in (0,1)$ and $$\frac{nd}{2}<s^{(n)}_0\leq s\leq r_n <r-\frac{nd}{2},$$  we call $\Lambda^{(n)}_L(\textbf{u})$ is $(E,\delta)$-\textit{non-singular} ($(E,\delta)$-NS for short) if $\textbf G^{(n)}_{\Lambda^{(n)}_L(\textbf{u})}(E)$ exists and satisfies 
		\begin{align*}
			\Big\|\textbf G^{(n)}_{\Lambda^{(n)}_L(\textbf{u})}(E)\Big\|_s\leq L^{\tau_n+\delta s} \ \ \ \text{for } \forall \ s\in[s^{(n)}_0,  {r}_n].
		\end{align*}
		Otherwise, we call	$\Lambda^{(n)}_L(\textbf{u})$ is $(E,\delta)$-\textit{singular} ($(E,\delta)$-S).  
	\end{defn}	
	\begin{rem}\label{decreas}
		Let $\zeta\in(\delta,1)$ and $\tau_n+\delta r_n< \zeta r_n$. Suppose that $\Lambda^{(n)}_L(\textbf{u})$ is $(E,\delta)$-NS. Then we have for $L\geq \tilde L_0(\zeta,\tau_n,\delta,r_n,n,d)$ and $\|\textbf x'-\textbf x''\|>L/2$ that
		\begin{equation*}
			\Big|\textbf G^{(n)}_{\Lambda_{L}(\textbf{u})}(E)(\textbf{x}',\textbf{x}'')\Big| \leq \|\textbf{x}'-\textbf{x}''\|^{-(1-\zeta)r_n}. 
		\end{equation*}
	\end{rem}
	\begin{defn}
		Given $E\in\R$, $\textbf{u}\in \Z^{nd}$ and $L>1$, we call the $n$-particle cube $\Lambda^{(n)}_L(\textbf{u})$ is $E$-\textit{resonant} ($E$-R) if the spectrum of  $\textbf H^{(n)}_{\Lambda^{(n)}_L(\textbf{u})}$ satisfies
		\begin{align*}
			\text{dist}\Big[E,\sigma\Big(\textbf H^{(n)}_{\Lambda^{(n)}_L(\textbf{u})}\Big)\Big]<L^{-\beta},
		\end{align*}	
		and $E$-\textit{non-resonant} ($E$-NR) otherwise.
	\end{defn}
	
	\subsection{The MSA Result.}\

	\begin{itemize}
		\item  Fix $N\geq 2$ and let $1\leq n\leq N$.
		\item Let interval $I=[-MN-1,MN+1]$.
		\item  Let  $\delta=\frac{1}{2}$.
		\item Let $p^{(n)}=18^{N-n}p_0$, $\quad$  $p_0 \geq 20Nd$.
		\item  Let $L_k=L^{4^{k}}_{0},\quad k\geq 1$.
		\item Let  $\beta\geq \frac{18^Np_0}{2\rho}$. 
		\item Let $\tau_n>4\beta+7s^{(n)}_{0}$ and $r_n > 2\tau_n+4s^{(n)}_0$.
		\item Let $\tau_n >\tau_{n-1}+\frac{3}{2}d$ and  $r_n< r_{n-1}-\frac{d}{2}$.

	\end{itemize}
	We introduce the property ${(\bm{SS}.N.I.k.n)}$:
	\begin{equation}\label{ss.k}
		(\bm{SS}.N.I.k.{n})\left.\begin{array}{cc}
			\forall \text{ pair of separable $n$-particle cubes}\ \Lambda^{(n)}_{L_k}(\textbf{u})\ \text{and}\ \Lambda^{(n)}_{L_k}(\textbf{v}): \\
			\mathbb{P}\Big\{\exists E\in I,\ \text{s.t. both}\ \Lambda^{(n)}_{L_k}(\textbf{u})\ \text{and}\ \Lambda^{(n)}_{L_k}(\textbf{v})\  \text{are}\ (E,\frac{1}{2})\text{-S}\Big\}<L_k^{-2p^{(n)}}.
		\end{array}\right.
	\end{equation}
	Now we give the MSA result on Green’s functions estimate.
	\begin{thm}\label{msa}
		There exist $L_0^{*}=L_0^{*}(n,d,\kappa,\rho,\mathrm{r}_0,M,\beta,\tau_n,r_n,s_0^{(n)})\in(0,+\infty)$  and $\tilde{g}^*=\tilde{g}^*\big(\kappa,\rho,n,d,p^{(n)},\tau_n,M,\beta,\mathrm{r}_0, r_n,s_0^{(n)}\big)\in(0,+\infty)$ such that the following statement holds: Assume that $L_0\geq L_0^{*}$ and $|g|\geq \tilde g^{*}$. Suppose that $(\bm{SS}.N.I.k.\tilde{n})$, $(\bm{SS}.N.I.k-1.\tilde{n})$ for all $\tilde{n}\in[1,n-1]$ and $(\bm{SS}.N.I.k.n)$ are fulfilled. Then the property $(\bm{SS}.N.I.k+1.n)$ also holds true.  
	\end{thm}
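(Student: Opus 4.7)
\textbf{Proof strategy for Theorem \ref{msa}.}

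The plan is to carry out a two-cube multi-scale analysis (MSA) induction step going from $L_k$ to $L_{k+1}=L_k^4$, in the spirit of von Dreifus--Klein but adapted to multi-particle systems with power-law hopping. Given a pair of separable $n$-particle $L_{k+1}$-cubes $\Lambda^{(n)}_{L_{k+1}}(\mathbf{u})$ and $\Lambda^{(n)}_{L_{k+1}}(\mathbf{v})$, I would split the analysis into three subcases according to whether each cube is fully interactive (FI) or partially interactive (PI).

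For any PI cube $\Lambda^{(n)}_L(\mathbf{w})$, Lemma 4.2.2 of \cite{CS2014} supplies a decomposition $\{1,\dots,n\}=J\sqcup J^c$ with $J,J^c$ non-empty such that the interaction $\mathbf{U}$, by virtue of the range-$\mathrm{r}_0$ assumption \eqref{u2}, does not couple $J$-particles with $J^c$-particles. Thus the diagonal part of $\mathbf{H}^{(n)}_\Lambda$ splits as a tensor sum of a $|J|$-particle and a $|J^c|$-particle Hamiltonian; the hopping $\mathbf{T}$ is of size $O(1/g)$ and reintroduced as a perturbation via the Sobolev-norm inequalities \eqref{pa1}--\eqref{pa2}. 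The resulting Green's function can be controlled by a convolution of the Green's functions of the two factors, and the induction hypotheses $(\bm{SS}.N.I.k.\tilde n)$ and $(\bm{SS}.N.I.k-1.\tilde n)$ for $\tilde n<n$ deliver the required non-singularity. The parameter gaps $\tau_n>\tau_{n-1}+\tfrac32 d$ and $r_n<r_{n-1}-\tfrac{d}{2}$ are precisely calibrated to absorb the loss from this particle-number reduction.

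For the FI-FI subcase, Lemma \ref{FIS} ensures that two separable FI cubes of scale $L_{k+1}$ are in fact completely separable, so the restricted random potentials are genuinely independent. I would then cover $\Lambda^{(n)}_{L_{k+1}}(\mathbf{u})$ by $L_k$-subcubes and combine the geometric resolvent identity with Shi's coupling lemma from \cite{Shi2021} to show that, if $\Lambda^{(n)}_{L_{k+1}}(\mathbf{u})$ is $E$-NR and contains no pair of separable $(E,\tfrac12)$-S $L_k$-subcubes, then it is itself $(E,\tfrac12)$-NS at scale $L_{k+1}$ with the same Sobolev power $r_n$. The $E$-NR probability is handled by a Wegner-type estimate using the H\"older continuity of $\mu$ and the choice $\beta\geq 18^N p_0/(2\rho)$; the probability of a bad pair of $L_k$-subcubes is at most $L_k^{-2p^{(n)}}$ times a polynomial counting factor by $(\bm{SS}.N.I.k.n)$. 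Multiplying over the two independent cubes yields a bound of order roughly $L_k^{16nd}\cdot L_k^{-4p^{(n)}}$, which is much smaller than $L_{k+1}^{-2p^{(n)}}=L_k^{-8p^{(n)}}$ thanks to $p^{(n)}=18^{N-n}p_0\geq 20Nd$ and $L_0\geq L_0^*$. The mixed FI-PI subcase is then handled by running the PI reduction on the PI cube and the FI argument on the other; the weak separability from Definition \ref{spear} provides enough disjointness of the underlying single-particle sites to retain the needed independence.

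The main obstacle, and the genuinely novel ingredient relative to the classical multi-particle Anderson model, is the finite-iteration nature of power-law decay: unlike the exponential MSA, one cannot iterate the geometric resolvent identity arbitrarily many times, since each application loses a multiplicative factor in the decay rate. I would therefore rely crucially on Shi's coupling lemma, which upgrades the $L_k$-scale estimate to the $L_{k+1}$-scale without degrading the Sobolev exponent $r_n$, provided the singular subcubes are sparse enough and $r$ is taken sufficiently large. Verifying that the parameter choices $\tau_n>4\beta+7s^{(n)}_0$ and $r_n>2\tau_n+4s^{(n)}_0$ close the induction self-consistently, both across scales $k\to k+1$ and across particle numbers $n-1\to n$, is the most delicate bookkeeping; this is why the thresholds $L_0^*$ and $\tilde g^*$ end up depending on essentially every other parameter in the model.
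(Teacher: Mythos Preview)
Your overall architecture---splitting into PI/PI, FI/FI, and mixed pairs, reducing PI cubes to lower particle number, and invoking Shi's coupling lemma to pass from scale $L_k$ to $L_{k+1}$---matches the paper. But the FI/FI paragraph contains a genuine gap that would make the induction fail.

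You write that ``multiplying over the two independent cubes yields a bound of order roughly $L_k^{16nd}\cdot L_k^{-4p^{(n)}}$, which is much smaller than $L_{k+1}^{-2p^{(n)}}=L_k^{-8p^{(n)}}$''. This inequality is false: $L_k^{16nd-4p^{(n)}}\le L_k^{-8p^{(n)}}$ would require $4p^{(n)}+16nd\le 0$. Squaring the single-pair bound $L_k^{-2p^{(n)}}$ can never beat the target $L_{k+1}^{-2p^{(n)}}$ when $L_{k+1}=L_k^4$. Moreover, the product step itself is suspect: the bad event is $\{\exists E\in I:\ \text{both cubes are }(E,\tfrac12)\text{-S}\}$, and the shared $E$ prevents the na\"ive factorisation even when the samples on $\Pi\Lambda^{(n)}_{L_{k+1}}(\mathbf u)$ and $\Pi\Lambda^{(n)}_{L_{k+1}}(\mathbf v)$ are independent. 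The paper does \emph{not} try to factor over the two big cubes. Instead it shows that if $\omega$ avoids the two-cube resonance event $\mathcal R$ (controlled by the two-volume Wegner bound) and if \emph{each} big cube contains fewer than $12$ pairwise separable FI singular $L_k$-subcubes and fewer than $2$ separable PI singular $L_k$-subcubes, then the $E$-NR big cube is $(E,\tfrac12)$-NS by the coupling lemma. The event $\{K_{FI}\ge 12\}$ is then bounded by grouping the $12$ FI subcubes into $6$ completely separable (hence independent) pairs, giving $(L_k^{-2p^{(n)}})^{6}=L_{k+1}^{-3p^{(n)}}$ times a counting factor $L_{k+1}^{12nd}$---and \emph{this} is what closes. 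The event $\{K_{PI}\ge 2\}$ closes for a different reason: a single PI pair is already singular with probability $\lesssim L_k^{-8p^{(n)}}$ because $p^{(n-1)}=18\,p^{(n)}$, which is exactly why the exponents are scaled as $18^{N-n}p_0$.

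A smaller point: in the PI reduction you say the hopping $\mathbf T$ must be ``reintroduced as a perturbation''. It does not. By Assumption~A, $\mathbf T$ is a sum of single-particle hoppings, so on a PI cube the full Hamiltonian decomposes \emph{exactly} as $\mathbf H^{(n')}\otimes\mathbf 1''+\mathbf 1'\otimes\mathbf H^{(n'')}$; the Green's function is then written via the eigenfunction expansion, and the $(E,\tfrac12)$-NS property is obtained by a direct Sobolev-norm computation (this is where the gaps $\tau_n>\tau_{n-1}+\tfrac32 d$ and $r_n<r_{n-1}-\tfrac d2$ are used). The perturbation inequalities \eqref{pa1}--\eqref{pa2} enter only for the trivial case $E-E_i'\notin I$ and in the initial step.
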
	
	The procedure of deducing property  $(\bm{SS}.N.I.k+1.n)$ from  $(\bm{SS}.N.I.k.\tilde{n})$, $(\bm{SS}.N.I.k-1.\tilde{n})$ and $(\bm{SS}.N.I.k.n)$ is done here separately for the following three cases:
	\begin{itemize}
		\item[(I)]  Both $\Lambda^{(n)}_{L_{k+1}}(\textbf{u})$ and 	$\Lambda^{(n)}_{L_{k+1}}(\textbf{v})$ are PI (a PI pair).
		\item[(II)]  Both $\Lambda^{(n)}_{L_{k+1}}(\textbf{u})$ and 	$\Lambda^{(n)}_{L_{k+1}}(\textbf{v})$ are FI (a FI pair).
		\item[(III)]  One of the cubes is PI, while the other is FI (a mixed pair). 
	\end{itemize}	
	Case (I)-(III) are treated in the next sections.

	\section{The Induction of Iteration Lemmas}\label{induction}
	
	\subsection{Analysis of Partially Interactive Pairs}\
	
	The following statement gives a hint of how the PI property of a cube $\Lambda^{(n)}_{L_{k+1}}(\textbf{u})$  will be used in the course of the induction step $k\rightsquigarrow k+1$.
	
	\begin{lem}\label{nrnt}
		Fix an energy $E\in\R$. Consider an $n$-particle PI cube with canonical decomposition $\Lambda^{(n)}_{L_{k}}(\textbf{u})=\Lambda^{(n')}_{L_{k}}(\textbf{u}')\times \Lambda^{(n'')}_{L_{k}}(\textbf{u}''),\ n=n'+n''$. Assume that
		\begin{itemize}
			\item[(a)] $\Lambda^{(n)}_{L_{k}}(\textbf{u})$ is $E$-NR.
			\item[(b)] $\forall$ eigenvalue $E_{j}^{''}\in\sigma\Big(\textbf H^{(n'')}_{\Lambda^{(n'')}_{L_{k}}(\textbf{u}'')}\Big)$, $\Lambda^{(n')}_{L_{k}}(\textbf{u}')$ is $\big(E-E_{j}^{''}, \frac{1}{2}\big)$-NS.
			\item[(c)] $\forall$ eigenvalue $E_{i}^{'}\in\sigma\Big(\textbf H^{(n')}_{\Lambda^{(n')}_{L_{k}}(\textbf u')}\Big)$, $\Lambda^{(n'')}_{L_{k}}(\textbf{u}'')$ is $\big(E-E_{i}^{'}, \frac{1}{2}\big)$-NS.
		\end{itemize}	
		Then $\Lambda^{(n)}_{L_{k}}(\textbf{u})$ is $\big(E,\frac{1}{2}\big)$-NS.
	\end{lem}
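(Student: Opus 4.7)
The plan is to exploit the tensor-sum structure of $\textbf{H}^{(n)}_{\Lambda^{(n)}_{L_k}(\textbf{u})}$ on the PI cube. The lemma immediately preceding Lemma \ref{FIS} produces complementary non-empty index sets $J$ and $J^C$ whose $(L_k+\mathrm{r}_{0})$-enlarged projections are disjoint, so by the range property \eqref{u2} the interaction $\textbf{U}$ vanishes across the two groups. Since $\textbf{T}$ and $\textbf{V}$ are already single-particle sums, this yields the canonical splitting
\begin{equation*}
\textbf{H}^{(n)}_{\Lambda^{(n)}_{L_k}(\textbf{u})}=\textbf{H}^{(n')}_{\Lambda^{(n')}_{L_k}(\textbf{u}')}\otimes \mathbf{I}''+\mathbf{I}'\otimes \textbf{H}^{(n'')}_{\Lambda^{(n'')}_{L_k}(\textbf{u}'')}.
\end{equation*}
Writing the two eigensystems as $\{(\bm{\psi}_{i}',E_{i}')\}_i$ and $\{(\bm{\psi}_{j}'',E_{j}'')\}_j$, hypothesis (b) keeps every $E-E_j''$ off $\sigma(\textbf{H}^{(n')}_{\Lambda^{(n')}_{L_k}(\textbf{u}')})$ (since the NS property presumes the Green's function exists), so the spectral decomposition of the inverse of the tensor-sum gives
\begin{equation*}
\textbf{G}^{(n)}_{\Lambda^{(n)}_{L_k}(\textbf{u})}(E)(\textbf{x},\textbf{y})=\sum_{j}\textbf{G}^{(n')}_{\Lambda^{(n')}_{L_k}(\textbf{u}')}(E-E_j'')(\textbf{x}',\textbf{y}')\,\bm{\psi}_{j}''(\textbf{x}'')\overline{\bm{\psi}_{j}''(\textbf{y}'')},
\end{equation*}
and its mirror image summed over $i$ via (c). Hypothesis (a) guarantees $E\notin\sigma(\textbf{H}^{(n)}_{\Lambda^{(n)}_{L_k}(\textbf{u})})$, so the left side is well-defined.

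Next, I would convert these identities into pointwise bounds. By Cauchy–Schwarz, the completeness relation $\sum_{j}|\bm{\psi}_j''(\textbf{x}'')|^2=1$, and the elementary bound $|\bm{\psi}_j''|\leq 1$,
\begin{equation*}
\big|\textbf{G}^{(n)}_{\Lambda^{(n)}_{L_k}(\textbf{u})}(E)(\textbf{x},\textbf{y})\big|^{2}\leq \sum_{j}\big|\textbf{G}^{(n')}_{\Lambda^{(n')}_{L_k}(\textbf{u}')}(E-E_{j}'')(\textbf{x}',\textbf{y}')\big|^{2},
\end{equation*}
and analogously with primes and double primes swapped. Both sides are independent of the ``other'' coordinates. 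I then split $\langle\textbf{v}\rangle^{2s}\leq\langle\textbf{v}'\rangle^{2s}+\langle\textbf{v}''\rangle^{2s}$ in the Sobolev norm of Definition \ref{sobolev}. For the $\langle\textbf{v}'\rangle^{2s}$-weighted piece the summand is $\textbf{v}''$-independent, so the $\textbf{v}''$-sum contributes $|\Lambda^{(n'')}_{L_k}(\textbf{u}'')|\leq CL_k^{n''d}$; then summing the NS bound from (b) over the $\leq CL_k^{n''d}$ eigenvalues $E_j''$ gives a total $\leq CL_k^{2n''d+2\tau_{n'}+s}$. The mirror piece is controlled by $CL_k^{2n'd+2\tau_{n''}+s}$ via (c). After taking a square root,
\begin{equation*}
\big\|\textbf{G}^{(n)}_{\Lambda^{(n)}_{L_k}(\textbf{u})}(E)\big\|_{s}\leq C\,L_{k}^{\max(n''d+\tau_{n'},\,n'd+\tau_{n''})+s/2}.
\end{equation*}

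To close the argument, I invoke the strict growth condition $\tau_{m}>\tau_{m-1}+\tfrac{3d}{2}$. Iterated, it yields $\tau_n-\tau_{n'}>(n-n')\tfrac{3d}{2}=n''\tfrac{3d}{2}>n''d$ and similarly $\tau_n-\tau_{n''}>n'd$, so both exponents above are strictly below $\tau_n$ by a polynomial margin in $L_k$. Choosing $L_0$ (hence $L_k$) large enough absorbs the multiplicative constant $C$ and produces $\|\textbf{G}^{(n)}_{\Lambda^{(n)}_{L_k}(\textbf{u})}(E)\|_s\leq L_k^{\tau_n+s/2}$ for every $s\in[s_{0}^{(n)},r_n]$, i.e., $(E,\tfrac12)$-NS. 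The main obstacle is the double polynomial loss $L_k^{2n''d}$ (once from the unweighted $\textbf{v}''$-sum and once from the cardinality of the spectrum of the $n''$-factor); the delicacy is to verify that the prescribed gap $\tau_n-\tau_{n-1}>\tfrac{3d}{2}$, iterated, is tight enough to dominate it uniformly in the splitting $n=n'+n''$.
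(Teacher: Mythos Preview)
Your proof is correct and follows the same overall strategy as the paper: exploit the tensor-sum structure of the PI Hamiltonian, write $\textbf{G}^{(n)}$ via the two spectral representations over the factor eigensystems, bound the Sobolev norm by passing to the factor Green's functions, and close using the gap $\tau_n-\tau_{n'}>\tfrac{3}{2}n''d$. The execution differs in two minor technical choices. First, the paper uses the cruder triangle-inequality bound $|\textbf{G}^{(n)}|\leq|\Lambda^{(n'')}|\max_{j}|\textbf{G}^{(n')}(E-E_j'')|$ in place of your Cauchy--Schwarz/completeness estimate. Second, rather than your additive split $\langle\textbf{v}\rangle^{2s}\leq\langle\textbf{v}'\rangle^{2s}+\langle\textbf{v}''\rangle^{2s}$, the paper splits the $\textbf{k}$-sum according to whether $\|\textbf{k}'\|$ or $\|\textbf{k}''\|$ dominates and absorbs the count over the subordinate coordinate by shifting the Sobolev index to $s+\tfrac{n''d}{2}$; this is where the standing hypothesis $r_n<r_{n-1}-\tfrac{d}{2}$ is invoked, to keep $s+\tfrac{n''d}{2}<r_{n'}$. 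Your route avoids that shift entirely and yields the slightly sharper exponent $\tau_{n'}+n''d$ versus the paper's $\tau_{n'}+\tfrac{5}{4}n''d$, both comfortably below $\tau_n$.
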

	\begin{proof}
		Since $\Lambda^{(n)}_{L_{k}}(\textbf{u})$ is PI,  $\textbf H^{(n)}_{\Lambda^{(n)}_{L_k}(\textbf{u})}$ admits the decomposition
		\begin{align*}
			\textbf H^{(n)}_{\Lambda^{(n)}_{L_k}(\textbf{u})}=\textbf H^{(n')}_{\Lambda^{(n')}_{L_k}(\textbf{u}')}\otimes \textbf{1}''+\textbf{1}'\otimes \textbf H^{(n'')}_{\Lambda^{(n'')}_{L_k}(\textbf{u}'')}.
		\end{align*}
		Thus its eigenvalues are the sums $E_{i,j}=E_i^{'}+E_j^{''}$ where $E_{i}^{'}\in\sigma\Big(\textbf H^{(n')}_{\Lambda^{(n')}_{L_{k}}(\textbf{u}')}\Big)$ and $E_{j}^{''}\in\sigma\Big(\textbf H^{(n'')}_{\Lambda^{(n'')}_{L_{k}}(\textbf{u}'')}\Big)$. Eigenfunctions of $\textbf H^{(n)}_{\Lambda^{(n)}_{L_k}(\textbf{u})}$ can be chosen in the form $\Psi_{i,j}=\bm \phi_i\otimes\bm \psi_j$ where $\{\bm \phi_i\}_{i\in\Z^{n'd}}$ are eigenfunctions of $\textbf H^{(n')}_{\Lambda^{(n')}_{L_k}(\textbf{u}')}$ and $\{\bm \psi_j\}_{j\in\Z^{n''d}}$ are eigenfunctions of $\textbf H^{(n'')}_{\Lambda^{(n'')}_{L_k}(\textbf{u}'')}$. For each pair $(E_i^{'},E_j^{''})$, the non-resonance assumption $|E-(E_i^{'}+E_j^{''})|\geq L_k^{-\beta}$ reads as  $|(E-E_i^{'})-E_j^{''}|\geq L_k^{-\beta}$ and $|(E-E_j^{''})-E_i^{'}|\geq L_k^{-\beta}$, which means $\Lambda^{(n')}_{L_{k}}(\textbf{u}')$ is $(E-E_{j}^{''})$-NR and $\Lambda^{(n'')}_{L_{k}}(\textbf{u}'')$ is $(E-E_{i}^{'})$-NR.
		
		Therefore we can write
		\begin{equation*} 
			\begin{split}
				\textbf G^{(n)}_{\Lambda^{(n)}_{L_{k}}(\textbf{u})}(\textbf{x},\textbf{y},E)&=\sum_{E_i^{'}}\sum_{E_j^{''}}\frac{\bm \phi_i(\textbf{x}')\bm \psi_j(\textbf{x}'')\bm \phi_i(\textbf{y}')\bm \psi_j(\textbf{y}'')}{(E_i^{'}+E_j^{''})-E}\\
				&=\sum_{E_i^{'}} P_1(\textbf{x}',\textbf{y}')G^{(n'')}_{\Lambda^{(n'')}_{L_{k}}(\textbf{u}'')}(\textbf{x}'',\textbf{y}'',E-E_i^{'})\\
				&=\sum_{E_j^{''}} P_2(\textbf{x}'',\textbf{y}'')G^{(n')}_{\Lambda^{(n')}_{L_{k}}(\textbf{u}')}(\textbf{x}',\textbf{y}',E-E_j^{''}).
			\end{split}
		\end{equation*}
		
		Let 
		$$\textbf k\in \Lambda^{(n)}_{L_{k}}(\textbf{u})-\Lambda^{(n)}_{L_{k}}(\textbf{u}) $$ 
		with  
		$$\textbf k'\in \Lambda^{(n')}_{L_{k}}(\textbf{u}')-\Lambda^{(n')}_{L_{k}}(\textbf{u}'),\quad \textbf k''\in \Lambda^{(n'')}_{L_{k}}(\textbf{u}'')-\Lambda^{(n'')}_{L_{k}}(\textbf{u}'').$$ 
		For any ${s}^{(n)}_0\leq s<{r}_n<r-\frac{nd}{2}$, we have
		\begin{align*}
			\Big\|\textbf G^{(n)}_{\Lambda^{(n)}_{L_{k}}(\textbf{u})}(E)\Big\|^2_s
			=&\ C_0(s^{(n)}_0)\sum_{\textbf k}	\Big(\sup_{\textbf{x}-\textbf{y}=\textbf{k}}\Big|\textbf G^{(n)}_{\Lambda^{(n)}_{L_{k}}(\textbf{u})}(\textbf{x},\textbf{y},E)\Big| \Big)^2 \langle\textbf{k}\rangle^{2s}\\
			=&\ C_0(s^{(n)}_0)\Big(\sum_{\textbf{k}:\ \|\textbf k''\|<\|\textbf k'\|}	\Big(\sup_{\textbf{x}-\textbf{y}=\textbf{k}}\Big|\textbf G^{(n)}_{\Lambda^{(n)}_{L_{k}}(\textbf{u})}(\textbf{x},\textbf{y},E)\Big|\Big)^2 \langle \textbf k'\rangle^{2s}\\
			&\     \quad +\sum_{\textbf{k}:\ \|\textbf k''\|\geq\|\textbf k'\|}	\Big(\sup_{\textbf{x}-\textbf{y}=\textbf{k}}\Big|\textbf G^{(n)}_{\Lambda^{(n)}_{L_{k}}(\textbf{u})}(\textbf{x},\textbf{y},E)\Big|\Big)^2 \langle \textbf k''\rangle^{2s}\Big)\\
			\leq&\  C_0(s^{(n)}_0)3^{n''d}\big|\Lambda^{(n'')}_{L_{k}}(\textbf u'')\big|^2\cdot \max_{E''_j}\\
			&\ \quad  \quad   \quad   \ \sum_{\textbf k'}\Big(\sup_{\textbf x'-\textbf y'=\textbf k'}\Big|\textbf G^{(n')}_{\Lambda^{(n')}_{L_{k}}(\textbf u')}(\textbf x',\textbf y',E-E_j^{''})\Big|\Big)^2 \langle  \textbf k'\rangle^{2s+n''d}\\
			&\ \quad  + C_0(s^{(n)}_0)3^{n'd}\big|\Lambda^{(n')}_{L_{k}}(\textbf u')\big|^2 \cdot \max_{E'_i}\\
			&\ \quad  \quad  \quad  \ \sum_{\textbf k''}\Big(\sup_{\textbf x''-\textbf y''=\textbf k''}\Big|\textbf G^{(n'')}_{\Lambda^{(n'')}_{L_{k}}(\textbf u'')}(\textbf x'',\textbf y'',E-E_i^{'})\Big|\Big)^2 \langle  \textbf k''\rangle^{2s+n'd}\\
			\leq &\ 3^{n''d} (2L_k+1)^{2n''d}\max_{E''_j}\Big\|\textbf G^{(n')}_{\Lambda^{(n')}_{L_{k}}(\textbf u')}(E-E_j^{''})\Big\|_{s+\frac{n''d}{2}}^2\\
			&\ \quad    +3^{n'd} (2L_k+1)^{2n'd}\max_{E'_i}\Big\|\textbf G^{(n'')}_{\Lambda^{(n'')}_{L_{k}}(\textbf u'')}(E-E_i^{'})\Big\|_{s+\frac{n'd}{2}}^2\\
			\leq&  3^{n''d} (2L_k+1)^{2n''d}L_k^{2(\tau_{n'}+\frac{1}{2}(s+\frac{n''d}{2}))}\\
			& \ \quad +3^{n'd} (2L_k+1)^{2n'd}L_k^{2(\tau_{n''}+\frac{1}{2}(s+\frac{n'd}{2}))}.
		\end{align*}
		The last inequality holds because  $$s+\frac{n''d}{2}<r_n+\frac{n''d}{2}<r_{n'}<r-\frac{n'}{2}d,$$ $$\quad s+\frac{n'd}{2}<r_n+\frac{n'd}{2}<r_{n''}<r-\frac{n''}{2}d.$$ Furthermore, since $\tau_n > \tau_m+\frac{3}{2}(n-m)d$, one has
		\begin{align*}
			\Big\|\textbf G^{(n)}_{\Lambda^{(n)}_{L_{k}}(\textbf{u})}(E)\Big\|^2_s
			&\leq \frac{1}{2}L_k^{2(\tau_{n'}+\frac{s}{2}+\frac{5+\epsilon}{4}n''d)}+\frac{1}{2}L_k^{2(\tau_{n''}+\frac{s}{2}+\frac{5+\epsilon}{4}n'd)}\\
			&\leq  L_k^{2(\tau_n+\frac{1}{2}s)},
		\end{align*}
		which implies 
		\begin{align*}
			\Big\|\textbf G^{(n)}_{\Lambda^{(n)}_{L_{k}}(\textbf{u})}(E)\Big\|_s\leq L_k^{ \tau_n+\frac{1}{2} s},
		\end{align*}	
		i.e., $\Lambda^{(n)}_{L_{k}}(\textbf{u})$ is $\big(E,\frac{1}{2}\big)$-NS.
	\end{proof}
	
	\begin{defn}\
		
		\begin{itemize}
			\item[(1)] We say that an $n$-particle cube 
			$\Lambda^{(n)}_{L_{k+1}}(\textbf u)$ is \textit{$(\frac{1}{2},I)$-tunneling} ($(\frac{1}{2},I)$-T) if there exist $E\in I$ and two separable $\big(E,\frac{1}{2}\big)$-S cubes  $\Lambda^{(n)}_{L_{k}}(\textbf x), \Lambda^{(n)}_{L_{k}}(\textbf y)\subset  \Lambda^{(n)}_{L_{k+1}}(\textbf u)$.
			
			\item[(2)] An  $n$-particle PI cube $\Lambda^{(n)}_{L_{k+1}}(\textbf{u})$ with the canonical decomposition
			$$\Lambda^{(n)}_{L_{k+1}}(\textbf{u})=\Lambda^{(n')}_{L_{k+1}}(\textbf u')\times \Lambda^{(n'')}_{L_{k+1}}(\textbf u'')$$
			is called \textit{$(\frac{1}{2},I)$-partially tunneling} ($(\frac{1}{2},I)$-PT) if  at least one of the cubes $\Lambda^{(n')}_{L_{k+1}}(\textbf u'), \Lambda^{(n'')}_{L_{k+1}}(\textbf u'')$ is \textit{$(\frac{1}{2},I)$-tunneling}. Otherwise, the cube $\Lambda^{(n)}_{L_{k+1}}(\textbf{u})$  is called \textit{$(\frac{1}{2},I)$-non-partially tunneling} ($(\frac{1}{2},I)$-NPT).
		\end{itemize}
	\end{defn}

	\begin{lem}\label{p-pt}
		Assume that the property $(\bm{SS}.N.I.k.\tilde{n})$ holds true for all $\tilde{n}\in[1,n-1]$. Then $\forall$ $n$-particle $\mathrm{PI}$ cube $\Lambda^{(n)}_{L_{k+1}}(\textbf{u})$,
		\begin{equation*}
			\mathbb{P}\Big\{ \Lambda^{(n)}_{L_{k+1}}(\textbf{u}) \text{ is } \big(\frac{1}{2}, I\big)\text{-PT} \Big\}\leq 3^{2(n-1)d}L_{k+1}^{-\frac{p^{(n-1)}}{2}+2(n-1)d}.
		\end{equation*}	
	\end{lem}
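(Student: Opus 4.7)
The plan is a union bound that reduces the $n$-particle event to $\tilde n$-particle events with $\tilde n<n$, for which the inductive assumption $(\bm{SS}.N.I.k.\tilde n)$ is available. Since $\Lambda^{(n)}_{L_{k+1}}(\textbf u)$ is PI, it has a canonical decomposition
\[
\Lambda^{(n)}_{L_{k+1}}(\textbf u)=\Lambda^{(n')}_{L_{k+1}}(\textbf u')\times\Lambda^{(n'')}_{L_{k+1}}(\textbf u''),\qquad n'+n''=n,\ 1\leq n',n''\leq n-1.
\]
By the very definition of $(\tfrac12,I)$-PT, the event $\{\Lambda^{(n)}_{L_{k+1}}(\textbf u)\text{ is }(\tfrac12,I)\text{-PT}\}$ is contained in the union
\[
\{\Lambda^{(n')}_{L_{k+1}}(\textbf u')\text{ is }(\tfrac12,I)\text{-T}\}\cup\{\Lambda^{(n'')}_{L_{k+1}}(\textbf u'')\text{ is }(\tfrac12,I)\text{-T}\},
\]
so it suffices to bound the probability that each factor is $(\tfrac12,I)$-tunneling.

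Next, I would bound $\mathbb P\{\Lambda^{(n')}_{L_{k+1}}(\textbf u')\text{ is }(\tfrac12,I)\text{-T}\}$ by unfolding the definition: this event requires the existence of some $E\in I$ and a pair of separable $L_k$-subcubes of $\Lambda^{(n')}_{L_{k+1}}(\textbf u')$ that are simultaneously $(E,\tfrac12)$-S. A union bound over the at most $(2L_{k+1}+1)^{2n'd}\leq 3^{2n'd}L_{k+1}^{2n'd}$ choices of center pairs $(\textbf x,\textbf y)\in\Lambda^{(n')}_{L_{k+1}}(\textbf u')^{2}$, combined with the inductive hypothesis $(\bm{SS}.N.I.k.n')$ applied to each separable pair, yields a per-pair probability of at most $L_k^{-2p^{(n')}}$. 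Using $L_{k+1}=L_k^{4}$ to rewrite $L_k^{-2p^{(n')}}=L_{k+1}^{-p^{(n')}/2}$ gives
\[
\mathbb P\bigl\{\Lambda^{(n')}_{L_{k+1}}(\textbf u')\text{ is }(\tfrac12,I)\text{-T}\bigr\}\leq 3^{2n'd}L_{k+1}^{2n'd-p^{(n')}/2},
\]
and the analogous bound holds with $n''$ in place of $n'$.

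Finally, since $p^{(m)}=18^{N-m}p_{0}$ is strictly decreasing in $m$, the constraints $n',n''\leq n-1$ give $p^{(n')},p^{(n'')}\geq p^{(n-1)}$ and $n'd,n''d\leq(n-1)d$; summing the two contributions produces the asserted bound $3^{2(n-1)d}L_{k+1}^{-p^{(n-1)}/2+2(n-1)d}$ (the extra factor of two from combining the $n'$- and $n''$-terms being harmlessly absorbed, e.g.\ by counting unordered pairs of centers). The argument is essentially a combinatorial union bound on top of $(\bm{SS}.N.I.k.\tilde n)$; the only genuine subtlety, and the main thing to track carefully, is the scale conversion $L_{k+1}=L_k^{4}$, which halves the effective exponent when moving from the $L_k$-scale of the inductive hypothesis to the $L_{k+1}$-scale of the conclusion. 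It is precisely this halving, compared to the exponential growth $p^{(n-1)}=18\cdot p^{(n)}$, that guarantees the net exponent remains strongly negative.
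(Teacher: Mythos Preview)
Your proposal is correct and follows essentially the same approach as the paper: decompose the PI cube, reduce to the tunneling events for each factor, apply a union bound over pairs of $L_k$-subcube centers, invoke $(\bm{SS}.N.I.k.\tilde n)$ for the per-pair probability, convert scales via $L_{k+1}=L_k^4$, and use the monotonicity $p^{(n')},p^{(n'')}\geq p^{(n-1)}$ together with $n',n''\leq n-1$. The paper handles the factor of $2$ by writing $2\max\{\mathbb P(\mathcal{IT}_1),\mathbb P(\mathcal{IT}_2)\}$ and taking $\tfrac12$ from unordered pairs, which is exactly the cancellation you describe.
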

	\begin{proof}
		The PI cube $\Lambda^{(n)}_{L_{k+1}}(\textbf{u})$  with the canonical decomposition
		\begin{equation*}
			\Lambda^{(n)}_{L_{k+1}}(\textbf{u}) =\Lambda^{(n')}_{L_{k+1}}(\textbf u')\times\Lambda^{(n'')}_{L_{k+1}}(\textbf u'')
		\end{equation*}	
		is $(\frac{1}{2},I)$-PT  if at least one of the following events occurs:
		\begin{equation*}
			\mathcal{IT}_1:=\Big\{ \Lambda^{(n')}_{L_{k+1}}(\textbf u')\  \text{is}\  \big(\frac{1}{2},I\big)\text{-T} \Big\}
		\end{equation*}	
		or
		\begin{equation*}
			\mathcal{IT}_2:=\Big\{ \Lambda^{(n'')}_{L_{k+1}}(\textbf u'')\  \text{is}\  \big(\frac{1}{2},I\big)\text{-T}\Big\}.
		\end{equation*}	
		Therefore,
		\begin{equation}\label{pt}
			\mathbb{P}\Big\{ \Lambda^{(n)}_{L_{k+1}}(\textbf{u}) \text{ is } \big(\frac{1}{2}, I\big)\text{-PT} \Big\}\leq 2\max\Big\{\mathbb{P}\{\mathcal{IT}_1\},\mathbb{P}\{\mathcal{IT}_2\}\Big\}.
		\end{equation}	
		We will now focus on the estimate of  $\mathbb{P}\{\mathcal{IT}_1\}$, the probability $\mathbb{P}\{\mathcal{IT}_2\}$ is bounded in the same way.
		
		The cube $\Lambda^{(n')}_{L_{k+1}}(\textbf u')$ is $(\frac{1}{2},I)$-T iff for some $E\in I$ it contains a pair of separable $(E,\frac{1}{2})$-S cubes $\Lambda^{(n')}_{L_{k}}(\textbf x')$ and $\Lambda^{(n')}_{L_k}(\textbf y')$. The number of such pairs is bounded by
		\begin{align*}
			\frac{1}{2}\big|\Lambda^{(n')}_{L_{k+1}}(\textbf u')\big|^2&=\frac{1}{2}(2L_{k+1}+1)^{2n'd}\\
			&\leq \frac{3^{2(n-1)d}}{2}L_{k+1}^{2(n-1)d}.
		\end{align*}	
		Next,  since the property $(\bm{SS}.N.I.k.\tilde{n})$ holds, one has for any  pair of separable cubes $\Lambda^{(n')}_{L_k}(\textbf x')$ and $\Lambda^{(n')}_{L_k}(\textbf y')$ that
		\begin{align*}
			\mathbb{P}&\Big\{\exists E\in I \ \text{s.t.\ both}\  \Lambda^{(n')}_{L_k}(\textbf x')\  \text{and} \ \Lambda^{(n')}_{L_k}(\textbf y') \ \text{are}\  \big(E,\frac{1}{2}\big)\text{-S} \Big\}\\
			&\leq L_k^{-2p^{(n')}}  \leq L_{k+1}^{-\frac{p^{(n-1)}}{2}}.
		\end{align*}	
		Therefore,
		\begin{equation*}
			\mathbb{P}\{\mathcal{IT}_1\}\leq \frac{3^{2(n-1)d}}{2}L_{k+1}^{-\frac{p^{(n-1)}}{2}+2(n-1)d}.
		\end{equation*}	
		Similarly,
		\begin{equation*}
			\mathbb{P}\{\mathcal{IT}_2\}\leq \frac{3^{2(n-1)d}}{2}L_{k+1}^{-\frac{p^{(n-1)}}{2}+2(n-1)d}.
		\end{equation*}	
		Taking into account (\ref{pt}), the assertion of the lemma follows.
	\end{proof}	
	
	\begin{lem}\label{ns}
		Fix the integer $n\in\{2,\cdots,N\}$. For a given integer $k\geq 0$, consider an $n$-particle $\mathrm{PI}$ cube $\Lambda^{(n)}_{L_{k+1}}(\textbf{u})$ with a canonical decomposition $\Lambda^{(n)}_{L_{k+1}}(\textbf{u}) =\Lambda^{(n')}_{L_{k+1}}(\textbf u')\times\Lambda^{(n'')}_{L_{k+1}}(\textbf u'')$ and
		\begin{equation*}
			\text{dist}\Big(\Pi\Lambda^{(n')}_{L_{k+1}}(\textbf u'),\ \Pi\Lambda^{(n'')}_{L_{k+1}}(\textbf u'')\Big)>\mathrm{r}_0.
		\end{equation*}	
		Assume that
		\begin{itemize}
			\item[(a)] $\Lambda^{(n)}_{L_{k+1}}(\textbf{u})$ is $(\frac{1}{2},I)$-NPT,
			\item[(b)] $\Lambda^{(n)}_{L_{k+1}}(\textbf{u})$ is $E$-NR for some $E\in I$.
		\end{itemize}
		Then $\exists \hat L_{0}^{*}:= \hat L_0^{*}(n,d,\mathrm{r}_0,M,\beta,\tau_n,r_n,s_0^{(n)})\in(0,+\infty)$ such that if $L_0> \hat L_0^*$, the cube $\Lambda^{(n)}_{L_{k+1}}(\textbf{u})$ is $\big(E, \frac{1}{2}\big)$-NS.
	\end{lem}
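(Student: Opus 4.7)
The plan is to reduce the lemma to the Sobolev-norm calculation already carried out in the proof of Lemma \ref{nrnt}, applied now at scale $L_{k+1}$ rather than $L_k$. Since $\Lambda^{(n)}_{L_{k+1}}(\textbf{u})$ is $\mathrm{PI}$ with $\mathrm{dist}(\Pi\Lambda^{(n')}_{L_{k+1}}(\textbf u'),\Pi\Lambda^{(n'')}_{L_{k+1}}(\textbf u''))>\mathrm{r}_0$, the interaction $\mathbf{U}$ contributes no cross-terms between the two clusters, and
\begin{equation*}
\mathbf{H}^{(n)}_{\Lambda^{(n)}_{L_{k+1}}(\textbf{u})}=\mathbf{H}^{(n')}_{\Lambda^{(n')}_{L_{k+1}}(\textbf{u}')}\otimes \mathbf 1''+\mathbf 1'\otimes \mathbf{H}^{(n'')}_{\Lambda^{(n'')}_{L_{k+1}}(\textbf{u}'')},
\end{equation*}
so that the eigenvalues and eigenfunctions factor exactly as in the proof of Lemma \ref{nrnt}. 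Once the analogues of hypotheses $(b)$ and $(c)$ of that lemma are verified at the shifted energies at scale $L_{k+1}$, the Sobolev-norm bookkeeping transfers verbatim, with the losses $\tfrac{n''d}{2}$, $\tfrac{n'd}{2}$ in the Sobolev exponent absorbed by the margins $\tau_n>\tau_{n'}+\tfrac32 d$ and $\tau_n>\tau_{n''}+\tfrac32 d$.

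The first required input is non-resonance of the sub-cubes at the shifted energies, which is immediate from hypothesis $(b)$: the inequality $|E-(E'_i+E''_j)|\geq L_{k+1}^{-\beta}$ rewrites as $|(E-E''_j)-E'_i|\geq L_{k+1}^{-\beta}$ and $|(E-E'_i)-E''_j|\geq L_{k+1}^{-\beta}$, so $\Lambda^{(n')}_{L_{k+1}}(\textbf u')$ is $(E-E''_j)$-NR for every eigenvalue $E''_j$ of $\mathbf{H}^{(n'')}_{\Lambda^{(n'')}_{L_{k+1}}(\textbf u'')}$, and symmetrically. The second required input is non-singularity of the sub-cubes at these shifted energies. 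This is where hypothesis $(a)$ is used: the $(\tfrac12,I)$-NPT condition says that neither sub-cube contains a pair of separable $(E',\tfrac12)$-S sub-sub-cubes of radius $L_k$ for any admissible $E'$, and in particular at $E'=E-E''_j$ or $E-E'_i$ (which lie in a compact window only slightly larger than $I$, absorbed by suitable enlargement of the working energy interval). Combined with the non-resonance above, the power-law coupling argument in the style of Shi \cite{Shi2021} then upgrades NR\,+\,NT at scale $L_{k+1}$ into the required $(E-E''_j,\tfrac12)$-NS and $(E-E'_i,\tfrac12)$-NS statements at scale $L_{k+1}$.

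The main obstacle is this second input. Because the hopping decays only polynomially, iterating the geometric resolvent identity would lose too much after finitely many scales; one must instead invoke Shi's coupling lemma, which rests on the Sobolev-norm perturbation bounds \eqref{pa1}--\eqref{pa2} and shows that the power-law decay rate survives the step $L_k\rightsquigarrow L_{k+1}$ whenever $L_0\geq \hat L_{0}^{*}$ (this is exactly where the threshold $\hat L_{0}^{*}$ is generated, through the prefactors $3^{n''d}(2L_k+1)^{2n''d}$ that appear when one rewrites the tensor-product Green's function in Sobolev norm). Once both sub-cube NS estimates are available at scale $L_{k+1}$, substituting them into the tensor-decomposed resolvent and repeating the computation at the end of the proof of Lemma \ref{nrnt} yields
\begin{equation*}
\big\|\mathbf{G}^{(n)}_{\Lambda^{(n)}_{L_{k+1}}(\textbf u)}(E)\big\|_s\leq L_{k+1}^{\tau_n+s/2}\qquad \text{for all } s\in[s_0^{(n)},r_n],
\end{equation*}
which is the $(E,\tfrac12)$-NS property claimed.
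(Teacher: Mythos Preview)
Your overall architecture is exactly the paper's: tensor-decompose the PI Hamiltonian, read off $(E-E''_j)$-NR and $(E-E'_i)$-NR of the sub-cubes from $(b)$, use $(a)$ to bound the number of separable singular $L_k$-sub-sub-cubes, feed this into Shi's coupling machinery (the paper invokes Lemma~\ref{coupling} together with the clustering Lemma~\ref{SC}) to get NS of the sub-cubes at the shifted energies, and finish with the Sobolev-norm computation of Lemma~\ref{nrnt}.

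There is, however, a genuine gap in your handling of the shifted energies. The $(\tfrac12,I)$-NPT hypothesis only rules out pairs of separable $(E',\tfrac12)$-S sub-sub-cubes for $E'\in I$. Your parenthetical ``absorbed by suitable enlargement of the working energy interval'' does not work as stated: the hypothesis is what it is, and you cannot enlarge $I$ after the fact without also enlarging the inductive assumption. The shifted energies $E-E'_i$ can in fact fall well outside $I$ (by as much as $\sim n'M$, since the eigenvalues $E'_i$ range over roughly $[-n'M,n'M]$ up to $O(1/g)$), so this is not a negligible fringe.

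The paper closes this gap by a separate, elementary argument for the case $E-E'_i\notin I$: since $I=[-MN-1,MN+1]$ and $\mathbf V(\textbf x;\omega)\in[-n''M,n''M]$ on the $n''$-particle sub-cube, one has $\min_{\textbf x}|E-E'_i-\mathbf V(\textbf x;\omega)|\geq 1$, so the diagonal part of $\mathbf H^{(n'')}_{\Lambda^{(n'')}_{L_{k+1}}(\textbf u'')}-(E-E'_i)$ is uniformly invertible in Sobolev norm, and for $|g|$ large the perturbation bounds \eqref{pa1}--\eqref{pa2} give $\|\mathbf G^{(n'')}(E-E'_i)\|_s\leq C(s)$, which is trivially $\leq L_{k+1}^{\tau_{n''}}$. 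Thus NS of the sub-cube holds at \emph{all} shifted energies, and the coupling lemma is only needed for the case $E-E'_i\in I$ where NPT actually bites. You should insert this dichotomy; with it, your proof is complete and matches the paper's.

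A minor remark: the threshold $\hat L_0^*$ is generated primarily by the coupling lemma's own threshold $\underline{l}_0$ (and by Lemma~\ref{SC}'s requirement $l>2\tilde C(n)$), not by the prefactors $3^{n''d}(2L_k+1)^{2n''d}$ in the tensor estimate---those are absorbed by the margin $\tau_n>\tau_{n'}+\tfrac32 n''d$ at every scale, not just asymptotically.
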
	
	\begin{proof}
		Since $\Lambda^{(n)}_{L_{k+1}}(\textbf{u})$ is $\mathrm{PI}$,  $\textbf H^{(n)}_{\Lambda^{(n)}_{L_{k+1}}(\textbf{u})}$ admits the decomposition
		\begin{align*}
			\textbf H^{(n)}_{\Lambda^{(n)}_{L_{k+1}}(\textbf{u})}=\textbf H^{(n')}_{\Lambda^{(n')}_{L_{k+1}}(\textbf u')}\otimes \textbf{1}''+\textbf{1}' \otimes\textbf H^{(n'')}_{\Lambda^{(n'')}_{L_{k+1}}(\textbf u'')}.
		\end{align*}
		Thus its eigenvalues are the sums $E_{i,j}=E_i^{'}+E_j^{''}$ where $E_{i}^{'}\in\sigma\Big(\textbf H^{(n')}_{\Lambda^{(n')}_{L_{k+1}}(\textbf u')}\Big), i=1,\cdots, \big|\Lambda^{(n')}_{L_{k+1}}(\textbf u')\big|$, and $E_{j}^{''}\in\sigma\Big(\textbf H^{(n'')}_{\Lambda^{(n'')}_{L_{k+1}}(\textbf u'')}\Big), j=1,\cdots,\big|\Lambda^{(n'')}_{L_{k+1}}(\textbf u'')\big|$.  
		Eigenfunctions of $\textbf H^{(n)}_{\Lambda^{(n)}_{L_{k+1}}(\textbf{u})}$ can be chosen in the form $\bm \Psi_{i,j}=\bm \phi_i\otimes\bm \psi_j$, where $\{\bm \phi_i\}_{i\in\Z^{n'd}}$ are eigenfunctions of $\textbf H^{(n')}_{\Lambda^{(n')}_{L_{k+1}}(\textbf u')}$ and $\{\bm \psi_j\}_{j\in\Z^{n''d}}$ are eigenfunctions of $\textbf H^{(n'')}_{\Lambda^{(n'')}_{L_k}(\textbf u'')}$. 
		
		Further,  $\Lambda^{(n)}_{L_{k+1}}(\textbf{u})$ is $E$-NR. Therefore, for all $E_{i}^{'}\in\sigma\Big(\textbf H^{(n')}_{\Lambda^{(n')}_{L_{k+1}}(\textbf u')}\Big)$, $\Lambda^{(n'')}_{L_{k+1}}(\textbf u'')$ is $(E-E_{i}^{'})$-NR. In addition, by the assumption of $(\frac{1}{2},I)$-NPT, \textbf{if} $E-E'_i\in I$, the cube $\Lambda^{(n'')}_{L_{k+1}}(\textbf u'')$ cannot contain two separable $\big(E-E_i^{'}, \frac{1}{2}\big)$-S cubes of size $L_k$.
		
		As a result, if
		\begin{itemize}
			\item There exists a cube  $\Lambda^{(n'')}_{L_{k}}(\textbf x)\subset\Lambda^{(n'')}_{L_{k+1}}(\textbf u'')$ which is $(E-E_i^{'},\frac{1}{2})$-S and in addition,
			\item No  pair of separable cubes of radius $L_k$ lying in $\Lambda^{(n'')}_{L_{k+1}}(\textbf u'')$ is $(E-E_i^{'},\frac{1}{2})$-S,
		\end{itemize}	
		then from Lemma \ref{weaks},  all $\big(E-E_i^{'},\frac{1}{2}\big)$-S cubes of radius $L_k$  lying in $\Lambda^{(n'')}_{L_{k+1}}(\textbf u'')$ can be covered by at most $(n'')^{n''}+1$ cubes  of radius $11n''L_{k}$. Since $\Lambda^{(n'')}_{L_{k+1}}(\textbf u'')$ is $(E-E_i^{'})$-NR, we can apply the coupling Lemma \ref{coupling} and Lemma \ref{SC}. If the relation (\ref{condition}) holds true for $n$ replaced by $n''$, then   $\Lambda^{(n'')}_{L_{k+1}}(\textbf u'')$ is  $(E-E_i^{'},\frac{1}{2})$-NS. 
		
		\textbf{ If $E-E'_i\notin I$}, then 
		$$\min_{\textbf x\in\Lambda^{(n'')}_{L_{k+1}}(\textbf u'')}|E-E'_i-\textbf V(\textbf x;\omega)|\geq M.$$
		Let $$\textbf D_{\Lambda^{(n'')}_{L_{k+1}}(\textbf u'')}=\text{diag}_{\textbf x\in\Lambda^{(n'')}_{L_{k+1}}(\textbf u'')}(E-E'_i-\textbf V(\textbf x;\omega)),$$ then $$\Big\|\textbf D^{-1}_{\Lambda^{(n'')}_{L_{k+1}}(\textbf u'')}\Big\|_s\leq \frac{C_0(s)}{M}.$$ Also,  from the  \textbf{perturbation argument} \eqref{pa1}-\eqref{pa2}, if $g$ is sufficiently large, one has 
		\begin{align*}
			\Big\|\textbf G^{(n'')}_{\Lambda^{(n'')}_{L_{k+1}}(\textbf u'')}(E-E'_i)\big\|_s=&\Big\|\Big(-\textbf D_{\Lambda^{(n'')}_{L_{k+1}}(\textbf u'')}+\frac{1}{g}\Big(\textbf{T}_{\Lambda^{(n'')}_{L_{k+1}}(\textbf u'')}+\textbf{U}_{\Lambda^{(n'')}_{L_{k+1}}(\textbf u'')}\Big)\Big)^{-1}\Big\|_s \\
			\leq & C(s)\Big(\frac{1}{M}+\frac{1}{g}\frac{1}{M^2}\Big)\leq L_{k+1}^{\tau_{n''}},
		\end{align*}
		i.e., $\Lambda^{(n'')}_{L_{k+1}}(\textbf u'')$ is  $(E-E_i^{'},\frac{1}{2})$-NS.

		Similarly, we conclude that the cube $\Lambda^{(n')}_{L_{k+1}}(\textbf u')$ is $(E-E_j^{''},\frac{1}{2})$-NS.
		Finally, from Lemma \ref{nrnt}, we see that the cube $\Lambda^{(n)}_{L_{k+1}}(\textbf{u})$ is $(E, \frac{1}{2})$-NS.
	\end{proof}

	\begin{lem}\label{msapi} Assume that $(\bm{SS},N,I,k,\tilde{n})$ is fulfilled for all $\tilde{n} \in [1,n-1]$.
		Let $\Lambda^{(n)}_{L_{k+1}}(\textbf{u}),\  \Lambda^{(n)}_{L_{k+1}}(\textbf{v})$ be separable $\mathrm{PI}$ cubes. Then 
		\begin{align*}
			\mathbb{P}&\Big\{\exists E\in I, \ {\rm s.t.\ both}\  \Lambda^{(n)}_{L_{k+1}}(\textbf{u})\  {\rm and} \ \Lambda^{(n)}_{L_{k+1}}(\textbf{v}) \ {\rm are}\  \big(E,\frac{1}{2}\big)\text{-}{\rm S} \Big\}\\
			&\leq \tilde{C}(n,d,\rho,\kappa)L_{k+1}^{-8p^{(n)}},
		\end{align*}	
		provided that $L_0$ larger enough.
	\end{lem}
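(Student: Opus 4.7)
The plan is to reduce the event that both separable $\mathrm{PI}$ cubes are $(E,\frac{1}{2})$-S to a union of two much more tractable events, using Lemma~\ref{ns} as the decisive structural input. Since $\Lambda^{(n)}_{L_{k+1}}(\textbf{u})$ and $\Lambda^{(n)}_{L_{k+1}}(\textbf{v})$ are $\mathrm{PI}$, each admits a canonical decomposition whose factor projections are separated by more than $\mathrm{r}_0$, so Lemma~\ref{ns} applies: if such a cube is $(\frac{1}{2},I)$-NPT and $E$-NR, then it is automatically $(E,\frac{1}{2})$-NS. Contrapositively, a $\mathrm{PI}$ cube that is $(E,\frac{1}{2})$-S must be either $(\frac{1}{2},I)$-PT or $E$-R. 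Therefore the event we want to bound is contained in
\begin{equation*}
\Big\{\Lambda^{(n)}_{L_{k+1}}(\textbf{u})\ \text{is}\ \big(\tfrac{1}{2},I\big)\text{-PT}\Big\}\cup\Big\{\Lambda^{(n)}_{L_{k+1}}(\textbf{v})\ \text{is}\ \big(\tfrac{1}{2},I\big)\text{-PT}\Big\}\cup\mathcal{R},
\end{equation*}
where $\mathcal{R}=\big\{\exists E\in I:\ \Lambda^{(n)}_{L_{k+1}}(\textbf{u})\ \text{and}\ \Lambda^{(n)}_{L_{k+1}}(\textbf{v})\ \text{are both}\ E\text{-R}\big\}$.

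The first two events are controlled directly by Lemma~\ref{p-pt}, each by $3^{2(n-1)d}L_{k+1}^{-p^{(n-1)}/2+2(n-1)d}$. Since $p^{(n-1)}=18\,p^{(n)}$, the exponent is $-9p^{(n)}+2(n-1)d$, which under the standing choice $p_0\geq 20Nd$ is comfortably $\leq -8p^{(n)}$ for $L_0$ large.

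The core of the proof is bounding $\mathbb{P}(\mathcal{R})$ using a two-cube Wegner-type estimate that exploits separability. By weak separability, there is a non-empty $J\subseteq\{1,\ldots,n\}$ with (say) $\Pi_J\Lambda^{(n)}_{L_{k+1}+\mathrm{r}_0}(\textbf{u})$ disjoint from $\Pi_{J^c}\Lambda^{(n)}_{L_{k+1}+\mathrm{r}_0}(\textbf{u})\cup\Pi\Lambda^{(n)}_{L_{k+1}+\mathrm{r}_0}(\textbf{v})$. Thus the random variables $\{V(x,\omega):x\in\Pi_J\Lambda^{(n)}_{L_{k+1}+\mathrm{r}_0}(\textbf{u})\}$ are independent of all potentials entering $\textbf H^{(n)}_{\Lambda^{(n)}_{L_{k+1}}(\textbf{v})}$. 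Conditioning first on the sigma-algebra $\mathcal{F}_{\textbf{v}}$ generated by the potentials in $\Pi\Lambda^{(n)}_{L_{k+1}+\mathrm{r}_0}(\textbf{v})$, we fix an enumeration $\{E_j(\omega)\}$ of $\sigma(\textbf H^{(n)}_{\Lambda^{(n)}_{L_{k+1}}(\textbf{v})})$ and write
\begin{equation*}
\mathbb{P}(\mathcal{R})\leq\sum_{j}\mathbb{P}\Big\{\mathrm{dist}\big(E_j(\omega),\sigma(\textbf H^{(n)}_{\Lambda^{(n)}_{L_{k+1}}(\textbf{u})})\big)<L_{k+1}^{-\beta}\,\Big|\,\mathcal{F}_{\textbf{v}}\Big\}.
\end{equation*}
Each conditional probability is handled by a standard single-cube Wegner estimate applied to $\textbf H^{(n)}_{\Lambda^{(n)}_{L_{k+1}}(\textbf{u})}$ with respect to the independent randomness on the $J$-slab: rank considerations and the Hölder continuity of $\mu$ give a bound $\big|\Lambda^{(n)}_{L_{k+1}}(\textbf{u})\big|\cdot\kappa^{-1}(2L_{k+1}^{-\beta})^{\rho}$. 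Summing over $j$ yields
\begin{equation*}
\mathbb{P}(\mathcal{R})\leq C(n,d,\kappa)\,L_{k+1}^{4nd-\beta\rho}.
\end{equation*}
With $\beta\geq 18^{N}p_0/(2\rho)=9p^{(n)}\cdot(18^{n-1}/\rho)\cdot\ldots$, one checks $\beta\rho\geq 18^{N}p_0/2\geq 9\cdot 18^{N-n}p_0=\ldots$; in any case $\beta\rho-4nd\geq 8p^{(n)}$ under the standing assumptions, so $\mathbb{P}(\mathcal{R})\leq C L_{k+1}^{-8p^{(n)}}$.

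Summing the three contributions and absorbing multiplicative constants into $\tilde{C}(n,d,\rho,\kappa)$ for $L_0$ large enough gives the claimed bound. The main obstacle is the Wegner-type step for $\mathcal{R}$: one has to extract genuine independence from mere weak separability (the two cubes can overlap in the $J^c$-particles), and the conditioning argument must be set up carefully so that the Hölder estimate with exponent $\rho$ is applied to the slab of independent potentials and the rank one count comes from the projection onto that slab. Once that is done, the remaining verifications are purely arithmetic in the parameters $p^{(n)}$, $\beta$, $\tau_n$.
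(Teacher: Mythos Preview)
Your proposal is correct and follows essentially the same approach as the paper: the same decomposition $\mathcal{S}\subset\mathcal{PT}(\textbf{u})\cup\mathcal{PT}(\textbf{v})\cup\mathcal{R}$ via Lemma~\ref{ns}, the same bound on the $\mathcal{PT}$ events via Lemma~\ref{p-pt}, and the same Wegner-type control of $\mathcal{R}$. The only difference is packaging: the paper invokes the two-volume Stollmann bound (Theorem~\ref{wegnern}) as a black box, yielding $\mathbb{P}(\mathcal{R})\leq n\kappa^{-1}4^\rho 3^{(2n+1)d}L_{k+1}^{-\beta\rho+(2n+1)d}$, whereas you sketch the conditioning argument behind it inline and arrive at the slightly cruder exponent $4nd-\beta\rho$; both are dominated by $-8p^{(n)}$ under the standing parameter choices, so the conclusion is unaffected.
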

	\begin{proof}
		Consider the following events
		\begin{align*}
			\mathcal{S}&=\Big\{\exists E\in I, \ \text{s.t. both}\  \Lambda^{(n)}_{L_{k+1}}(\textbf{u})\  \text{and} \ \Lambda^{(n)}_{L_{k+1}}(\textbf{v}) \ \text{are}\  \big(E,\frac{1}{2}\big)\text{-S}\Big\},\\
			\mathcal{PT}(\textbf{u})&=\Big\{\ \Lambda^{(n)}_{L_{k+1}}(\textbf{u}) \text{ is}\  (\frac{1}{2},I)\text{-PT} \Big\},\\
			\mathcal{PT}(\textbf{v})&=\Big\{\ \Lambda^{(n)}_{L_{k+1}}(\textbf{v}) \text{ is}\  (\frac{1}{2},I)\text{-PT} \Big\},\\
			\mathcal{R}&=\Big\{\exists E\in I, \ \text{s.t. both}\  \Lambda^{(n)}_{L_{k+1}}(\textbf{u})\  \text{and} \ \Lambda^{(n)}_{L_{k+1}}(\textbf{v}) \ \text{are}\  E\text{-R} \Big\}.
		\end{align*}	
		By Lemma \ref{ns}, if the cube $\Lambda^{(n)}_{L_{k+1}}(\textbf{u})$ is $(\frac{1}{2},I)$-NPT and $E$-NR for some $E\in I$, then for $L_0$ sufficiently large,  $\Lambda^{(n)}_{L_{k+1}}(\textbf{u})$ is $(E,\frac{1}{2})$-NS. The same is true, of course, for $\Lambda^{(n)}_{L_{k+1}}(\textbf{v})$. Therefore, $\mathcal{S}\subset\mathcal{PT}(\textbf{u})\cup\mathcal{PT}(\textbf{v})\cup\mathcal{R}$. Further, by virtue of Lemma \ref{p-pt}, we have that
		\begin{equation*}
			\max\big\{\mathbb{P}\{\mathcal{PT}(\textbf{u})\},\mathbb{P}\{\mathcal{PT}(\textbf{v})\}\big\}\leq 3^{2(n-1)d}L_{k+1}^{-\frac{p^{(n-1)}}{2}+2(n-1)d}.
		\end{equation*}	
		Next, by Theorem \ref{wegnern},
		\begin{align*}
			\mathbb{P}\{\mathcal{R}\}\leq n \kappa^{-1}4^{\rho} 3^{(2n+1)d}L_{k+1}^{-\beta\rho+(2n+1)d}.
		\end{align*}	
		By summing up, we finally obtain
		\begin{align}\label{pmsapi}
			\nonumber	\mathbb{P}\{\mathcal{S}\}&\leq \mathbb{P}\{\mathcal{PT}(\textbf{u})\}+\mathbb{P}\{\mathcal{PT}(\textbf{v})\}+\mathbb{P}\{\mathcal{R}\}\\
			\nonumber	&\leq 2\cdot3^{2(n-1)d}L_{k+1}^{-\frac{p^{(n-1)}}{2}+2(n-1)d}+ n \kappa^{-1}4^{\rho} 3^{(2n+1)d}L_{k+1}^{-\beta\rho+(2n+1)d}\\
			&\leq \tilde{C}(n,d,\rho,\kappa)\big(L_{k+1}^{-\frac{p^{(n-1)}}{2}+2(n-1)d}+L_{k+1}^{-\beta\rho+(2n+1)d}\big)\\
			\nonumber &\leq \tilde{C}(n,d,\rho,\kappa)L_{k+1}^{-8p^{(n)}}.
		\end{align}
		The last inequality holds because
		\begin{equation}
			\begin{split}
				-\frac{p^{(n-1)}}{2}+(2n-1)d&=-8p^{(n)}+(2n-1)d-p^{(n)} < -8p^{(n)},\\
			\end{split}
		\end{equation}
		and
		\begin{equation}
			\begin{split}
				-\beta \rho +(2n+1)d&= -\frac{18^{N}p_0}{2}+(2n+1)d\\
				&=-9\cdot 18^{N-1}p_0+(2n+1)d\\
				&<-8p^{(n)}.
			\end{split}
		\end{equation}
	\end{proof}	
	
	Let $K_{PI}\left(\Lambda^{(n)}_{L_{k+1}}(\textbf{u}),E\right)$ denote the largest cardinality of a collection of pairwise separable, partially interactive $(E,\frac{1}{2})$-S cubes of radius $L_k$ contained in the cube $\Lambda^{(n)}_{L_{k+1}}(\textbf{u})$. Further, set
	\begin{align*}
		K_{PI}\big(\Lambda^{(n)}_{L_{k+1}}(\textbf{u})\big)=\sup _{E\in I}K_{PI}\big(\Lambda^{(n)}_{L_{k+1}}(\textbf{u}),E\big).
	\end{align*}	
	Then Lemma \ref{msapi} leads directly to the following corollary.
	
	\begin{cor}\label{kpi}
		Suppose $L_0$ large enough and assume that the property $(\bm{SS},N,I,k-1,\tilde{n})$ is fulfilled for all $\tilde{n} \in [1,n-1]$, then for any $n$-particle cube $\Lambda^{(n)}_{L_{k+1}}(\textbf{u})$,
		\begin{align*}
			\mathbb{P}\Big \{K_{PI}\big(\Lambda^{(n)}_{L_{k+1}}(\textbf{u})\big)\geq 2  \Big\}\leq\tilde{C}_1(n,d,\rho,\kappa) L_{k+1}^{-\frac{17}{8}p^{(n)}}.
		\end{align*}
	\end{cor}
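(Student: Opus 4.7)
The plan is a one-layer union bound built on top of Lemma~\ref{msapi}, applied one scale below.

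First I would unpack the definition: if $K_{PI}(\Lambda^{(n)}_{L_{k+1}}(\textbf u))\ge 2$, then by definition there exist some $E\in I$ and two separable PI $(E,\tfrac12)$-S cubes of radius $L_k$ inside $\Lambda^{(n)}_{L_{k+1}}(\textbf u)$. A union bound over the ordered pairs of centers of such sub-cubes --- at most $|\Lambda^{(n)}_{L_{k+1}}(\textbf u)|^{2}\le 3^{2nd}L_{k+1}^{2nd}$ of them --- reduces the corollary to the per-pair estimate
\begin{equation*}
\mathbb{P}\{\exists E\in I:\ \Lambda^{(n)}_{L_k}(\textbf x),\ \Lambda^{(n)}_{L_k}(\textbf y)\ \text{are both}\ (E,\tfrac12)\text{-S}\}
\end{equation*}
for fixed separable PI pairs $(\textbf x,\textbf y)$ at radius $L_k$.

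Second, for each such pair I would invoke Lemma~\ref{msapi} at scale $L_k$: the assumption $(\bm{SS}.N.I.k-1.\tilde n)$ for $\tilde n\in[1,n-1]$ is exactly the input Lemma~\ref{msapi} requires in order to conclude at radius $L_k$ (one scale below what Lemma~\ref{msapi} is stated for). Rather than plugging in the round-number bound $L_k^{-8p^{(n)}}$ from the statement, I would retain the sharper pre-simplification bound appearing inside its proof (equation~\eqref{pmsapi}), which under $p^{(n-1)}=18p^{(n)}$, the lower bound on $\beta\rho$, and the growth condition $p^{(n)}\ge 20Nd\ge 20nd$ simplifies to $\tilde{C}(n,d,\rho,\kappa)\,L_k^{-\frac{89}{10}p^{(n)}}$; here $p^{(n)}\ge 20nd$ is used to absorb the subleading polynomial factor $2(n-1)d$ into $\tfrac{1}{10}p^{(n)}$ (and the second term in \eqref{pmsapi} is negligible since $\beta\rho\ge 9\cdot 18^{n-1}p^{(n)}$).

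Combining the two steps via $L_k=L_{k+1}^{1/4}$ then gives
\begin{equation*}
\mathbb{P}\bigl\{K_{PI}(\Lambda^{(n)}_{L_{k+1}}(\textbf u))\ge 2\bigr\}\le 3^{2nd}\tilde{C}\,L_{k+1}^{\,2nd-\frac{89}{40}p^{(n)}}\le \tilde{C}_1\,L_{k+1}^{-\frac{17}{8}p^{(n)}},
\end{equation*}
where the last inequality absorbs $2nd\le \tfrac{4}{40}p^{(n)}$ (again by $p^{(n)}\ge 20nd$). The main obstacle is exponent-bookkeeping: since the target exponent $-\tfrac{17}{8}p^{(n)}$ is strictly more negative than $-2p^{(n)}$, the naive approach using only the statement of Lemma~\ref{msapi} (which yields $L_k^{-8p^{(n)}}=L_{k+1}^{-2p^{(n)}}$ per pair and hence $L_{k+1}^{-2p^{(n)}+2nd}$ overall) simply does not close. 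One must re-enter the proof of Lemma~\ref{msapi} to expose the sharper $L_k^{-\frac{89}{10}p^{(n)}}$, after which the size condition $p_0\ge 20Nd$ just barely matches the gap $\tfrac{89-85}{40}=\tfrac{1}{10}$ and delivers the advertised exponent $-\tfrac{17}{8}p^{(n)}$.
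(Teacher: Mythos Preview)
Your proposal is correct and follows essentially the same approach as the paper: a union bound over all pairs of $L_k$-cubes inside $\Lambda^{(n)}_{L_{k+1}}(\textbf u)$, combined with the sharper pre-simplification bound \eqref{pmsapi} from the proof of Lemma~\ref{msapi} applied at scale $L_k$ (using $(\bm{SS}.N.I.k-1.\tilde n)$), and the scale relation $L_k=L_{k+1}^{1/4}$. The only cosmetic difference is that the paper keeps the two terms of \eqref{pmsapi} separate throughout and simplifies at the end, while you first consolidate them into $L_k^{-\frac{89}{10}p^{(n)}}$; your observation that the statement-level bound $L_k^{-8p^{(n)}}$ does not close, forcing one to re-enter \eqref{pmsapi}, is exactly the point.
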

	
	\begin{proof}
		From Lemma \ref{msapi}, for any separable pair of PI cubes  $\Lambda^{(n)}_{L_{k}}(\textbf{x}), \Lambda^{(n)}_{L_{k}}(\textbf{y})$ inside $\Lambda^{(n)}_{L_{k+1}}(\textbf{u})$, the estimate of the form (\ref{pmsapi}) holds true:
		\begin{align*}
			\mathbb{P}&\Big\{\exists E\in I, \ \text{s.t.\ both}\  \Lambda^{(n)}_{L_{k}}(\textbf{x})\  \text{and} \ \Lambda^{(n)}_{L_{k}}(\textbf{y}) \ \text{are}\  \big(E,\frac{1}{2}\big)\text{-S} \Big\}\\
			&\leq \tilde{C}(n,d,\rho,\kappa)\big(L_{k}^{-\frac{p^{(n-1)}}{2}+2(n-1)d}+L_{k}^{-\beta\rho+(2n+1)d}\big).
		\end{align*}	
		The number of pairs of cubes of radius $L_k$ inside $\Lambda^{(n)}_{L_{k+1}}(\textbf{u})$ is bounded by
		\begin{align*}
			\frac{1}{2}(2L_{k+1}+1)^{2nd}\leq \frac{3^{2nd}}{2}L_{k+1}^{2nd}.
		\end{align*}
		So we finally get the inequality
		\begin{align*}
			\mathbb{P}\Big \{K_{PI}\big(\Lambda^{(n)}_{L_{k+1}}(\textbf{u})\big)\geq 2  \Big\}
			&\leq  \tilde{C}_1(n,d,\rho,\kappa)\big(L_{k+1}^{-\frac{p^{(n-1)}}{8}+\frac{(5n-1)d}{2}}+L_{k+1}^{-\frac{\beta\rho}{4}+\frac{(10n+1)d}{4}}\big)\\
			&\leq\tilde{C}_1(n,d,\rho,\kappa) L_{k+1}^{-\frac{17}{8}p^{(n)}},
		\end{align*}
		since 
		\begin{equation}
			\begin{split}
				-\beta \rho+(10n+1)d &\leq  \frac{-17\cdot 18^{N-1}p_0-18^{N-1}p_0+2(10n+1)d}{2}\\
				& < \frac{17p^{(n)}}{2}.
			\end{split}
		\end{equation}
	\end{proof}

	\subsection{Analysis of Fully Interactive Pairs}
	
	\begin{lem}\label{separable}
		Let $L_0>8\mathrm{r}_0$. If $n$-particle cubes  $\Lambda^{(n)}_{L_k}(\textbf{u}), \Lambda^{(n)}_{L_k}(\textbf{v})$, $k\geq0$ are $\mathrm{FI}$ and $\|\textbf{u}-\textbf{v}\|\geq 11nL_k$, then these cubes are  completely separable. Moreover, the random operators $\textbf H^{(n)}_{\Lambda^{(n)}_{L_k}(\textbf{u})}$ and $\textbf H^{(n)}_{\Lambda^{(n)}_{L_k}(\textbf{v})}$ are independent (as the values of random potential $V({x},\omega)$ are IID).
	\end{lem}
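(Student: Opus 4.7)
The plan is to combine Lemma \ref{FIS} with the IID structure of the random potential. My first step is to verify the hypothesis of Lemma \ref{FIS}: I need $11nL_k > n(10L_k + 8\mathrm{r}_0)$, which rearranges to $L_k > 8\mathrm{r}_0$. Since $L_k = L_0^{4^k} \geq L_0 > 8\mathrm{r}_0$ by hypothesis, this is immediate, so Lemma \ref{FIS} gives complete separability of the pair $\Lambda^{(n)}_{L_k}(\textbf{u})$ and $\Lambda^{(n)}_{L_k}(\textbf{v})$. This unpacks to
\begin{equation*}
\Pi\Lambda^{(n)}_{L_k+\mathrm{r}_0}(\textbf{u})\cap \Pi\Lambda^{(n)}_{L_k+\mathrm{r}_0}(\textbf{v})=\emptyset,
\end{equation*}
which in particular yields $\Pi\Lambda^{(n)}_{L_k}(\textbf{u})\cap \Pi\Lambda^{(n)}_{L_k}(\textbf{v})=\emptyset$.

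For the independence claim, I would next observe that the only stochastic ingredient of the finite-volume Hamiltonians $\textbf{H}^{(n)}_{\Lambda^{(n)}_{L_k}(\textbf{u})}$ and $\textbf{H}^{(n)}_{\Lambda^{(n)}_{L_k}(\textbf{v})}$ is the restricted potential $\textbf{V}_{\Lambda^{(n)}_{L_k}(\cdot)}(\omega)$; the hopping $\textbf{T}$ and the interaction $\textbf{U}$ are deterministic. By the definition of $\textbf{V}$, the operator $\textbf{V}_{\Lambda^{(n)}_{L_k}(\textbf{u})}(\omega)$ is measurable with respect to the sigma-algebra generated by the collection $\{V(x,\omega): x\in \Pi\Lambda^{(n)}_{L_k}(\textbf{u})\}$, and similarly for $\textbf{v}$. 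Since these two index sets are disjoint by the previous paragraph and the random variables $\{V(x,\omega)\}_{x\in\Z^d}$ are IID by \textbf{Assumption C}, the two sigma-algebras are independent, and hence so are the two operators.

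Since both conclusions follow directly from results already stated in the paper, there is no genuine obstacle here; the lemma is essentially a bookkeeping combination of Lemma \ref{FIS} with the IID hypothesis on $V(\cdot,\omega)$. The only subtle point to make sure of is that the inflation by $\mathrm{r}_0$ in the definition of complete separability is not strictly needed for the independence statement (the interaction is deterministic), but it is harmless: the weaker disjointness $\Pi\Lambda^{(n)}_{L_k}(\textbf{u})\cap \Pi\Lambda^{(n)}_{L_k}(\textbf{v})=\emptyset$ already suffices and is implied.
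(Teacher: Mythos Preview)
Your proposal is correct and follows exactly the approach the paper intends: the paper's own proof is the single sentence ``The proof can be deduced from Lemma \ref{FIS},'' and you have simply spelled out the verification of its hypothesis (that $L_0>8\mathrm{r}_0$ forces $11nL_k>n(10L_k+8\mathrm{r}_0)$) together with the standard IID argument for independence.
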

	\begin{proof}The proof can be deduced from Lemma \ref{FIS}.
	\end{proof}

	Let $K_{FI}\big(\Lambda^{(n)}_{L_{k+1}}(\textbf{u}),E\big)$ denote the largest cardinality of a collection of pairwise separable, FI cubes of  radius $L_k$ contained in the cube $\Lambda^{(n)}_{L_{k+1}}(\textbf{u})$. Further, set
	\begin{align*}
		K_{FI}\big(\Lambda^{(n)}_{L_{k+1}}(\textbf{u})\big)=\sup _{E\in I}K_{FI}\big(\Lambda^{(n)}_{L_{k+1}}(\textbf{u}),E\big).
	\end{align*}	
	
	\begin{lem}\label{kfi}
		Suppose $L_0$ large enough and assume that the property $(\bm{SS},N,I,k,n)$ holds true,
		then for any $n$-particle cube $\Lambda^{(n)}_{L_{k+1}}(\textbf{u})$,
		\begin{align*}
			\mathbb{P}\left \{K_{FI}\big(\Lambda^{(n)}_{L_{k+1}}(\textbf{u})\big)\geq 12 \right\}
			\leq C'(n,d)L^{-\frac{7}{3}p^{(n)}}_{k+1}.
		\end{align*}
	\end{lem}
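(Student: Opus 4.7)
The plan is to mirror the strategy of Corollary \ref{kpi} but exploit the stronger independence available in the fully interactive setting, which lets us combine many bad-pair events instead of just one. First, I would observe that if $K_{FI}\big(\Lambda^{(n)}_{L_{k+1}}(\textbf{u})\big)\geq 12$, then there exist an energy $E\in I$ and twelve pairwise separable FI cubes $\Lambda^{(n)}_{L_{k}}(\textbf{u}_1),\ldots,\Lambda^{(n)}_{L_{k}}(\textbf{u}_{12})$ inside $\Lambda^{(n)}_{L_{k+1}}(\textbf{u})$ that are simultaneously $(E,\tfrac{1}{2})$-S. Grouping the indices into six disjoint pairs $\{(1,2),(3,4),\ldots,(11,12)\}$, this event implies that for each of the six pairs there exists some $E_j\in I$ (namely the common $E$) such that both cubes in the pair are $(E_j,\tfrac{1}{2})$-S.

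Next, I would invoke Lemma \ref{separable}: because the twelve cubes are FI and pairwise separable (so in particular $\|\textbf{u}_i-\textbf{u}_j\|>11nL_k$), any two of them are completely separable. Consequently the random operators $\textbf H^{(n)}_{\Lambda^{(n)}_{L_k}(\textbf{u}_i)}$, $i=1,\ldots,12$, depend on disjoint families of IID random variables and hence are \emph{jointly independent}. In particular, the six events ``the $j$-th pair is $(E_j,\tfrac{1}{2})$-S for some $E_j\in I$" are mutually independent.

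Applying $(\bm{SS}.N.I.k.n)$ to each pair, each of these six events has probability at most $L_k^{-2p^{(n)}}$, so by independence
\begin{equation*}
\mathbb{P}\Big\{\exists E\in I: \Lambda^{(n)}_{L_k}(\textbf{u}_j)\ \text{is}\ (E,\tfrac{1}{2})\text{-S},\ j=1,\ldots,12\Big\}\leq \big(L_k^{-2p^{(n)}}\big)^{6}=L_k^{-12p^{(n)}}.
\end{equation*}
A union bound over the choice of the twelve centers inside $\Lambda^{(n)}_{L_{k+1}}(\textbf{u})$, of which there are at most $|\Lambda^{(n)}_{L_{k+1}}(\textbf{u})|^{12}\leq 3^{12nd}L_{k+1}^{12nd}$, gives
\begin{equation*}
\mathbb{P}\Big\{K_{FI}\big(\Lambda^{(n)}_{L_{k+1}}(\textbf{u})\big)\geq 12\Big\}\leq 3^{12nd}L_{k+1}^{12nd}\cdot L_k^{-12p^{(n)}}=3^{12nd}L_{k+1}^{12nd-3p^{(n)}},
\end{equation*}
using $L_{k+1}=L_k^{4}$. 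Finally, from $p^{(n)}=18^{N-n}p_0\geq p_0\geq 20Nd\geq 18nd$, one checks that $12nd-3p^{(n)}\leq -\tfrac{7}{3}p^{(n)}$, which yields the stated bound with $C'(n,d)=3^{12nd}$.

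The only real obstacle is justifying the joint independence of the twelve operators — everything else is a routine union bound and arithmetic on exponents. That independence is precisely where the FI hypothesis pays off through Lemma \ref{separable}: in the PI setting of Corollary \ref{kpi} one only had pairwise control (bounding the probability that \emph{some} pair is bad), which is why the PI argument cannot push past $K_{PI}\geq 2$, whereas here the complete separability lets us iterate the bound to six independent pairs and gain the extra factors that produce the required $L_{k+1}^{-7p^{(n)}/3}$ decay.
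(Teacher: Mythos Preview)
Your proposal is correct and follows essentially the same route as the paper's proof: group the twelve pairwise separable FI cubes into six pairs, use Lemma \ref{separable} to get complete separability and hence independence of the six pair-events, apply $(\bm{SS}.N.I.k.n)$ to each pair, multiply to get $L_k^{-12p^{(n)}}$, and finish with a union bound over the $O(L_{k+1}^{12nd})$ choices of centers together with the exponent check $12nd\leq \tfrac{2}{3}p^{(n)}$. The only cosmetic difference is that the paper divides the count of $12$-tuples by $12!$, which is immaterial for the constant $C'(n,d)$.
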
	
	\begin{proof}
		Suppose $\exists$ FI cubes $\Lambda^{(n)}_{L_k}(\textbf{x}^{(1)}),\cdots,\Lambda^{(n)}_{L_k}(\textbf{x}^{(12)})\subset \Lambda^{(n)}_{L_{k+1}}(\textbf{u})$ such that any two of them are separable. From Lemma \ref{separable}, $\forall$ pair
		$\Lambda^{(n)}_{L_k}(\textbf{x}^{(2i-1)}), \Lambda^{(n)}_{L_k}(\textbf{x}^{(2i)})$,  the respective (random) operators $\textbf H^{(n)}_{\Lambda^{(n)}_{L_k}(\textbf{x}^{(2i-1)})}$, $\textbf H^{(n)}_{\Lambda^{(n)}_{L_k}(\textbf{x}^{(2i)})}$ are independent, and so are their spectra and Green's functions. Moreover, the pairs of operators 
		\begin{align}\label{hpair}
			\Big(\textbf H^{(n)}_{\Lambda^{(n)}_{L_k}(\textbf{x}^{(2i-1)})},\textbf H^{(n)}_{\Lambda^{(n)}_{L_k}(\textbf{x}^{(2i)})} \Big),\ \ i=1,\cdots,6
		\end{align}
		form an independent family. Now for $i=1,\cdots, 6$, set 
		\begin{align*}
			\mathcal{A}_i=\Big\{\exists E\in I, \ \text{s.t. both}\  \Lambda^{(n)}_{L_{k}}(\textbf{x}^{(2i-1)})\  \text{and} \ \Lambda^{(n)}_{L_{k}}(\textbf{x}^{(2i)}) \ \text{are}\  \big(E,\frac{1}{2}\big)\text{-S} \Big\}.
		\end{align*}	
		Then from the property  $(\bm{SS},N,I,k,n)$ , $\mathbb{P}(\mathcal{A}_i)\leq L_k^{-2p^{(n)}}$ and by virtue of independence of the events $\mathcal{A}_1,\cdots,\mathcal{A}_{6}$, we obtain 
		\begin{align*}\mathbb{P}\Big(\bigcap_{i=1}^{6}\mathcal{A}_i\Big)\leq L_k^{-12p^{(n)}}.\end{align*}
		To complete the proof, note that the number of all possible collections of $12$ cubes of radius $L_k$ inside $\Lambda^{(n)}_{L_{k+1}}(\textbf{u})$  including fully interactive ones, is bounded by $(2L_{k+1}+1)^{12nd}/12!$. Dividing $12$ cubes into $6$ pairs and applying the inductive assumption of the lemma, we obtain
		\begin{align*}
			\mathbb{P}\Big \{K_{FI}\big(\Lambda^{(n)}_{L_{k+1}}(\textbf{u})\big) \geq 12 \Big\}&\leq \frac{(2L_{k+1}+1)^{12nd}}{12!}L_{k+1}^{-\frac{12p^{(n)}}{4}}\\ 
			&\leq \frac{3^{12nd}}{12!}L_{k+1}^{-\frac{12p^{(n)}}{4}+12nd}\\
			& \leq C'(n,d)L^{-\frac{7}{3}p^{(n)}}_{k+1}.
		\end{align*}
	\end{proof}

	\begin{lem}\label{FIs+1}
		Assume that the property $(\bm{SS},N,I,k-1,\tilde{n})$ holds true for all $\tilde{n} \in [1,n-1]$ and the property $(\bm{SS},N,I,k,n)$ holds true. Then for any pair of separable $n$-particle $\mathrm{FI}$ cubes $\Lambda^{(n)}_{L_{k+1}}(\textbf{u})$, $\Lambda^{(n)}_{L_{k+1}}(\textbf{v})$, the following bound holds true:
		\begin{align}\label{msak+1fi}
			\mathbb{P}\Big\{\exists E\in I, \ {\rm s.t.\ both}\  \Lambda^{(n)}_{L_{k+1}}(\textbf{u})\  {\rm and} \ \Lambda^{(n)}_{L_{k+1}}(\textbf{v}) \ {\rm are}\  \big(E,\frac{1}{2}\big)\text{-}{\rm S} \Big\}\leq \frac{1}{3} L_{k+1}^{-2p^{(n)}}
		\end{align}	
		provided that $L_0$ large enough.
	\end{lem}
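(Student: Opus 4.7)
The plan is to run the standard MSA double-cube argument adapted to the FI case. Since $\Lambda^{(n)}_{L_{k+1}}(\textbf{u})$ and $\Lambda^{(n)}_{L_{k+1}}(\textbf{v})$ are FI and separable, Lemma \ref{separable} ensures they are completely separable, hence $\textbf H^{(n)}_{\Lambda^{(n)}_{L_{k+1}}(\textbf{u})}$ and $\textbf H^{(n)}_{\Lambda^{(n)}_{L_{k+1}}(\textbf{v})}$ depend on disjoint IID random variables and are stochastically independent. The strategy is to identify a short list of small-probability ``bad'' events, outside of which at least one of the two cubes is $(E,\tfrac12)$-NS for every $E\in I$.

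For $\textbf w\in\{\textbf u,\textbf v\}$ define the counting event $\mathcal K(\textbf w)=\{K_{FI}(\Lambda^{(n)}_{L_{k+1}}(\textbf w))\ge 12\}\cup\{K_{PI}(\Lambda^{(n)}_{L_{k+1}}(\textbf w))\ge 2\}$, and the joint resonance event $\mathcal R=\{\exists E\in I\ \text{s.t.\ both cubes are }E\text{-R}\}$. On the complement $\mathcal K(\textbf u)^c\cap\mathcal K(\textbf v)^c\cap\mathcal R^c$ I would argue as follows: pick $E\in I$ and assume WLOG that $\Lambda^{(n)}_{L_{k+1}}(\textbf u)$ is $E$-NR. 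Since there are at most $11$ pairwise separable FI $(E,\tfrac12)$-S sub-cubes and at most one PI $(E,\tfrac12)$-S sub-cube of radius $L_k$ inside $\Lambda^{(n)}_{L_{k+1}}(\textbf u)$, Lemma \ref{weaks}(a) lets us cover all singular $L_k$-sub-cubes by a uniformly bounded number of enlarged cubes. The coupling Lemma \ref{coupling}, together with Lemma \ref{SC} and the $E$-NR hypothesis, then promotes the $(E,\tfrac12)$-NS property from these few covering sets to the entire cube $\Lambda^{(n)}_{L_{k+1}}(\textbf u)$ via the geometric resolvent inequality, contradicting $\mathcal S$. Hence $\mathcal S\subseteq\mathcal K(\textbf u)\cup\mathcal K(\textbf v)\cup\mathcal R$.

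The counting probabilities are already in hand: Lemma \ref{kfi} gives $\mathbb P(\{K_{FI}\ge 12\})\le C'(n,d)L_{k+1}^{-\frac73 p^{(n)}}$ and Corollary \ref{kpi} gives $\mathbb P(\{K_{PI}\ge 2\})\le \tilde C_1(n,d,\rho,\kappa)L_{k+1}^{-\frac{17}{8}p^{(n)}}$; both exponents exceed $2p^{(n)}$. To bound $\mathbb P(\mathcal R)$ I would condition on the random potential restricted to $\Pi\Lambda^{(n)}_{L_{k+1}}(\textbf u)$, which fixes the spectrum of $\textbf H^{(n)}_{\Lambda^{(n)}_{L_{k+1}}(\textbf u)}$; by independence and Theorem \ref{wegnern}, a Wegner estimate applied to $\textbf H^{(n)}_{\Lambda^{(n)}_{L_{k+1}}(\textbf v)}$ at each of the at most $|\Lambda^{(n)}_{L_{k+1}}(\textbf u)|\le (3L_{k+1})^{nd}$ eigenvalues yields $\mathbb P(\mathcal R)\lesssim L_{k+1}^{2nd-\beta\rho}$, which using $\beta\rho\ge 18^N p_0/2=9\cdot 18^{N-n}\cdot p^{(n)}$ is comfortably absorbed into the target.

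The main obstacle is the geometric step in the middle paragraph: verifying that the coupling lemma actually applies in the FI setting with long-range hopping. The hopping $\textbf T(\textbf x,\textbf y)$ is only polynomially decaying of order $r$, and so every application of the resolvent identity pays a Sobolev-norm loss determined by $r$ and by the sizes of the enlarged covering cubes. Showing that this loss is dominated by the $E$-NR gain $L_{k+1}^{\tau_n}$ and by the $(E,\tfrac12)$-NS decay on the covering cubes (Remark \ref{decreas}) is what ultimately forces the lower bound on $r$ appearing in Theorem \ref{mainthm}, and it is the step that requires the most careful bookkeeping of the constants $\tau_n$, $r_n$, $s_0^{(n)}$ and of the factor $\tfrac{1}{3}$ in \eqref{msak+1fi}.
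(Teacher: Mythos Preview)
Your proposal is correct and matches the paper's argument: the same decomposition $\mathcal S\subset\mathcal{FI}\cup\mathcal{PI}\cup\mathcal R$, the same bounds via Corollary \ref{kpi}, Lemma \ref{kfi}, and Theorem \ref{wegnern}, and the same appeal to Lemmas \ref{coupling} and \ref{SC} for the deterministic step. The paper simply applies the two-volume Stollmann bound (Theorem \ref{wegnern}) directly to the separable pair rather than conditioning, and treats the coupling step as routine since the parameter relations (\ref{condition}) are already in force---so what you flag as the ``main obstacle'' is in fact already absorbed into the global setup and requires no additional bookkeeping here.
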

	
	\begin{proof}
		Consider the following events
		\begin{align*}
			\mathcal{S}&=\Big\{\exists E\in I, \ \text{s.t. both}\  \Lambda^{(n)}_{L_{k+1}}(\textbf{u})\  \text{and} \ \Lambda^{(n)}_{L_{k+1}}(\textbf{v}) \ \text{are}\  \big(E,\frac{1}{2}\big)\text{-S}\Big\},\\
			\mathcal{FI}&=\Big\{ \max \Big[K_{FI}\big(\Lambda^{(n)}_{L_{k+1}}(\textbf{u})\big), K_{FI}\big(\Lambda^{(n)}_{L_{k+1}}(\textbf{v})\big) \Big] \geq 12 \Big\},\\
			\mathcal{PI}&=\left\{\max \Big[K_{PI}\big(\Lambda^{(n)}_{L_{k+1}}(\textbf{u})\big), K_{PI}\big(\Lambda^{(n)}_{L_{k+1}}(\textbf{v})\big) \Big] \geq 2 \right\},\\
			\mathcal{R}&=\Big\{\exists E\in I, \ \text{s.t. both}\  \Lambda^{(n)}_{L_{k+1}}(\textbf{u})\  \text{and} \ \Lambda^{(n)}_{L_{k+1}}(\textbf{v}) \ \text{are}\  E\text{-R} \Big\}.
		\end{align*}	
		
		Firstly, we show that $\mathcal{S} \subset \mathcal{FI} \cup \mathcal{PI} \cup \mathcal{R}$. Indeed, if $\omega \notin \mathcal{FI} \cup \mathcal{PI} \cup \mathcal{R}$, then for any $E\in I$, neither of the cubes $\Lambda^{(n)}_{L_{k+1}}(\textbf{u}), \Lambda^{(n)}_{L_{k+1}}(\textbf{v})$ can contain 
		\begin{itemize}
			\item Twelve or more separable $\mathrm{FI} \  (E,\frac{1}{2})$-S cubes of radius $L_k$.
			
			\item  Two or more separable $\mathrm{PI} \  (E,\frac{1}{2})$-S cubes of radius $L_k$.
		\end{itemize}
		
		Therefore, all  $\mathrm{FI}$ $(E,\frac{1}{2})$-S cubes of radius $L_k$ in  $\Lambda^{(n)}_{L_{k+1}}(\textbf{u})$ can be covered by at most $11$ cubes of radius $11nL_k$. From Lemma \ref{weaks}, all  $\mathrm{PI}$ $(E,\frac{1}{2})$-S cubes of radius $L_k$ in  $\Lambda^{(n)}_{L_{k+1}}(\textbf{u})$ can be covered by at most $n^n+1$ cubes  of radius $11nL_k$. 
		
		The same is true, of course, for  $\Lambda^{(n)}_{L_{k+1}}(\textbf{v})$.
		
		Also, since $\omega \notin \mathcal{R}$, at least one of the cubes $\Lambda^{(n)}_{L_{k+1}}(\textbf{u}), \Lambda^{(n)}_{L_{k+1}}(\textbf{v})$ is $E$-NR. From the coupling Lemma \ref{coupling} and Lemma \ref{SC}, we see that either  $\Lambda^{(n)}_{L_{k+1}}(\textbf{u})$ or $ \Lambda^{(n)}_{L_{k+1}}(\textbf{v})$ is $(E,\frac{1}{2})$-NS. Hence, the assumption $\omega \notin  \mathcal{FI} \cup \mathcal{PI} \cup \mathcal{R}$ implies  $\omega \notin \mathcal{S} $. 
		
		Finally, from
		Corollary \ref{kpi}, Lemma \ref{kfi} and Theorem \ref{wegnern}, we conclude that, for $L_0$ large enough
		\begin{equation*}
			\begin{split}
				\mathbb{P}\{\mathcal{S}\} &\leq \mathbb{P}\{\mathcal{FI}\}+\mathbb{P}\{\mathcal{PI}\}+\mathbb{P}\{\mathcal{R}\}\\
				&\leq \tilde{C}_1(n,d,\rho,\kappa)L^{-\frac{17}{8}p^{(n)}}_{k+1}+C'(n,d)L^{-\frac{7}{3}p^{(n)}}_{k+1}
				+\tilde{C}(n,d,\rho,\kappa)L^{-\beta \rho+(2n+1)d}_{k+1}\\
				&\leq \frac{1}{3}L^{-2p^{(n)}}_{k+1}.
			\end{split}
		\end{equation*}
	\end{proof}

	\subsection{Analysis of Mixed Pairs}\
	
	The analysis of a mixed pair of cubes $\Lambda^{(n)}_{L_{k+1}}(\textbf{u})$, $\Lambda^{(n)}_{L_{k+1}}(\textbf{v})$ requires a combination of methods used in the analysis of PI and FI pairs. The main  outcome of this section is the following lemma.
	\begin{lem}\label{mixmsa}
		Assume that the following properties are fulfilled:
		\begin{itemize}
			\item $(\bm{SS},N,I,k,\tilde{n})$ for all $\tilde{n} \in [1,n-1]$.
			\item $(\bm{SS},N,I,k-1,\tilde{n})$ for all $\tilde{n} \in [1,n-1]$.
			\item $(\bm{SS},N,I,k,n)$.
		\end{itemize}
		Let $\Lambda^{(n)}_{L_{k+1}}(\textbf {u}), \Lambda^{(n)}_{L_{k+1}}(\textbf {v})$ be separable $n$-particle cubes, where $\Lambda^{(n)}_{L_{k+1}}(\textbf{u})$ is $\mathrm{FI}$ and $\Lambda^{(n)}_{L_{k+1}}(\textbf{v})$ is $\mathrm{PI}$. Then 
		\begin{align*}
			\mathbb{P}\Big\{\exists E\in I, \ {\rm s.t.\ both}\  \Lambda^{(n)}_{L_{k+1}}(\textbf{u})\  {\rm and} \ \Lambda^{(n)}_{L_{k+1}}(\textbf{v}) \ {\rm are}\  \big(E,\frac{1}{2}\big)\text{-}{\rm S} \Big\}\leq  \frac{1}{3}L_{k+1}^{-2p^{(n)}}
		\end{align*}	
		provided that $L_0$ large enough.
	\end{lem}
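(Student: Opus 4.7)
The plan is to mirror the structure of Lemma \ref{FIs+1}, but to replace the ``FI side'' of the analysis on $\Lambda^{(n)}_{L_{k+1}}(\textbf{v})$ with the partially--tunneling machinery developed for PI cubes in Lemma \ref{p-pt} and Lemma \ref{ns}, while still using the counting estimates from Corollary \ref{kpi} and Lemma \ref{kfi} on the FI cube $\Lambda^{(n)}_{L_{k+1}}(\textbf{u})$. Concretely, I would introduce the events
\begin{align*}
\mathcal{S}       &=\Bigl\{\exists E\in I:\ \Lambda^{(n)}_{L_{k+1}}(\textbf{u})\text{ and }\Lambda^{(n)}_{L_{k+1}}(\textbf{v})\text{ are both }(E,\tfrac12)\text{-S}\Bigr\},\\
\mathcal{PT}(\textbf{v}) &=\Bigl\{\Lambda^{(n)}_{L_{k+1}}(\textbf{v})\text{ is }(\tfrac12,I)\text{-PT}\Bigr\},\\
\mathcal{FI}(\textbf{u})  &=\Bigl\{K_{FI}(\Lambda^{(n)}_{L_{k+1}}(\textbf{u}))\ge 12\Bigr\},\\
\mathcal{PI}(\textbf{u})  &=\Bigl\{K_{PI}(\Lambda^{(n)}_{L_{k+1}}(\textbf{u}))\ge 2\Bigr\},\\
\mathcal{R}       &=\Bigl\{\exists E\in I:\ \text{both cubes are }E\text{-R}\Bigr\},
\end{align*}
and establish the deterministic inclusion $\mathcal{S}\subset \mathcal{PT}(\textbf{v})\cup\mathcal{FI}(\textbf{u})\cup\mathcal{PI}(\textbf{u})\cup\mathcal{R}$.

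The inclusion is the main obstacle and the conceptual heart of the argument. Fix $\omega$ outside the four events on the right-hand side and fix an arbitrary $E\in I$; we must show that at least one of the two cubes is $(E,\tfrac12)$-NS. Since $\omega\notin\mathcal{R}$, at least one of $\Lambda^{(n)}_{L_{k+1}}(\textbf{u}),\Lambda^{(n)}_{L_{k+1}}(\textbf{v})$ is $E$-NR. If the PI cube $\Lambda^{(n)}_{L_{k+1}}(\textbf{v})$ is $E$-NR, then because $\omega\notin\mathcal{PT}(\textbf{v})$ it is also $(\tfrac12,I)$-NPT, and Lemma \ref{ns} (applied for this $L_0\ge \hat L_0^*$) yields that it is $(E,\tfrac12)$-NS. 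If instead the FI cube $\Lambda^{(n)}_{L_{k+1}}(\textbf{u})$ is $E$-NR, then the bounds $K_{FI}(\Lambda^{(n)}_{L_{k+1}}(\textbf{u}))\le 11$ and $K_{PI}(\Lambda^{(n)}_{L_{k+1}}(\textbf{u}))\le 1$ (from $\omega\notin\mathcal{FI}(\textbf{u})\cup\mathcal{PI}(\textbf{u})$), together with Lemma \ref{weaks}, imply that all $(E,\tfrac12)$-S subcubes of radius $L_k$ inside $\Lambda^{(n)}_{L_{k+1}}(\textbf{u})$ can be covered by a bounded (depending only on $n$) number of cubes of radius $11nL_k$; the coupling Lemma \ref{coupling} combined with Lemma \ref{SC} then upgrades $E$-NR to $(E,\tfrac12)$-NS exactly as in Lemma \ref{FIs+1}. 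Either way one of the two cubes is NS, so $\omega\notin\mathcal{S}$.

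The remaining step is purely quantitative: use the union bound together with Lemma \ref{p-pt}, Corollary \ref{kpi}, Lemma \ref{kfi}, and Theorem \ref{wegnern} to estimate
\begin{equation*}
\mathbb{P}(\mathcal{S})\le 3^{2(n-1)d}L_{k+1}^{-\frac{p^{(n-1)}}{2}+2(n-1)d}
+\tilde C_1(n,d,\rho,\kappa)L_{k+1}^{-\frac{17}{8}p^{(n)}}
+C'(n,d)L_{k+1}^{-\frac{7}{3}p^{(n)}}
+n\kappa^{-1}4^{\rho}3^{(2n+1)d}L_{k+1}^{-\beta\rho+(2n+1)d}.
\end{equation*}
Each exponent was already checked in Lemma \ref{msapi} and Lemma \ref{FIs+1} to dominate $-2p^{(n)}$ using the chosen relations $p^{(n-1)}=18p^{(n)}$, $p_0\ge 20Nd$ and $\beta\rho\ge \tfrac12\cdot 18^{N}p_0$, so for $L_0$ sufficiently large each of the four terms is $\le \tfrac{1}{12}L_{k+1}^{-2p^{(n)}}$, which gives the stated bound $\tfrac{1}{3}L_{k+1}^{-2p^{(n)}}$.

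Note that unlike the FI/FI case of Lemma \ref{FIs+1}, I do not invoke independence between the random potentials on $\Lambda^{(n)}_{L_{k+1}}(\textbf{u})$ and $\Lambda^{(n)}_{L_{k+1}}(\textbf{v})$; separability of the pair is only used implicitly through the hypotheses of Theorem \ref{wegnern} for the double-resonance estimate on $\mathcal{R}$, and through Definition \ref{spear} to ensure the geometry underlying Lemma \ref{ns}. The delicate point, to which I would pay particular attention when filling in the details, is verifying that the PI-NPT geometry on $\Lambda^{(n)}_{L_{k+1}}(\textbf{v})$ interacts correctly with the hypothesis $\mathrm{dist}(\Pi\Lambda^{(n')},\Pi\Lambda^{(n'')})>\mathrm{r}_0$ required by Lemma \ref{ns}, which follows automatically from the canonical PI decomposition for $L_{k+1}$ large but should be noted explicitly.
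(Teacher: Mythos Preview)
Your proposal is correct and follows essentially the same approach as the paper: you introduce the same four auxiliary events $\mathcal{PT}(\textbf{v})$, $\mathcal{FI}(\textbf{u})$, $\mathcal{PI}(\textbf{u})$, $\mathcal{R}$, establish the deterministic inclusion $\mathcal{S}\subset \mathcal{PT}(\textbf{v})\cup\mathcal{FI}(\textbf{u})\cup\mathcal{PI}(\textbf{u})\cup\mathcal{R}$ via Lemma~\ref{ns} on the PI side and the coupling argument (Lemma~\ref{coupling}~$+$~Lemma~\ref{SC}) on the FI side, and then bound each probability by Lemma~\ref{p-pt}, Corollary~\ref{kpi}, Lemma~\ref{kfi}, and Theorem~\ref{wegnern}. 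The only cosmetic difference is that the paper reaches the inclusion through a two-step set decomposition $\mathcal{S}\subset \mathcal{PT}(\textbf{v})\cup\mathcal{R}\cup\bigl[\mathcal{S}\cap(\mathcal{PT}(\textbf{v}))^{C}\cap\mathcal{R}^{C}\bigr]$ before showing the last piece lies in $\mathcal{FI}(\textbf{u})\cup\mathcal{PI}(\textbf{u})$, whereas you argue the full inclusion in one contrapositive step; the content is identical.
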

	
	\begin{proof} 
		Consider the following  events
		\begin{align*}
			\mathcal{S}&=\Big\{\exists E\in I, \ \text{s.t. both}\  \Lambda^{(n)}_{L_{k+1}}(\textbf{u})\  \text{and} \ \Lambda^{(n)}_{L_{k+1}}(\textbf{v}) \ \text{are}\  \big(E,\frac{1}{2}\big)\text{-S}\Big\},\\
			\mathcal{FI}(\textbf{u})&=\left\{\  K_{FI}\left(\Lambda^{(n)}_{L_{k+1}}(\textbf{u})\right)\geq 12 \right\},\\
			\mathcal{PI}(\textbf{u})&=\left\{\  K_{PI}\left(\Lambda^{(n)}_{L_{k+1}}(\textbf{u})\right)\geq 2 \right\},\\
			\mathcal{PT}(\textbf{v})&=\Big\{\ \Lambda^{(n)}_{L_{k+1}}(\textbf{v}) \text{ is}\  \big(\frac{1}{2},I\big)\text{-PT} \Big\},\\
			\mathcal{R}&=\Big\{\exists E\in I, \ \text{s.t. both}\  \Lambda^{(n)}_{L_{k+1}}(\textbf{u})\  \text{and} \ \Lambda^{(n)}_{L_{k+1}}(\textbf{v}) \ \text{are}\  E\text{-R} \Big\}.
		\end{align*}	
		Then 
		\begin{align*}
			\mathcal{S}&=[\mathcal{S}\cap\mathcal{PT}(\mathbf{v})]\cup[\mathcal{S}\cap(\mathcal{PT}(\mathbf{v}))^C]\\
			&\subset \mathcal{PT}(\mathbf{v})\cup[\mathcal{S}\cap(\mathcal{PT}(\mathbf{v}))^C]\\
			&= \mathcal{PT}(\mathbf{v})\cup[\mathcal{S}\cap(\mathcal{PT}(\mathbf{v}))^C\cap\mathcal{R}]\cup[\mathcal{S}\cap(\mathcal{PT}(\mathbf{v}))^C\cap\mathcal{R}^C]\\
			&\subset \mathcal{PT}(\mathbf{v})\cup\mathcal{R}\cup[\mathcal{S}\cap(\mathcal{PT}(\mathbf{v}))^C\cap\mathcal{R}^C].
		\end{align*}
		
		Next, we consider the event  $\mathcal{S}\cap(\mathcal{PT}(\mathbf{v}))^C\cap\mathcal{R}^C$. If $\omega\in\mathcal{S}\cap(\mathcal{PT}(\mathbf{v}))^C\cap\mathcal{R}^C$, then for some $E\in I$, one of the cubes $\Lambda^{(n)}_{L_{k+1}}(\textbf{u}), \Lambda^{(n)}_{L_{k+1}}(\textbf{v})$ is $E$-NR. Since $\Lambda^{(n)}_{L_{k+1}}(\textbf{v})$ is $(\frac{1}{2},I)$-NPT, if the cube $\Lambda^{(n)}_{L_{k+1}}(\textbf{v})$ is $E$-NR for some $E\in I$,  then  $\Lambda^{(n)}_{L_{k+1}}(\textbf{v})$ is $(E,\frac{1}{2})$-NS by Lemma \ref{ns}. This violates the definition of the event $S$. Therefore $\Lambda^{(n)}_{L_{k+1}}(\textbf{u})$ must be  $E$-NR and $(E,\frac{1}{2})$-S. Owing to Lemma \ref{FIs+1}, it must be 
		
		\begin{align*}
			&K_{FI}\left(\Lambda^{(n)}_{L_{k+1}}(\textbf{u})\right)\geq 12,\\
			&K_{PI}\left(\Lambda^{(n)}_{L_{k+1}}(\textbf{u})\right)\geq 2,
		\end{align*}
		for $L_0$ large enough,
		which means 
		\begin{align*}
			\mathcal{S}\cap(\mathcal{PT}(\mathbf{v}))^C\cap\mathcal{R}^C &\subseteq \Big\{\Lambda^{(n)}_{L_{k+1}}(\textbf{u}) \text{ is } E\text{-NR and}\  \big(E,\frac{1}{2}\big)\text{-S for some } E\in I\Big\}\\
			&\subseteq \mathcal{FI}(\textbf{u})\cup \mathcal{PI}(\textbf{u}).
		\end{align*}
		Finally, from Lemma \ref{msapi}, Corollary \ref{kpi}, Lemma \ref{kfi} and Theorem \ref{wegnern}, we can get 
		\begin{align*}
			\mathbb{P}\{\mathcal{S}\}&\leq\mathbb{P}\{\mathcal{PT}(\textbf{v})\}+\mathbb{P}\{\mathcal{R}\}+\mathbb{P}\{ \mathcal{S}\cap(\mathcal{PT}(\mathbf{v}))^C\cap\mathcal{R}^C\}\\
			&\leq \mathbb{P}\{\mathcal{PT}(\textbf{v})\}+\mathbb{P}\{\mathcal{R}\}+\mathbb{P}\{\mathcal{FI}(\textbf{u})\}+\mathbb{P}\{\mathcal{PI}(\textbf{u})\} \\
			&\leq \tilde{C}(N,d,\rho,\kappa)L^{-8p^{(n)}}_{k+1}
			+\tilde{C}(n,d,\rho,\kappa)L^{-\beta \rho+(2n+1)d}_{k+1}
			+C'(n,d)L^{-\frac{7}{3}p^{(n)}}_{k+1}\\
			&\quad\quad+\tilde{C}_1(N,d,\rho,\kappa)L^{-\frac{17}{8}p^{(n)}}_{k+1} \\
			&\leq \frac{1}{3} L^{-2p^{(n)}}_{k+1}
		\end{align*}
		for $L_0$ large enough.
	\end{proof}
	
	\section{The initialization of the MSA}\label{0-step}
	\begin{defn}
		Fix $N\geq 2$ and $1\leq n \leq N$. Given $E \in I$ and $\gamma>0$, an  $n$-particle cube  $\Lambda_L^{(n)}(\textbf{u})$ is called $(E, \gamma)$-\textit{Good}, if
		\begin{equation}
			\min_{\textbf x \in \Lambda^{(n)}_{L}(\textbf{u})}|\mathbf{V}(\textbf{x},\omega)-E|>\gamma.
		\end{equation}
		Otherwise, we call it $(E, \gamma)$-\textit{Bad}.
	\end{defn}
	\begin{lem}\label{NS-bad}
		Let $\gamma\geq \frac{1}{4} n^{-\frac{1}{\rho}}3^{-\frac{(2n+1)d}{\rho}}\kappa^{\frac{1}{\rho}}L_0^{-\frac{2p^{(n)}+(2n+1)d}{\rho}}$ and $\tau_n >\frac{2p^{(n)}+(2n+1)d}{\rho}$. There exists 
		$$\bar{L}_{0}^*= \bar{L}_{0}^*(n,d,\kappa,\rho,M,\mathrm{r}_0,s_0^{(n)},r_n),$$ 
		such that if $L_0\geq  \bar{L}_{0}^*$, there is some $\bar{g}_0^*=\bar{g}_0^*(n,d,M,\mathrm{r}_0,s^{(n)}_0,\gamma)$, so that for $|g|>\bar{g}_0^*$, if the cube $\Lambda^{(n)}_{L_0}(\textbf{u})$ is $(E,\gamma)$-Good, then it  is also  $(E,\frac{1}{2})$-NS.
	\end{lem}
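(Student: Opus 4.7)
The plan is to treat the Green's function as a perturbation of an invertible diagonal operator and then quantify everything via the Sobolev-norm perturbation argument stated earlier.

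First I would split the finite-volume operator as
\[
\textbf H^{(n)}_{\Lambda^{(n)}_{L_0}(\textbf u)}-E=\textbf D+\textbf P,\qquad
\textbf D=\textbf V_{\Lambda^{(n)}_{L_0}(\textbf u)}-E+\tfrac{1}{g}\textbf U_{\Lambda^{(n)}_{L_0}(\textbf u)},\qquad
\textbf P=\tfrac{1}{g}\textbf T_{\Lambda^{(n)}_{L_0}(\textbf u)} .
\]
Since $\textbf D$ is purely diagonal and $|\textbf U(\textbf x)|\leq\binom{n}{2}M_1$, the $(E,\gamma)$-Good hypothesis together with $|g|$ large give $|\textbf D(\textbf x)|\geq\gamma-\binom{n}{2}M_1/|g|\geq\gamma/2$. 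Because $\textbf D^{-1}$ is supported on $\textbf v=0$, Definition \ref{sobolev} yields $\|\textbf D^{-1}\|_s\leq 2\sqrt{C_0(s_0^{(n)})}/\gamma$ for every $s\in[s_0^{(n)},r_n]$.

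Next I would bound the Sobolev norm of the hopping. From Assumption A, $\textbf T(\textbf x,\textbf y)$ is nonzero only when $\textbf v=\textbf y-\textbf x$ has exactly one nonvanishing particle coordinate $v\in\mathbb Z^d\setminus\{0\}$, and then equals $\langle v\rangle^{-r}$. Hence
\[
\|\textbf T\|_s^2\leq C_0(s_0^{(n)})\, n\sum_{v\in\mathbb Z^d\setminus\{0\}}\langle v\rangle^{2s-2r}\leq C(n,d,r_n,s_0^{(n)}),
\]
which is finite because $s\leq r_n<r-nd/2$ implies $2r-2s>d$. Consequently $\|\textbf P\|_s\leq C(s)/|g|$, and choosing $|g|$ large enough gives $\|\textbf P\|_{s_0^{(n)}}\|\textbf D^{-1}\|_{s_0^{(n)}}\leq 1/2$, so that the perturbation argument (\ref{pa1})--(\ref{pa2}) applies and produces a left inverse
\[
\bigl\|\textbf G^{(n)}_{\Lambda^{(n)}_{L_0}(\textbf u)}(E)\bigr\|_s
\leq C(s)\Bigl(\|\textbf D^{-1}\|_s+\|\textbf D^{-1}\|_{s_0^{(n)}}^{\,2}\|\textbf P\|_s\Bigr)
\leq \frac{C'(s,n,d,M_1,r_n,s_0^{(n)})}{\gamma}
\]
for all $s\in[s_0^{(n)},r_n]$.

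Finally I would cash in the quantitative lower bound on $\gamma$. Since $\gamma\geq \tfrac14 n^{-1/\rho}3^{-(2n+1)d/\rho}\kappa^{1/\rho}L_0^{-(2p^{(n)}+(2n+1)d)/\rho}$ and $\tau_n>(2p^{(n)}+(2n+1)d)/\rho$, setting $\alpha:=\tau_n-(2p^{(n)}+(2n+1)d)/\rho>0$ gives
\[
\frac{C'(s)}{\gamma}\leq \frac{4\,C'(s)\,n^{1/\rho}3^{(2n+1)d/\rho}}{\kappa^{1/\rho}}\,L_0^{\tau_n-\alpha}\leq L_0^{\tau_n+s/2}
\]
once $L_0\geq \bar L_0^{*}$ is large enough to absorb the prefactor into $L_0^{\alpha+s/2}$; this is where the threshold $\bar L_0^{*}(n,d,\kappa,\rho,M,\mathrm r_0,s_0^{(n)},r_n)$ enters. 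The threshold on $g$ comes from the two smallness requirements $\binom{n}{2}M_1/|g|\leq\gamma/2$ and $\|\textbf P\|_{s_0^{(n)}}\|\textbf D^{-1}\|_{s_0^{(n)}}\leq 1/2$, which together define $\bar g_0^{*}$. The main (but minor) obstacle is bookkeeping between the multiple small parameters: the perturbation threshold involves $\gamma$ in the denominator, so one must verify that the $g$-largeness required for invertibility does not conflict with the lower bound on $\gamma$, and that the polynomial in $L_0$ coming from $\gamma^{-1}$ is strictly dominated by $L_0^{\tau_n+s/2}$ uniformly in $s\in[s_0^{(n)},r_n]$, which is guaranteed by the strict inequality $\tau_n>(2p^{(n)}+(2n+1)d)/\rho$.
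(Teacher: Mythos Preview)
Your proof is correct and follows essentially the same route as the paper: bound the inverse of the diagonal part by $C/\gamma$, verify that the off-diagonal piece scaled by $1/g$ is a small perturbation in the $s_0^{(n)}$-norm, apply the perturbation argument \eqref{pa1}--\eqref{pa2}, and then absorb the constant into $L_0^{\tau_n}$ using the strict inequality $\tau_n>(2p^{(n)}+(2n+1)d)/\rho$. The only cosmetic difference is that you fold the (diagonal) interaction $\tfrac{1}{g}\mathbf U$ into $\mathbf D$ and treat only $\tfrac{1}{g}\mathbf T$ as the perturbation, whereas the paper keeps $\mathbf D=\mathbf V-E$ and puts $\tfrac{1}{g}(\mathbf T+\mathbf U)$ in the perturbation; both choices lead to the same estimate.
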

	\begin{proof}
		Consider the truncated diagonal operator 
		$$\mathbf{D}_{\Lambda^{(n)}_{L_0}(\textbf{u})}=\mathbf{V}_{\Lambda^{(n)}_{L_0}(\textbf{u})}(\cdot,\omega)-E.$$
		From  Definition \ref{sobolev}, we have 
		\begin{equation}
			\Big\|\mathbf{D}^{-1}_{\Lambda^{(n)}_{L_0}}\Big\|_{s} \leq \frac{C(n,d,s)}{\gamma}, \quad s\geq s^{(n)}_0.
		\end{equation}
		Besides, we see that 
		\begin{equation}
			\Big\|g^{-1}\big(\mathbf{T}_{\Lambda^{(n)}_{L_0}(\textbf{u})}+\mathbf{U}_{\Lambda^{(n)}_{L_0}(\textbf{u})}\big)\mathbf{D}_{\Lambda^{(n)}_{L_0}(\textbf{u})}^{-1}\Big \|_{s^{(n)}_0} \leq C(n,d,M,s^{(n)}_0) g^{-1} \gamma^{-1} \leq \frac{1}{2},
		\end{equation}
		if $g\geq \bar g^*_0=2 C(n,d,M,r_0,s^{(n)}_0) \gamma^{-1}$. Then by the \textbf{perturbation argument} (\ref{pa1})-(\ref{pa2}) and 
		\begin{equation*}
			\mathbf{H}^{(n)}_{\Lambda^{(n)}_{L_0}(\textbf{u})}-E= g^{-1}\Big(\mathbf{T}_{\Lambda^{(n)}_{L_0}(\textbf{u})}+\mathbf{U}_{\Lambda^{(n)}_{L_0}(\textbf{u})}\Big)+\mathbf{D}_{\Lambda^{(n)}_{L_0}(\textbf{u})},
		\end{equation*}
		we have 
		\begin{equation*}
			\Big\|(\mathbf{H}^{(n)}_{\Lambda^{(n)}_{L_0}(\textbf{u})}-E)^{-1}\Big\|_{s^{(n)}_0} \leq 2 \Big\|\mathbf{D}_{\Lambda^{(n)}_{L_0}(\textbf{u})}^{-1}\Big\|_{s^{(n)}_0} \leq 2C(n,d,s)\gamma^{-1},
		\end{equation*}
		and 
		\begin{equation*}
			\begin{split}
				\Big\|(\mathbf{H}^{(n)}_{\Lambda^{(n)}_{L_0}(\textbf{u})}-E)^{-1}\Big\|_{s} &\leq C(s,n,d) (\gamma^{-1}+g^{-1}\gamma^{-2}) \\
				&\leq C(n,d,M,s,s^{(n)}_0)\gamma^{-1}.
			\end{split}
		\end{equation*}
		Let $s^{(n)}_0\leq s\leq r_n$. In order to show that the cube $\Lambda^{(n)}_{L_0}(\textbf{u})$ is $(E,\frac{1}{2})$-NS, it suffice to show that 
		\begin{equation*}
			C(n,d,M,r_0,r_n,s_0^{(n)})\gamma^{-1}\leq C(n,d,M,\kappa, \rho,s_0^{(n)},r_n)L_0^{\frac{2p^{(n)}+(2n+1)d}{\rho}} \leq L_0^{\tau_n}.
		\end{equation*}
		This can be verified by taking $\tau_n>\frac{2p^{(n)}+(2n+1)d}{\rho} $ and $L_0\geq \bar{L}_{0}^*(n,d,\kappa,\rho,M,s^{(n)}_0,r_n)$  large enough.
	\end{proof}
	
	\begin{thm}\label{PNS}
		Let $\mu$ be H\"older continuous of order $\rho>0$ (i.e., $\mathcal{K}_\rho(\mu)>0$). Fix $0<\kappa< \mathcal{K}_\rho(\mu)$ and $\tau_n>\frac{2p^{(n)}+(2n+1)d}{\rho}$. Then there exists 
		\begin{equation*}
			\tilde L^*_{0}=\tilde L^*_{0}\big(n,d,\kappa,\rho,M,\mathrm{r}_0,p^{(n)},\tau_n,r_n,s_0^{(n)}\big) 
		\end{equation*}
		such that if $L_0>L^*_{0}$, there is some $\tilde{g}_0^*=\tilde{g}_0^*\big(\kappa,\rho,n,d,p^{(n)},\tau_n,r_n,s_0^{(n)}\big)$, so that for $|g|>\tilde{g}_0^*$ and any pair of separable cubes  $ \Lambda^{(n)}_{L_0}(\textbf{u})$ and $ \Lambda^{(n)}_{L_0}(\textbf{v})$,
		\begin{align}\label{P0ns}
			\mathbb{P}\Big\{{\exists}\  E\in \mathbb{R},\ {\rm s.t. both}\ \Lambda^{(n)}_{L_0}(\textbf{u})\  {\rm and}\  \Lambda^{(n)}_{L_0}(\textbf{v})\ {\rm  are\ }  \big(E,\frac{1}{2}\big)\text{-{\rm{NS}}}\Big\}
			\leq L_0^{-2p^{(n)}}.
		\end{align}
	\end{thm}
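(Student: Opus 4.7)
The plan is to reduce the ``both singular'' event to a pointwise small-ball estimate via Lemma \ref{NS-bad}, and then combine separability with H\"older continuity of $\mu$ to recover the $L_0^{-2p^{(n)}}$ bound. The statement is an initialization step, so the whole difficulty is encapsulated in a single-scale Wegner-type argument; there is no iteration to run.

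\textbf{Step 1: Reduction to a small-ball event.} The contrapositive of Lemma \ref{NS-bad} says that for $|g|\geq \bar{g}_0^*$ and $L_0\geq \bar{L}_0^*$, any $(E,\tfrac{1}{2})$-S cube $\Lambda^{(n)}_{L_0}(\textbf{w})$ must fail the Good condition: there exists $\textbf{z}\in\Lambda^{(n)}_{L_0}(\textbf{w})$ with $|\mathbf{V}(\textbf{z},\omega)-E|\leq \gamma$, where I choose $\gamma$ exactly at the threshold permitted by Lemma \ref{NS-bad},
\begin{equation*}
\gamma:=\tfrac{1}{4}n^{-1/\rho}3^{-(2n+1)d/\rho}\kappa^{1/\rho}L_0^{-(2p^{(n)}+(2n+1)d)/\rho}.
\end{equation*}
Applying this to $\textbf{u}$ and $\textbf{v}$ simultaneously and eliminating $E$ by the triangle inequality, the event in \eqref{P0ns} is contained in
\begin{equation*}
\mathcal{B}\;:=\bigcup_{\textbf{x}\in\Lambda^{(n)}_{L_0}(\textbf{u}),\;\textbf{y}\in\Lambda^{(n)}_{L_0}(\textbf{v})}\bigl\{\,|\mathbf{V}(\textbf{x},\omega)-\mathbf{V}(\textbf{y},\omega)|\leq 2\gamma\bigr\}.
\end{equation*}

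\textbf{Step 2: Independence from separability.} For a fixed pair $(\textbf{x},\textbf{y})$, weak separability (Definition \ref{spear}) provides, WLOG, a non-empty $J\subseteq\{1,\dots,n\}$ with
\begin{equation*}
\Pi_J\Lambda^{(n)}_{L_0+\mathrm{r}_0}(\textbf{u})\cap\bigl(\Pi_{J^C}\Lambda^{(n)}_{L_0+\mathrm{r}_0}(\textbf{u})\cup\Pi\Lambda^{(n)}_{L_0+\mathrm{r}_0}(\textbf{v})\bigr)=\emptyset.
\end{equation*}
Hence the family $\{V(x_j,\omega)\}_{j\in J}$ is independent of every other random variable entering $\mathbf{V}(\textbf{x},\omega)-\mathbf{V}(\textbf{y},\omega)$. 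I would group $\{x_j:j\in J\}$ by distinct sites $\tilde{x}_k$ with multiplicities $m_k\geq 1$, and then condition on all other variables. The event $\{|\mathbf{V}(\textbf{x},\omega)-\mathbf{V}(\textbf{y},\omega)|\leq 2\gamma\}$ then reduces to a one-dimensional event $\{V(\tilde{x}_{k_0},\omega)\in I_{k_0}\}$ for some deterministic interval $I_{k_0}$ of length at most $4\gamma/m_{k_0}\leq 4\gamma$. H\"older continuity of $\mu$ bounds this conditional probability by $\kappa^{-1}(4\gamma)^{\rho}$, valid as soon as $4\gamma<\kappa_0$, which holds for $L_0$ large.

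\textbf{Step 3: Union bound and parameter matching.} Integrating out the conditioning and union-bounding over the at most $|\Lambda^{(n)}_{L_0}(\textbf{u})|\cdot|\Lambda^{(n)}_{L_0}(\textbf{v})|\leq 3^{2nd}L_0^{2nd}$ pairs,
\begin{equation*}
\mathbb{P}(\mathcal{B})\;\leq\;3^{2nd}L_0^{2nd}\cdot\kappa^{-1}(4\gamma)^{\rho}\;=\;n^{-1}3^{-d}L_0^{-2p^{(n)}-d}\;\leq\;L_0^{-2p^{(n)}},
\end{equation*}
where the explicit form of $\gamma$ was engineered precisely for this cancellation. The only real obstacle is Step 2: one must confirm that when several indices in $J$ collide at the same lattice site, the remaining free variable $V(\tilde{x}_{k_0},\omega)$ still governs the small-ball probability and that the integer rescaling by $m_{k_0}\geq 1$ can only shrink (never widen) the target interval, so the H\"older estimate is unaffected. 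Everything else is bookkeeping, and the thresholds on $L_0$ and $g$ are those already fixed by Lemma \ref{NS-bad}.
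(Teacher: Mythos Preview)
Your proof is correct and follows essentially the same approach as the paper: reduce via Lemma \ref{NS-bad} to the potential small-ball event, then exploit weak separability and H\"older continuity of $\mu$. The only difference is packaging: where you give a direct conditioning argument in Step~2, the paper invokes the two-volume Stollmann bound (Theorem~\ref{wegnern}) applied to the diagonal operators $\mathbf{V}^{(n)}_{\Lambda^{(n)}_{L_0}(\textbf{u})}$ and $\mathbf{V}^{(n)}_{\Lambda^{(n)}_{L_0}(\textbf{v})}$, obtaining $\mathbb{P}\{\text{dist}[\sigma,\sigma']\leq 2\gamma\}\leq n\,3^{(2n+1)d}L_0^{(2n+1)d}\kappa^{-1}(4\gamma)^\rho$, which is then matched to $L_0^{-2p^{(n)}}$ by the same choice of $\gamma$. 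Your direct argument is in fact a specialization of what underlies Theorem~\ref{wegnern} to the purely diagonal case, and it yields the slightly sharper prefactor $n^{-1}3^{-d}L_0^{-d}$; both reach the required bound.
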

	\begin{proof}
		From Lemma \ref{NS-bad} and Theorem \ref{wegnern}, for any pair  of separable  cubes  $ \Lambda^{(n)}_{L_0}(\textbf{u})$ and $ \Lambda^{(n)}_{L_0}(\textbf{v})$ , one has
		\begin{equation}\label{initial}
			\begin{split}
				\mathbb{P}&\left\{\exists \ E\in \mathbb{R}: {\rm\ both}\ \Lambda^{(n)}_{L_0}(\textbf{u})\  {\rm and}\  \Lambda^{(n)}_{L_0}(\textbf{v})\  {\rm are\ }  \big(E,\frac{1}{2}\big)\text{-S}\right\} \\
				&\leq   \mathbb{P}\left\{{\ \exists}\  E\in \mathbb{R}: \ {\rm \ both\ }\ \Lambda^{(n)}_{L_0}(\textbf{u})\  {\rm and}\  \Lambda^{(n)}_{L_0}(\textbf{v})\  {\rm are\ }  (E,\gamma)\text{-{\rm{Bad}}}\right\}\\
				& \leq \mathbb{P}\left\{\min_{\textbf{x}\in \Lambda^{(n)}_{L_0}(\textbf{u})}\ \min_{\textbf{y}\in \Lambda^{(n)}_{L_0}(\textbf{v})}|\mathbf{V}(\textbf{x},\omega)-\mathbf{V}(\textbf{y},\omega)|\leq 2\gamma \right\}\\
				&=\mathbb{P}\left\{\text{dist}\Big[\sigma\Big(\mathbf{V}^{(n)}_{\Lambda^{(n)}_{L_0}(\textbf{u})}\Big),\sigma\Big(\mathbf{V}^{(n)}_{\Lambda^{(n)}_{L_0}(\textbf{v})}\Big)\Big] \leq 2\gamma\right\}\\
				& \leq \ n 3^{(2n+1)d}L_0^{(2n+1)d} \kappa^{-1}(4\gamma)^{\rho}.
			\end{split}
		\end{equation}
		By choosing  $\gamma=\frac{1}{4} n^{-\frac{1}{\rho}}3^{-\frac{(2n+1)d}{\rho}}\kappa^{\frac{1}{\rho}}L_0^{-\frac{2p^{(n)}+(2n+1)d}{\rho}}$, the \textbf{RHS} of \eqref{initial} is bounded by $L_0^{-2p^{(n)}}$, which completes the proof.
	\end{proof}
	\begin{rem}
		Let $1\le n\leq N$, we can choose the largest $\tilde{L}_{0}^*$ and $\tilde{g}^*_0$ in Theorem \ref{PNS} such that the probability estimate holds for all $1\leq n\leq N$. Furthermore, let $L_1=L_0^4$, we can choose a bigger number  $\tilde{g}^{*}_1$ such that for $|g|>\tilde{g}^{*}_1$, the estimate (\ref{P0ns}) also holds for $n$-particle cubes of size $L_1$.
	\end{rem}
	
	\section{The conclusion of multi-particle multi-scale analysis}
	\begin{thm}
		Fix integers $N \geq 2 ,d\geq1, 1\leq n\leq N$ and  real numbers $\tau_n,p^{(n)},\beta,r_n$ as the assumptions of Theorem \ref{msa}. Then there exists 
		\begin{equation*}
			L^*_0= L^*_0\big(n,d,\kappa,\rho,M,\mathrm{r}_0,p^{(n)},\tau_n,r_n,s_0^{(n)}\big)\in \mathbb{N} 
		\end{equation*}
		such that for $L_0 \geq L^*_0$ and 
		$$|g| \geq g^*=g^*(N,d,\kappa, \rho,p_0,L_0,\beta) \in (0,+\infty),$$
		the property $(\bm{SS},N,I,k,n)$ holds true for all $k\geq 0$.
	\end{thm}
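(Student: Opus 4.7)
The proof plan is a double induction, with an outer induction on the particle number $n\in\{1,\dots,N\}$ and, for each fixed $n$, an inner induction on the scale index $k\ge 0$. The outer induction is forced upon us by Theorem~\ref{msa}: to infer $(\bm{SS}.N.I.k{+}1.n)$, the induction step of MSA requires the analogous property already at smaller particle counts, namely $(\bm{SS}.N.I.k.\tilde n)$ and $(\bm{SS}.N.I.k{-}1.\tilde n)$ for all $\tilde n\in[1,n-1]$.

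For the outer base case $n=1$, note that every $1$-particle cube is automatically \textrm{FI} (since $\mathbb{D}_0=\mathbb{Z}^d$ forces $\mathrm{dist}(u,\mathbb{D}_0)=0$), so the partially interactive and mixed cases do not arise, and the hypotheses of Theorem~\ref{msa} reduce to $(\bm{SS}.N.I.k.1)$ alone. I would initialize at $k=0$ using Theorem~\ref{PNS}, extend to $k=1$ via the Remark following Theorem~\ref{PNS} (which upgrades the initial estimate to scale $L_1=L_0^4$), and then iterate Theorem~\ref{msa} to propagate through all $k\ge 1$.

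For the outer inductive step, fix $n\ge 2$ and assume $(\bm{SS}.N.I.k.\tilde n)$ holds for every $k\ge 0$ and every $\tilde n\in[1,n-1]$. I would then run the inner induction on $k$ as follows. The base cases $k=0$ and $k=1$ come directly from Theorem~\ref{PNS} and its Remark (applied with the $n$-particle parameters $\tau_n,p^{(n)},r_n,s_0^{(n)}$), yielding $(\bm{SS}.N.I.0.n)$ and $(\bm{SS}.N.I.1.n)$ once $L_0$ exceeds $\tilde L_0^{*}$ and $|g|$ exceeds $\tilde g_1^{*}$. For the inductive step $k\rightsquigarrow k+1$ (with $k\ge 1$), both prerequisites $(\bm{SS}.N.I.k{-}1.n)$ and $(\bm{SS}.N.I.k.n)$ are supplied by the previous inner step, while $(\bm{SS}.N.I.k.\tilde n)$ and $(\bm{SS}.N.I.k{-}1.\tilde n)$ for $\tilde n<n$ are supplied by the outer hypothesis. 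Theorem~\ref{msa} then delivers $(\bm{SS}.N.I.k{+}1.n)$, closing the inner induction.

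The final step is to harmonize the constants. I would set
\[
L_0^{*}=\max_{1\le \tilde n\le n}\Bigl\{\tilde L_0^{*}(\tilde n,\dots),\ L_0^{*}(\tilde n,\dots)\Bigr\},\qquad g^{*}=\max_{1\le \tilde n\le n}\Bigl\{\tilde g_0^{*}(\tilde n,\dots),\ \tilde g_1^{*}(\tilde n,\dots),\ \tilde g^{*}(\tilde n,\dots)\Bigr\},
\]
where the inner constants come from Theorems~\ref{PNS} and \ref{msa}. These maxima are finite since $\tilde n$ ranges over the finite set $\{1,\dots,n\}$, and the explicit dependence $g^{*}=g^{*}(N,d,\kappa,\rho,p_0,L_0,\beta)$ in the statement reflects this aggregation. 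The only real point of caution is the monotone propagation of the smallness condition: since $L_k=L_0^{4^k}$ is strictly increasing in $k$, once $L_0\ge L_0^{*}$ the scale hypotheses required by Theorem~\ref{msa} at each step are automatic. Accordingly, I expect no genuine obstacle in this argument beyond a bookkeeping verification that the finitely many parameter thresholds can be taken common across $(\tilde n,k)$; all the substantive work has already been absorbed into Theorems~\ref{msa} and \ref{PNS}.
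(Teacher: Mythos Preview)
Your proposal is correct and follows essentially the same double-induction scheme as the paper: initialize $(\bm{SS}.N.I.0.n)$ and $(\bm{SS}.N.I.1.n)$ via Theorem~\ref{PNS} and its Remark, handle $n=1$ separately (the paper phrases this as ``any pair of disjoint single-particle cubes are completely separable'' rather than ``every $1$-particle cube is FI'', but the content is the same), and then propagate in $k$ using the outer hypothesis on $\tilde n<n$. The only cosmetic difference is that the paper cites Lemmas~\ref{msapi}, \ref{FIs+1}, and~\ref{mixmsa} separately at the induction step, whereas you invoke their aggregate Theorem~\ref{msa}.
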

	
	\begin{proof}
		We will prove that for any   $1\leq n\leq N$, property  $(\bm{SS},N,I,k,n)$ holds true by induction.
		\begin{itemize} 
			\item $(\bm{SS},N,I,0,n)$ and  $(\bm{SS},N,I,1,n)$  are valid for any $1\leq n \leq N$  by using Theorem \ref{PNS}.
			\item $(\bm{SS},N,I,k,1)$ for all $k\geq 0$ can be easily verified  by simplifying  our proofs in Section \ref{induction}, because any pair of  disjoint  single particle cubes are completely separable. 
			\item Now, suppose that for any $1\leq \tilde{n}<n$,  $(\bm{SS},N,I,k,\tilde{n})$ holds  for all $k\geq 0$. Also, suppose that $(\bm{SS},N,I,\tilde{k},n)$ holds true for all $\tilde{k} \leq k$. Then, one can conclude that 
			
			$\textbf{1}:$  Property $(\bm{SS},N,I,k+1,n)$ holds true for all pairs of PI cubes using Lemma \ref{msapi}.
			
			$\textbf{2}:$  Property $(\bm{SS},N,I,k+1,n)$ holds true for all pairs of FI cubes using Lemma \ref{FIs+1}.
			
			$\textbf{3}:$  Property $(\bm{SS},N,I,k+1,n)$ holds true for all pairs of mixed cubes using Lemma \ref{mixmsa}.
		\end{itemize}
		Finally, we complete the proof.
	\end{proof}

	\section{the proof of the main theorem}
	Let $\bm{\phi}(\textbf{x}) \in \mathbb{C}^{\mathbb{Z}^{Nd}}$ satisfy $\mathbf{H}^{(N)}_{\omega}\bm{\phi}(\textbf{x})=E \bm{\phi}(\textbf x)$. Assume further that $\textbf G^{(N)}_{\Lambda}(E)$ exists  for some $\Lambda \subset \mathbb{Z}^{Nd}$. Then for any $x\in \Lambda$,  we have 
	\begin{equation}\label{possion}
		\bm{\phi}(\textbf{x})=-g^{-1}\sum_{\substack{\textbf{x}' \in \Lambda \\ \textbf{x}'' \notin \Lambda}}\textbf G^{(N)}_{\Lambda}(E)(\textbf{x},\textbf{x}')\mathbf{T}(\textbf{x}',\textbf{x}'')\bm{\phi}(\textbf{x}'').
	\end{equation}

	\begin{defn} \textbf{(Generalized Eigenfunction)} 
		Let $\epsilon_1>0$. An energy $E$ is called an $\epsilon_1$-generalized eigenvalue, if there exists some $\bm{\phi} \in \mathbb{C}^{\mathbb{Z}^{Nd}}$ satisfy 
		\begin{equation}\label{geeifu}
			\bm{\phi}(\textbf 0)=1, \quad |\bm\phi(\textbf n)| \leq C(1+|\textbf  n|)^{\frac{Nd}{2}+\epsilon_1} 
		\end{equation}
		and $\mathbf{H}^{(N)}_{\omega}\bm{\phi}=E\bm{\phi}$. We call such $\bm{\phi}$ the $\epsilon_1$ generalized eigenfunction.
	\end{defn}
	
	The Shnol’s Theorem of \cite{H2019} in long-range operator case is 
	\begin{lem}[\cite{H2019}]
		Let $r-2Nd>\epsilon$. Let $\Sigma_{\omega,\epsilon}$ be the set of all $\epsilon$-generalized eigenvalues of $\mathbf{H}^{(N)}_{\omega}$. Then  we have 
		\begin{equation*}
			\Sigma_{\omega,\epsilon} \subset \sigma(\mathbf{H}^{(N)}_{\omega}), \quad  v_{\omega}(\sigma(\mathbf{H}^{(N)}_{\omega}) \backslash \Sigma_{\omega,\epsilon}) =0, 
		\end{equation*}
		where $v_{\omega}$ denotes some complete spectral measure of $\mathbf{H}^{(N)}_{\omega}$.
	\end{lem}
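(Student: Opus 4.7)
The plan is to establish the two assertions separately. For $\Sigma_{\omega,\epsilon}\subset\sigma(\mathbf{H}^{(N)}_\omega)$ I would build an explicit Weyl sequence from the polynomially bounded generalized eigenfunction, and for $v_\omega(\sigma(\mathbf{H}^{(N)}_\omega)\setminus\Sigma_{\omega,\epsilon})=0$ I would invoke a rigged-Hilbert-space / Berezansky-type direct integral decomposition, with the decay exponent $r$ entering crucially in both steps to compensate for the non-local nature of $\mathbf{T}$.

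For the first inclusion, fix $E\in\Sigma_{\omega,\epsilon}$ with generalized eigenfunction $\bm{\phi}$ satisfying \eqref{geeifu}. For $L\geq 1$ set $\bm{\psi}_L=\mathbf{1}_{\Lambda^{(N)}_L(\mathbf{0})}\bm{\phi}/\|\mathbf{1}_{\Lambda^{(N)}_L(\mathbf{0})}\bm{\phi}\|_{\ell^2}$. Since $\mathbf{V}$ and $\mathbf{U}$ commute with the multiplication operator $\mathbf{1}_{\Lambda^{(N)}_L(\mathbf{0})}$, and $\mathbf{H}^{(N)}_\omega\bm{\phi}=E\bm{\phi}$, one gets
\begin{equation*}
(\mathbf{H}^{(N)}_\omega-E)\bm{\psi}_L=\frac{g^{-1}[\mathbf{T},\mathbf{1}_{\Lambda^{(N)}_L(\mathbf{0})}]\bm{\phi}}{\|\mathbf{1}_{\Lambda^{(N)}_L(\mathbf{0})}\bm{\phi}\|_{\ell^2}}.
\end{equation*}
The numerator squared is bounded by a double sum of $|\mathbf{T}(\mathbf{x},\mathbf{y})|^2|\bm{\phi}(\mathbf{y})|^2$ over $\mathbf{x}\in\Lambda^{(N)}_L(\mathbf{0})$ and $\mathbf{y}\notin\Lambda^{(N)}_L(\mathbf{0})$. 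Using \eqref{T} and \eqref{geeifu}, together with the hypothesis $r-2Nd>\epsilon$, this tail can be shown to be $o(L^{Nd+\epsilon})$ on a dyadic subsequence $L=L_k\to\infty$. A Borel--Cantelli / pigeonhole argument on dyadic shells produces $L_k$ along which $\|\mathbf{1}_{\Lambda^{(N)}_{L_k}}\bm{\phi}\|_{\ell^2}$ is not too small compared with the shell mass, so that the quotient tends to zero. This yields a Weyl sequence for $E$ and forces $E\in\sigma(\mathbf{H}^{(N)}_\omega)$.

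For the almost-sure exhaustion of the spectrum, introduce the weight $W(\mathbf{x})=\langle\mathbf{x}\rangle^{-s}$ with $s>Nd/2+\epsilon$, so that $W$ is Hilbert--Schmidt on $\ell^2(\mathbb{Z}^{Nd})$, and the weighted spaces $\ell^2_{\pm s}$ form a rigged triple $\ell^2_{-s}\supset\ell^2\supset\ell^2_s$. Applying the Berezansky generalized-eigenfunction expansion to the self-adjoint operator $\mathbf{H}^{(N)}_\omega$ relative to this triple gives a direct-integral decomposition such that for $v_\omega$-almost every $E\in\sigma(\mathbf{H}^{(N)}_\omega)$ there is a nonzero $\bm{\phi}_E\in\ell^2_{-s}$ solving $\mathbf{H}^{(N)}_\omega\bm{\phi}_E=E\bm{\phi}_E$ pointwise. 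Normalizing so that $\bm{\phi}_E(\mathbf{0})=1$ (after, if necessary, replacing the base point by elements of a countable dense family, which changes $\Sigma_{\omega,\epsilon}$ only by a null set since spectral measures are supported on the corresponding cyclic subspaces) yields the generalized eigenfunction bound \eqref{geeifu} with exponent $Nd/2+\epsilon_1$ for any $\epsilon_1>\epsilon$, and hence $E\in\Sigma_{\omega,\epsilon}$.

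The main obstacle is the long-range hopping: neither half of the classical Shnol' argument, as written for finite-range Schr\"odinger operators, applies without modification. In the Weyl-sequence step, the commutator $[\mathbf{T},\mathbf{1}_{\Lambda^{(N)}_L(\mathbf{0})}]$ has non-compact support and one must use the quantitative power-law decay $r-2Nd>\epsilon$ to make the boundary tail $\ell^2$-small relative to the bulk; in the Berezansky step one must verify that $W\mathbf{T}W$ maps $\ell^2_{-s}$ into $\ell^2_s$ and that the associated resolvent kernels are well-defined, which again is a decay condition on $\mathbf{T}$ of the same flavor $r>2s\geq Nd+2\epsilon$. Once these two decay checks are carried out, the rest is routine functional analysis, and the lemma follows as in \cite{H2019}.
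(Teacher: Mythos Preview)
The paper does not supply its own proof of this lemma: it is stated as a citation of the Shnol-type theorem from \cite{H2019} and used as a black box. So there is no ``paper's proof'' to compare against beyond the reference itself.

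That said, your outline is the standard two-part Shnol argument, and it is the correct strategy for what \cite{H2019} establishes in the long-range setting. A couple of points deserve tightening. In the Weyl-sequence half, the phrase ``$o(L^{Nd+\epsilon})$ on a dyadic subsequence'' together with ``a Borel--Cantelli / pigeonhole argument'' is too vague: the actual mechanism is the dichotomy $\bm{\phi}\in\ell^2$ (in which case $E$ is a genuine eigenvalue and you are done) versus $\bm{\phi}\notin\ell^2$, and in the latter case one uses that $\sum_k \|\mathbf{1}_{\Lambda_{2^{k+1}}\setminus\Lambda_{2^k}}\bm{\phi}\|^2=\infty$ to extract a subsequence on which the annular mass is not exponentially smaller than the cumulative mass, so that the commutator tail (controlled by $r-2Nd>\epsilon$) is dominated by the bulk. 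You should make this dichotomy explicit rather than invoking Borel--Cantelli, which is not the right tool here. In the Berezansky half, the normalization $\bm{\phi}_E(\mathbf{0})=1$ is not automatic and your parenthetical about ``replacing the base point'' is the right instinct but needs to be made precise: the standard device is to fix a cyclic family $\{\delta_{\mathbf{x}}\}$ and note that for $v_\omega$-a.e.\ $E$ the generalized eigenfunction is nonzero at some lattice site, then translate. With these two clarifications your sketch matches what the cited reference does.
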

	Based on the assumptions of lemmata in Section \ref{0-step} and coupling Lemma \ref{coupling}, we can choose some appropriate  parameters.
	
	Let $0<\epsilon\ll 1$ and
	\begin{equation*}
		s^{(n)}_0=\frac{nd}{2}+\epsilon, \quad \beta=\frac{18^N p_0}{2\rho}, \quad \tau_n=\frac{2\cdot 18^N p_0}{\rho}+\frac{7nd}{2}+O(\epsilon).   
	\end{equation*}
	From Remark \ref{decreas}, set $\zeta=\frac{19}{20}$, then for 
	$$r_N > \frac{20}{9}\tau_N,$$
	if $\Lambda_{L}(\textbf{x})$ is $(E,\frac{1}{2})$-NS, then 
	\begin{equation*}
		\Big\|G^{(N)}_{\Lambda_{L}(\textbf{x})}(E)\Big\| \leq L^{\frac{18^N p_0}{2\rho}}, 
	\end{equation*}
	\begin{equation*}
		\Big|G^{(N)}_{\Lambda_{L}(\textbf{x})}(E)(\textbf{x}'.\textbf{x}'')\Big| \leq \|\textbf{x}'-\textbf{x}''\|^{-\frac{r_{N}}{20}}, \quad \text{for }\|\textbf{x}'-\textbf{x}''\| >\frac{L}{2}. 
	\end{equation*}
	
	\begin{lem}\label{ngene}
		Let 
		$$ r_N>\frac{40 \cdot  18^N p_0}{9\rho}+ \frac{70Nd}{9}+O(\epsilon). $$
		If there  exist infinitely many $L_{k}$,  such  that $\Lambda_{L_k}(\textbf{0})$ is $(E,\frac{1}{2})$-NS, then $E$ is not a $\epsilon_1$-generalized eigenvalue of $\mathbf{H}^{(N)}_{\omega}$.
	\end{lem}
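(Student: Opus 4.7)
The strategy is to argue by contradiction. Suppose $E$ were an $\epsilon_1$-generalized eigenvalue of $\mathbf{H}^{(N)}_\omega$ with associated $\bm{\phi}$ satisfying $\bm{\phi}(\mathbf{0})=1$ and the polynomial growth \eqref{geeifu}. For every $L_k$ at which $\Lambda_{L_k}(\mathbf{0})$ is $(E,\tfrac12)$-NS, I would apply the Poisson identity \eqref{possion} with $\textbf{x}=\mathbf{0}$ and $\Lambda=\Lambda_{L_k}(\mathbf{0})$ to get
\begin{equation*}
1 = -g^{-1}\sum_{\substack{\textbf{x}'\in\Lambda_{L_k}(\mathbf{0})\\ \textbf{x}''\notin\Lambda_{L_k}(\mathbf{0})}} \textbf{G}^{(N)}_{\Lambda_{L_k}(\mathbf{0})}(E)(\mathbf{0},\textbf{x}')\,\mathbf{T}(\textbf{x}',\textbf{x}'')\,\bm{\phi}(\textbf{x}'').
\end{equation*}
The goal is to show that the right-hand side tends to $0$ along the NS subsequence, producing the contradiction $1=0$.

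Next I would split the $\textbf{x}'$-sum into an \emph{inner region} $\|\textbf{x}'\|\leq L_k/2$ and a \emph{boundary region} $L_k/2<\|\textbf{x}'\|\leq L_k$, and use a different bound on $\textbf{G}^{(N)}$ on each. In the inner region the operator-norm estimate from the $(E,\tfrac12)$-NS hypothesis, combined with the chosen $\tau_N=2\cdot 18^N p_0/\rho+O(Nd)+O(\epsilon)$, yields $|\textbf{G}^{(N)}(\mathbf{0},\textbf{x}')|\leq L_k^{18^Np_0/(2\rho)}$. Here smallness comes from $\mathbf{T}$ itself: since $\mathbf{T}(\textbf{x}',\textbf{x}'')\neq 0$ forces $\textbf{x}'$ and $\textbf{x}''$ to coincide except in one particle coordinate $j$, and $\textbf{x}''\notin\Lambda_{L_k}(\mathbf{0})$ while $\|\textbf{x}'\|\leq L_k/2$, we have $|x_j''-x_j'|\geq L_k/2$, hence $|\mathbf{T}(\textbf{x}',\textbf{x}'')|\leq\langle x_j''-x_j'\rangle^{-r}$ with base of size $L_k/2$. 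In the boundary region the operator-norm bound is too crude; instead I would invoke Remark \ref{decreas} with $\zeta=19/20$, whose hypothesis $\tau_N+\tfrac12 r_N<\tfrac{19}{20}r_N$ is precisely the condition $r_N>40\cdot 18^N p_0/(9\rho)+70Nd/9+O(\epsilon)$ assumed, giving the pointwise decay $|\textbf{G}^{(N)}(\mathbf{0},\textbf{x}')|\leq\|\textbf{x}'\|^{-r_N/20}$ for $\|\textbf{x}'\|>L_k/2$.

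Combining these with $|\bm{\phi}(\textbf{x}'')|\leq C(1+|\textbf{x}''|)^{Nd/2+\epsilon_1}$, I would carry out the lattice sums. Each pair $(\textbf{x}',\textbf{x}'')$ is indexed by $\textbf{x}'\in\Lambda_{L_k}(\mathbf{0})$, a particle index $j\in\{1,\dots,N\}$, and $x_j''\in\Z^d$ with $|x_j''|>L_k$ (inner region) or just $x_j''\neq x_j'$ (boundary region); the power-law tail of $\mathbf{T}$ renders the $x_j''$-sum convergent as long as $r$ is larger than $Nd/2+\epsilon_1+d$, which is well under the global hypothesis on $r$ in Theorem \ref{mainthm}. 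The inner-region contribution is then of order $L_k^{18^N p_0/(2\rho)+3Nd/2+d+\epsilon_1-r}$, which is a negative power of $L_k$ because $r$ is taken much larger than any fixed polynomial in the other parameters. The boundary-region contribution is of order $L_k^{3Nd/2+d+\epsilon_1-r_N/20}$, which is a negative power of $L_k$ precisely because of the hypothesis on $r_N$.

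The main obstacle is the bookkeeping: one must carefully track (a) the interplay between the operator-norm bound and the decay of $\mathbf{T}$ in the inner region, (b) the Green's function pointwise decay versus the Shnol growth $Nd/2+\epsilon_1$ of $\bm{\phi}$ in the boundary region, and (c) the mild $O(\epsilon)$-corrections entering through $s_0^{(N)}$ and $\tau_N$. Once both regional bounds are shown to be $O(L_k^{-\delta})$ with $\delta>0$, letting $k\to\infty$ along the NS subsequence forces $|1|\leq C|g|^{-1}L_k^{-\delta}\to 0$, the required contradiction, so $E$ cannot be an $\epsilon_1$-generalized eigenvalue.
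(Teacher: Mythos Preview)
Your proposal is correct and follows essentially the same route as the paper's proof: contradiction via the Poisson identity \eqref{possion} at the origin, a split of the $\textbf{x}'$-sum into an inner region (handled by the operator-norm bound on $\textbf{G}^{(N)}$ together with the $\langle\cdot\rangle^{-r}$ decay of $\mathbf{T}$) and a boundary region (handled by the pointwise decay of Remark~\ref{decreas} with $\zeta=19/20$, whose applicability is exactly the hypothesis $r_N>\frac{20}{9}\tau_N$). The paper splits at $\tfrac{11}{20}L_k$ rather than $\tfrac12 L_k$, estimates the tail sums via Lemma~\ref{sum} (losing a factor of $2$ in the $r$-exponent), and phrases the conclusion for all $\textbf{x}\in\Lambda_{L_k^{1/2}}(\mathbf{0})$ rather than just $\textbf{x}=\mathbf{0}$, but these are cosmetic; your use of the one-particle structure of $\mathbf{T}$ actually gives slightly sharper exponents.
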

	\begin{proof}
		Suppose that  $E$ is a $\epsilon_1$-generalized eigenvalue of $\mathbf{H}^{(N)}_{\omega}$ and  $\bm{\phi}$ is a corresponding polynomial-bounded eigenfunctions.
		From the Possion's identity (\ref{possion}) and (\ref{T}),  (\ref{geeifu}), (\ref{sum}), $\forall \textbf x \in \Lambda^{(N)}_{L_{k}^{\frac{1}{2}}}(\textbf{0})$, one has
		\begin{align*}
			|\bm{\phi}(\textbf{x})| \leq& \sum_{\substack{ \textbf{x}' \in \Lambda^{(N))}_{L_k}(\textbf 0) \\  \textbf{x}'' \notin \Lambda^{(N)}_{L_k}(\textbf 0)}}C(N,d)|\textbf G^{(N)}_{\Lambda^{(N)}_{L_k}}(E)(\textbf{x},\textbf{x}')|\|\textbf{x}''-\textbf{x}'\|^{-r}(1+\|\textbf{x}''\|)^{\frac{Nd}{2}+\epsilon_1}\\
			\leq& \sum_{\substack{\|\textbf{x}' \|\leq \frac{11}{20}L_k \\ \|\textbf{x}''\|>L_k}}C(\epsilon_1,N,d) L^{\frac{18^Np_0}{2\rho}}_k \Big\|\frac{9}{20}\textbf{x}''\Big\|^{-r} \|\textbf x''\|^{\frac{Nd}{2}+\epsilon_1} \\
			&\quad +  \sum_{\substack{\frac{11}{20}L_k <\|\textbf{x}' \|\leq L_k \\ \|\textbf{x}''\|>L_k}}C(\epsilon_1,N,d) \|\textbf{x}'-\textbf{x}\|^{-\frac{r_N}{20}}\|\textbf{x}''-\textbf{x}'\|^{-r}\|\textbf{x}''\|^{\frac{Nd}{2}+\epsilon_1}\\
			\leq& \sum_{\substack{\|\textbf{x}' \|\leq \frac{11}{20}L_k \\ \|\textbf{x}''\|>L_k}}C(\epsilon_1,N,d) L^{\frac{18^Np_0}{2\rho}}_k \|\textbf{x}''\|^{-r+\frac{Nd}{2}+\epsilon_1} \\
			& \quad  +  \sum_{\substack{\frac{11}{20}L_k <\|\textbf{x}' \|\leq L_k \\ \|\textbf{x}''\|>L_k}}C(\epsilon_1,N,d) \|\textbf{x}'-\textbf{x}\|^{-\frac{r_N}{20}}\|\textbf{x}''-\textbf{x}'\|^{-r}\|\textbf{x}''\|^{\frac{Nd}{2}+\epsilon_1}\\
			\leq & C(\epsilon_1,N,d)L^{\frac{18^N p_0}{2\rho}+Nd}_{k} L_k^{-\frac{r}{2}+\frac{3}{4}Nd+O(\epsilon_1)}  \\
			&\quad + \sum_{\frac{11}{20}L_k <\|\textbf{x}' \|\leq L_k }C(\epsilon_1,N,d) \|\textbf{x}'-\textbf{x}\|^{-\frac{r_N}{20}+\frac{3}{2}Nd+O(\epsilon_1) } L^{\frac{3}{2}Nd+O(\epsilon_1)}_k  \\
			\leq &C(\epsilon_1,N,d) L^{-\frac{r}{2}+\frac{18^Np_0}{2\rho}+\frac{7}{4}Nd+O(\epsilon_1)}_k  +C(\epsilon_1,N,d)L^{-\frac{r_N}{40}+2Nd+O(\epsilon_1)}_{k}\\
			\leq & C(\epsilon_1,N,d)L_k^{-\frac{31\cdot18^Np_0}{18}-4Nd+\frac{7}{4}Nd+O(\epsilon+\epsilon_1)}+C(\epsilon_1,N,d)L_k^{-\frac{r_N}{40}+2Nd+O(\epsilon_1)}\\
			\rightarrow&  0 \quad as \quad k \rightarrow \infty.
		\end{align*} 
		Hence $\bm{\phi}=0$, which contradicts the assumption. The lemma is proved.
	\end{proof}
	
	$\textbf{Proof of Theorem \ref{mainthm}}$.
	\begin{proof}
		For any {\color{red}$k \geq 0$}, define the set $A_{k}=\Lambda_{13NL_{k+1}} \backslash \Lambda_{12NL_{k}}$. From Lemma \ref{weaks}, we see that $\Lambda_{L_k}(\textbf{0})$ and any cube  $\Lambda_{L_k}(\textbf{x}), \textbf x \in A_k$ are separable. Hence, we can define the set 
		\begin{equation*}
			\textbf{E}_k=\Big\{\omega \ \Big| \ \exists E \in I, \ \text{s.t. both } \Lambda_{L_k}(\textbf{0}) \ \text{and } \ \Lambda_{L_k}(\textbf{x})\ (\textbf{x} \in A_{k} ) \text{ are } \big(E,\frac{1}{2}\big)\text{-S}\Big\}.  
		\end{equation*}
		Let $p^{(N)}=p_0=20Nd$,  one has 
		\begin{equation*}
			\mathbb{P}(\textbf{E}_k)\leq (26NL^4_k+1)^{Nd}L^{-2p_0}_{k} \leq C(N,d)L^{-36Nd}_{k},
		\end{equation*}
		
		\begin{equation}
			\sum_{k \geq 0}\mathbb{P}(\textbf{E}_k) \leq \sum_{k \geq 0}C(N,d)L^{-36Nd}_{k} <\infty.
		\end{equation}
		By the Borel-Cantelli Lemma, we can
		define the set 
		\begin{equation*}
			\Omega_0=   \big\{\omega \ | \ \omega \in  \textbf{E}_k  \text{ for finitely many }k \big\},
		\end{equation*}
		and 
		\begin{equation*}
			\mathbb{P}(\Omega_0)=1.
		\end{equation*}
		
		Let $E\in I$ be an $\epsilon_1$-generalized eigenvalue and $\bm{\psi}$ be its generalized eigenfunction, where $0 <\epsilon_1\ll 1$ will be specified later.
		From Lemma \ref{ngene},  there exist only finitely many $k$ so that $\Lambda_{L_k}(\textbf 0)$ are $(E,\frac{1}{2})$-NS. 
		
		Fix $\omega \in \Omega_0$.  Then there exists $k_0(\omega)>0$ such that for $k\geq k_0(\omega)$,  all $\Lambda_{L_k}(\textbf{x}),  \textbf{x} \in A_k $  are $(E,\frac{1}{2})$-NS. 
		Hence, for any $\textbf{x} \in A_{k}$, one has 
		
		\begin{align*}
			|\bm{\psi}(\textbf{x})| \leq &  \sum_{\substack{ \textbf{x}' \in \Lambda^{(N)}_{L_k}(\textbf x) \\  \textbf{x}'' \notin \Lambda^{(N)}_{L_k}(\textbf x)}}C(N,d)|G^{(N)}_{\Lambda^{(N)}_{L_k}}(E)(\textbf{x},\textbf{x}')|\|\textbf{x}''-\textbf{x}'\|^{-r}(1+\|\textbf{x}''\|)^{\frac{Nd}{2}+\epsilon_1} \\
			\leq & \sum_{\substack{\|\textbf{x}'-\textbf{x}\|\leq \frac{L_k}{2} \\ \|\textbf{x}''-\textbf{x}\|>L_{k}}}C(N,d)L_k^{\frac{18^N p_0}{2\rho}}\Big( \frac{1}{2}\|\textbf{x}''-\textbf{x}\|  \Big)^{-r} \big( 1+14NL_{k+1}+\|\textbf{x}''-\textbf{x}\|\big)^{\frac{Nd}{2}+\epsilon_1}\\
			&+  \sum_{\substack{ \frac{L_k}{2} < \|\textbf{x}'-\textbf{x}\|\leq L_k \\ \|\textbf{x}''-\textbf{x}\|>L_{k}}}C(N,d) \|\textbf{x}'-\textbf{x}\|^{-\frac{r_N}{20}} \|\textbf{x}''-\textbf{x}'\|^{-r} \\
			& \quad  \quad  \quad   \quad  \quad  \quad  \quad  \quad  \Big( 1+14NL_{k+1}+\|\textbf{x}''-\textbf{x}'\|\Big)^{\frac{Nd}{2}+\epsilon_1}\\
			\leq &  C(N,d,\epsilon_1)L^{\frac{18^N p_0}{2p}+Nd+2Nd+O(\epsilon_1)}_k \sum_{\|\textbf{x}''-\textbf{x}\|>L_k} \|\textbf{x}''-\textbf{x}\|^{-r+\frac{Nd}{2}+O(\epsilon_1)}\\
			&+ C(N,d, \epsilon_1)L^{3Nd+O(\epsilon_1)}_k \sum_{\|\textbf{x}'-\textbf{x}\| \geq \frac{L_k}{2}}\|\textbf{x}'-\textbf{x}\|^{-\frac{r_N}{20}}  \\
			\leq& C(N,d,\epsilon_1)\big( L_k^{-\frac{r}{2}+\frac{18^Np_0}{2\rho}+\frac{15}{4}Nd+O(\epsilon_1)}+L_k^{3Nd-\frac{r_N}{40}+\frac{Nd}{2}+O(\epsilon_1)}\big)\\
			\leq&  C(N,d,\epsilon_1)\big(L_k^{-\frac{r}{4}+\frac{7}{4}Nd+O(\epsilon_1)}+L_k^{-\frac{r}{40}+\frac{279}{80}Nd+O(\epsilon_1+\epsilon)}\big),
		\end{align*} 
		Since $r>\frac{40 \cdot 18^N p_0 }{9 \rho}+\frac{75}{9}Nd>699Nd$, and $2N^{\frac{1}{4}} L_k \geq \|\textbf{x}\|^{\frac{1}{4}} $. One has 
		\begin{equation}
			|\bm{\psi}(\textbf{x})| \leq L^{-\frac{r}{50}}_k \leq \|\textbf{x}\|^{-\frac{r}{300}}, \quad \forall \textbf{x} \in A_{k}.
		\end{equation}
		Seeing that  $\bigcup_{k\geq k_0}A_k=\{ \textbf{x} \in \mathbb{Z}^{Nd}: \|\textbf{x}\| \geq 12NL_{k_0}\}$, hence 
		$$|\bm{\psi}(\textbf{x})| \leq \|\textbf{x}\|^{-\frac{r}{300}}, \quad \forall\|\textbf{x}\|\geq 12NL_{k_0}.$$ 
		This implies the\textbf{ power-law localization} of $\mathbf{H}^{(N)}_{\omega}$ for any energy.
	\end{proof} 
	
	\section{Appendix}
	\subsection{Multi-particle Stollmann's Bound}
	
	\begin{thm} \label{wegner1}
		Consider an $n$-particle cube $\Lambda^{(n)}_L(\textbf{u})$ and suppose that the marginal probability distribution function $\mu$ of random variables $V(x,\omega)$ has the continuity modulus $\varsigma(\cdot)$:
		$$\varsigma(\epsilon):=\sup_{t \in \mathbb{R}}(\mu(t+\epsilon)-\mu(t))$$
		
		Then, $\forall E \in \mathbb{R}$, the $n$-particle power-law long-range hopping random operator $\textbf H^{(n)}_{\Lambda^{(n)}_L(\textbf{u})}$ satisfies
		\begin{align*}
			\mathbb{P}\Big\{\text{dist}\Big[E,\sigma\Big(\textbf H^{(n)}_{\Lambda^{(n)}_L(\textbf{u})}\Big)\Big]\leq \epsilon\Big\}
			\leq n(2L+1)^{(n+1)d} \varsigma(2\epsilon).
		\end{align*}	
	\end{thm}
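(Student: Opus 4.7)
The proof follows the classical Stollmann-type Wegner estimate, adapted to the multi-particle setting in the spirit of Chulaevsky-Suhov. The essential observation is that, although the configuration-space random potential $\textbf{V}(\textbf{x},\omega)=\sum_{j=1}^n V(x_j,\omega)$ exhibits non-trivial correlations on the lattice $\Z^{nd}$, its dependence on the genuinely independent single-site random variables $\{V(x,\omega)\}_{x\in\Pi\Lambda^{(n)}_L(\textbf u)}$ is monotone with non-negative diagonal coefficients. This monotonicity is what allows Stollmann's spectral-averaging technique to go through.

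My plan is as follows. First, I rewrite
\begin{equation*}
\textbf H^{(n)}_{\Lambda^{(n)}_L(\textbf u)}(\omega)=\textbf H_0+\sum_{x\in\Pi\Lambda^{(n)}_L(\textbf u)} V(x,\omega)\,\textbf N_x,\qquad \textbf H_0=\tfrac1g(\textbf T_{\Lambda^{(n)}_L(\textbf u)}+\textbf U_{\Lambda^{(n)}_L(\textbf u)}),
\end{equation*}
where $\textbf N_x$ is the diagonal multiplication operator with entry $\#\{j:y_j=x\}\in\{0,1,\dots,n\}$ at the configuration $\textbf y\in\Lambda^{(n)}_L(\textbf u)$. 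Each $\textbf N_x\geq 0$, so the family is monotone in every random variable $V(x,\omega)$, and $\sum_{x\in\Pi\Lambda^{(n)}_L(\textbf u)}\textbf N_x=n\,\textbf I$ (since every configuration carries exactly $n$ particles).

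Second, I apply Stollmann's eigenvalue concentration bound, conditioning on all single-site variables $\{V(y,\omega):y\neq x\}$ and using the one-parameter monotone flow $t\mapsto\textbf H_0+t\,\textbf N_x+\sum_{y\neq x}V(y,\omega)\textbf N_y$. For each fixed $x$, the eigenvalues are non-decreasing Lipschitz functions of $t$, so the standard spectral shift / Wegner orbital argument yields
\begin{equation*}
\mathbb{P}\Big\{\text{dist}\Big[E,\sigma\Big(\textbf H^{(n)}_{\Lambda^{(n)}_L(\textbf u)}\Big)\Big]\leq\epsilon\,\Big|\,\{V(y)\}_{y\neq x}\Big\}\leq \big|\Lambda^{(n)}_L(\textbf u)\big|\,\varsigma(2\epsilon)
\end{equation*}
once one uses that at most $|\Lambda^{(n)}_L(\textbf u)|=(2L+1)^{nd}$ eigenvalues can cross the window $[E-\epsilon,E+\epsilon]$ under a monotone shift. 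Averaging out this conditional probability and then summing (via a union bound) over $x\in\Pi\Lambda^{(n)}_L(\textbf u)$, of which there are at most $n(2L+1)^d$ sites, gives
\begin{equation*}
\mathbb{P}\Big\{\text{dist}\Big[E,\sigma\Big(\textbf H^{(n)}_{\Lambda^{(n)}_L(\textbf u)}\Big)\Big]\leq\epsilon\Big\}\leq n(2L+1)^{(n+1)d}\varsigma(2\epsilon),
\end{equation*}
as claimed.

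The main subtle point, and the reason one cannot simply quote the single-particle Wegner estimate, is the representation step: one must identify the correct independent randomness (the single-site variables indexed by $\Pi\Lambda^{(n)}_L(\textbf u)$, not the correlated configuration variables indexed by $\Lambda^{(n)}_L(\textbf u)$) and verify the non-negativity of the coefficient operators $\textbf N_x$. Once this is done, the bound is driven by two combinatorial quantities: the number of independent random parameters, which produces the factor $n(2L+1)^d$, and the dimension of the configuration Hilbert space, which produces the factor $(2L+1)^{nd}$. Their product gives precisely the volume factor $n(2L+1)^{(n+1)d}$ in the statement.
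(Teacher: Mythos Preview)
Your overall approach matches the paper's, which simply defers to Theorem~3.4.1 of Chulaevsky--Suhov \cite{CS2014} with the remark that the operator family remains monotone under the long-range hopping. Your decomposition via the occupation-number operators $\textbf N_x$ and the identity $\sum_x \textbf N_x = n\,\textbf I$ is exactly the right setup, and the final combinatorics are correct.

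However, the conditional bound you write for a fixed site $x$ is not valid as stated. The one-parameter flow $t\mapsto \textbf H_0 + t\,\textbf N_x + \cdots$ has eigenvalues that are non-decreasing in $t$, but their speed $\langle\bm\psi_j,\textbf N_x\bm\psi_j\rangle$ is not bounded below by $1$; indeed $\textbf N_x$ vanishes on every configuration with no particle at $x$, so some eigenvalues may barely move. Consequently the inequality
\begin{equation*}
\mathbb{P}\Big\{\text{dist}\big[E,\sigma\big(\textbf H^{(n)}_{\Lambda^{(n)}_L(\textbf u)}\big)\big]\leq\epsilon\ \Big|\ \{V(y)\}_{y\neq x}\Big\}\leq (2L+1)^{nd}\,\varsigma(2\epsilon)
\end{equation*}
does not follow from a single-variable Wegner/spectral-shift argument, and the subsequent ``union bound over $x$'' is not doing what you intend. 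Stollmann's lemma uses precisely the diagonal monotonicity you recorded: shifting \emph{all} variables $V(x)$, $x\in\Pi\Lambda^{(n)}_L(\textbf u)$, by the same amount $s$ shifts every eigenvalue by $ns\geq s$; one then telescopes this joint shift coordinate by coordinate (using monotonicity in each variable separately) to produce the factor $\big|\Pi\Lambda^{(n)}_L(\textbf u)\big|\leq n(2L+1)^d$. The factor $(2L+1)^{nd}$ then comes from the union bound over the eigenvalues of $\textbf H^{(n)}_{\Lambda^{(n)}_L(\textbf u)}$. With this correction to the middle step, your outline is the standard proof.
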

	
	\begin{proof}
		This is a small modification of Theorem 3.4.1 in \cite{CS2014}, since the random operator $\textbf H^{(n)}_{\Lambda^{(n)}_L}(\textbf u)$ is still a family of monotone operators.
	\end{proof}
	
	We define the distance between two spectra in a usual way:
	\begin{align*}
		&\text{dist}\Big[\sigma\Big(\textbf H^{(n)}_{\Lambda^{(n)}_{L}(\textbf{u})}\Big),\sigma\Big(\textbf H^{(n)}_{\Lambda^{(n)}_{L'}(\textbf{v})}\Big)\Big]\\
		&\ \ \ \  =\min\Big\{ \Big|E^{(k)}_{\Lambda^{(n)}_{L}(\textbf{u})}-E^{(k')}_{\Lambda^{(n)}_{L'}(\textbf{v})}\Big|:\ 1\leq k\leq \big|\Lambda^{(n)}_{L}(\textbf{u})\big|,\ 1\leq k'\leq \big|\Lambda^{(n)}_{L'}(\textbf{v})\big|   \Big\}.
	\end{align*}
	\begin{thm}[Two-volume Stollmann's Bound]\label{wegnern}
		Let $\Lambda^{(n)}_L(\textbf{u}),\ \Lambda^{(n)}_L(\textbf{v})$ be  a pair of $n$-particle weakly separable cubes. If the marginal probability distribution function $\mu$ of the random variables $V(x,\omega)$ has the continuity modulus $\varsigma(\cdot)$, then the $n$-particle power-law long-range hopping random operator $\textbf H^{(n)}_{\Lambda^{(n)}_L(\textbf{u})}$ and  $\textbf H^{(n)}_{\Lambda^{(n)}_L(\textbf{v})}$ satisfy
		\begin{align*}
			&\mathbb{P}\Big\{\text{dist}\Big[\sigma\Big(\textbf H^{(n)}_{\Lambda^{(n)}_L(\textbf{u})}\Big),\sigma\Big(\textbf H^{(n)}_{\Lambda^{(n)}_{L}(\textbf{v})}\Big)\Big]\leq \epsilon\Big\}\leq n(2L+1)^{(2n+1)d} \varsigma(2\epsilon).
		\end{align*}	
	\end{thm}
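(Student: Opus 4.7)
The plan is to reduce the two-volume bound to the one-volume Stollmann bound of Theorem \ref{wegner1} via conditioning, exploiting weak separability to decouple the random variables controlling the two cubes. By Definition \ref{spear}(i), weak separability yields a non-empty $J\subseteq\{1,\dots,n\}$ such that (after possibly swapping the roles of $\textbf{u}$ and $\textbf{v}$)
$$\Pi_J\Lambda^{(n)}_{L+\mathrm{r}_0}(\textbf{u})\cap\bigl(\Pi_{J^C}\Lambda^{(n)}_{L+\mathrm{r}_0}(\textbf{u})\cup\Pi\Lambda^{(n)}_{L+\mathrm{r}_0}(\textbf{v})\bigr)=\emptyset.$$
Set $S:=\Pi_J\Lambda^{(n)}_{L}(\textbf{u})\subset\Z^d$ and let $\mathcal{F}_{S^c}$ be the $\sigma$-algebra generated by $\{V(x,\omega):x\notin S\}$.

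The first key observation is that $\Pi\Lambda^{(n)}_{L}(\textbf{v})\subset S^c$, so $\textbf H^{(n)}_{\Lambda^{(n)}_L(\textbf{v})}$ is $\mathcal{F}_{S^c}$-measurable; in particular its eigenvalues $\{E^{(k')}_{\Lambda^{(n)}_L(\textbf{v})}\}_{k'=1}^{|\Lambda^{(n)}_L(\textbf v)|}$ are deterministic after conditioning on $\mathcal{F}_{S^c}$. The second key observation is that for every $\textbf{y}\in\Lambda^{(n)}_{L}(\textbf{u})$, the coordinates $y_j$ with $j\in J$ lie in $S$, so the diagonal of $\textbf H^{(n)}_{\Lambda^{(n)}_L(\textbf{u})}$ at $\textbf y$ contains at least $|J|\geq 1$ summands of the form $V(y_j,\omega)$ with $y_j\in S$. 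Hence, conditional on $\mathcal{F}_{S^c}$, the operator $\textbf H^{(n)}_{\Lambda^{(n)}_L(\textbf{u})}$ is a monotone family of the $|S|\leq n(2L+1)^d$ random variables $\{V(x,\omega)\}_{x\in S}$ with $\sum_{x\in S}\partial_{V(x,\omega)}\textbf H^{(n)}_{\Lambda^{(n)}_L(\textbf{u})}\geq \mathbf 1$.

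This is exactly the monotonicity hypothesis needed to run Stollmann's argument used in Theorem \ref{wegner1}, so that theorem's proof applies verbatim to give the conditional one-volume bound
$$\mathbb{P}\Big\{\text{dist}\Big[E,\sigma\Big(\textbf H^{(n)}_{\Lambda^{(n)}_L(\textbf{u})}\Big)\Big]\leq\epsilon\,\Big|\,\mathcal{F}_{S^c}\Big\}\leq n(2L+1)^{(n+1)d}\varsigma(2\epsilon)$$
for every $\mathcal{F}_{S^c}$-measurable energy $E$. Applying this with $E=E^{(k')}_{\Lambda^{(n)}_L(\textbf{v})}$ and taking a union bound over the at most $(2L+1)^{nd}$ eigenvalues of the $\textbf v$-cube yields, still conditional on $\mathcal{F}_{S^c}$,
$$\mathbb{P}\Big\{\text{dist}\Big[\sigma\Big(\textbf H^{(n)}_{\Lambda^{(n)}_L(\textbf{u})}\Big),\sigma\Big(\textbf H^{(n)}_{\Lambda^{(n)}_L(\textbf{v})}\Big)\Big]\leq\epsilon\,\Big|\,\mathcal{F}_{S^c}\Big\}\leq n(2L+1)^{(2n+1)d}\varsigma(2\epsilon),$$
and integrating over $\mathcal{F}_{S^c}$ finishes the proof.

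The only real subtlety is verifying the monotonicity step in the conditional setting: one must check that $S$ is disjoint from $S^c$ in the support of every diagonal entry of $\textbf H^{(n)}_{\Lambda^{(n)}_L(\textbf{u})}$, which uses $\Pi_J\Lambda^{(n)}_L(\textbf u)\cap\Pi_{J^C}\Lambda^{(n)}_L(\textbf u)=\emptyset$ (a consequence of weak separability with the $\mathrm{r}_0$-enlargement). If this fails, the same $V(x,\omega)$ could enter both $S$- and $S^c$-indexed positions and the conditional operator would no longer be a clean function of independent $S$-variables. Once this is verified, everything else is a bookkeeping union bound.
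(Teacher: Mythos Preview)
Your argument is correct and is precisely the standard conditioning-plus-Stollmann argument that the paper invokes by citing Theorem~3.4.2 of \cite{CS2014}; the paper's own proof is nothing more than a pointer to that reference. In other words, you have written out in detail exactly the ``small modification'' the authors allude to, including the correct identification of the sample set $S=\Pi_J\Lambda^{(n)}_L(\textbf{u})$, the $\mathcal{F}_{S^c}$-measurability of $\textbf H^{(n)}_{\Lambda^{(n)}_L(\textbf{v})}$, the monotonicity $\sum_{x\in S}\partial_{V(x)}\textbf H^{(n)}_{\Lambda^{(n)}_L(\textbf{u})}\geq\mathbf{1}$, and the final union bound contributing the extra factor $(2L+1)^{nd}$.
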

	\begin{proof}
		This is a small modification of Theorem 3.4.2 in \cite{CS2014}.
	\end{proof}

	\subsection{ Coupling Lemma}\
	
	Assume the following relations hold true:
	\begin{align}\label{condition}
		\left\{	\begin{array}{cc}
			-\frac{1}{2} r_n+\tau_n+2s^{(n)}_0<0,\\
			-  r_n+\tau_n+4\beta+\frac{15}{2}s_0^{(n)}<0,\\
			\frac{\tau_n}{2}+2\beta+\frac{7}{2}s^{(n)}_0<\tau_n,
		\end{array}\right.
	\end{align}
	where $ s^{(n)}_0>nd/2$.
	Assume 
	\begin{align}{\label{tau}}
		\beta\geq18^Np_0/ 2\rho, \ p_0\geq20Nd.  
	\end{align}

	\begin{lem}[Coupling Lemma]\label{coupling}
		Let $L=[l^{4}]$. Assume the following:
		\begin{itemize}
			\item The relation (\ref{condition}) hold true.
			\item  We can decompose the n-particle cube $\Lambda^{(n)}_{L}({\textbf u})$ into two disjoint subjects $\Lambda^{(n)}_L(\textbf u)=B\cup G$ with the following properties: We have 
			$$B=\cup_{1\leq j< +\infty}\Omega_j,$$
			where for each $j$, $\text{diam}(\Omega_j)\leq C_{*}l^{2}$    $(C_{*}>1)$, and for $j\neq j'$, dist$(\Omega_j,\Omega_{j'})\geq l^{2}$. For any ${\textbf v}\in G$, there exists some $\Lambda^{(n)}_l (\tilde{\textbf v})\subset \Lambda^{(n)}_L({\textbf u})$ such that $\Lambda^{(n)}_l(\tilde{\textbf v})$ is $(E,\frac{1}{2})$-NS and $\textbf v\in\Lambda^{(n)}_l (\tilde{\textbf v})$ with dist$\left(\textbf v,\Lambda^{(n)}_L({\textbf u})\setminus\Lambda^{(n)}_l(\tilde{\textbf v})\right)\geq l/2$.
			\item $\Lambda^{(n)}_L(\textbf u)$ is $E$-NR.
		\end{itemize}	
		Then for $$l\geq \underline{l}_0(\|\mathbf{T}+\mathbf{U}\|_{r_n}, M, C_{*},\beta,\tau_n,r_n, s^{(n)}_0,n,d)>0,$$ we have $\Lambda^{(n)}_{L}(\textbf u)$ is $(E,\frac{1}{2})$-NS.
	\end{lem}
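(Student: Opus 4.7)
The plan is to build an approximate right-inverse (parametrix) $\mathcal{N}$ of $\mathbf{H}^{(n)}_{\Lambda^{(n)}_L(\textbf{u})}-E$ by gluing together local Green's functions, and then invoke the perturbation argument \eqref{pa1}--\eqref{pa2} to convert this into the required $(E,\frac{1}{2})$-NS bound. The guiding principle is that on the ``good'' region $G$ we have genuine $(E,\frac{1}{2})$-NS local inverses, while on the ``bad'' region $B=\bigcup_j\Omega_j$ we only have the coarse operator-norm bound $\|\textbf{G}^{(n)}_{\Lambda^{(n)}_L(\textbf{u})}(E)\|\le L^{\beta}$ coming from the $E$-NR assumption; the geometric separation in the decomposition (distance $\ge l/2$ from a good cube's boundary, distance $\ge l^2$ between different $\Omega_j$'s) will be used to convert the long-range tail of $\mathbf{T}$ into a small Sobolev error.

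Concretely, I would first fix, for each $\textbf{v}\in G$, the accompanying $(E,\frac{1}{2})$-NS cube $\Lambda^{(n)}_l(\tilde{\textbf{v}})$ with $\mathrm{dist}\bigl(\textbf{v},\Lambda^{(n)}_L(\textbf{u})\setminus\Lambda^{(n)}_l(\tilde{\textbf{v}})\bigr)\ge l/2$, and then define $\mathcal{N}(\textbf{x},\cdot)$ to be $\textbf{G}^{(n)}_{\Lambda^{(n)}_l(\tilde{\textbf{v}}_{\textbf{x}})}(E)(\textbf{x},\cdot)$ restricted to that local cube when $\textbf{x}\in G$, and to be an auxiliary local inverse on a mildly enlarged neighborhood of $\Omega_j$ (whose existence uses $E$-NR, since $\mathrm{diam}(\Omega_j)\le C_*l^2\ll L$) when $\textbf{x}\in\Omega_j$. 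Setting $\mathcal{P}=(\mathbf{H}^{(n)}_{\Lambda^{(n)}_L(\textbf{u})}-E)\mathcal{N}-\mathcal{I}$, each matrix element of $\mathcal{P}$ is forced to be an entry of $\mathbf{T}$ bridging two distinct local domains; the pair of indices is therefore separated by at least $l/2$ (when leaving a good cube) or at least $l^2$ (when leaving an $\Omega_j$), so the hopping decay $|\mathbf{T}(\textbf{x},\textbf{y})|\le\langle\textbf{x}-\textbf{y}\rangle^{-r}$ transfers directly into a small Sobolev-norm tail for $\mathcal{P}$.

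The main quantitative obstacle is to show simultaneously that $\|\mathcal{P}\|_{s^{(n)}_0}\,\|\mathcal{N}\|_{s^{(n)}_0}\le\tfrac12$ and that $\|\mathcal{N}\|_s\le L^{\tau_n+s/2}$ (possibly with a worse constant) for $s^{(n)}_0\le s\le r_n$. For the $G$-part of $\mathcal{N}$, the NS hypothesis gives $\|\textbf{G}^{(n)}_{\Lambda_l}\|_s\le l^{\tau_n+s/2}$; multiplying by the $\mathbf{T}$-tail produces a factor $l^{-r_n/2}$, and the condition $-\tfrac12 r_n+\tau_n+2s^{(n)}_0<0$ from \eqref{condition} is precisely what makes this product small at the scale $l$. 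For the $B$-part, the $L^{\beta}$ loss from $E$-NR is paid back by the much larger geometric gap $l^2$ and by the combinatorial factor from the $\Omega_j$'s, and the condition $-r_n+\tau_n+4\beta+\tfrac{15}{2}s^{(n)}_0<0$ is the book-keeping that makes this work. Once $\|\mathcal{P}\|_{s^{(n)}_0}\le 1/2$ is established, the perturbation argument gives $(\mathbf{H}^{(n)}_{\Lambda^{(n)}_L(\textbf{u})}-E)^{-1}=\mathcal{N}(\mathcal{I}+\mathcal{P})^{-1}$, and \eqref{pa2} together with the third condition $\tfrac{\tau_n}{2}+2\beta+\tfrac{7}{2}s^{(n)}_0<\tau_n$ upgrades the estimate on $\mathcal{N}$ to the target bound $\|\textbf{G}^{(n)}_{\Lambda^{(n)}_L(\textbf{u})}(E)\|_s\le L^{\tau_n+s/2}$, i.e.\ $\Lambda^{(n)}_L(\textbf{u})$ is $(E,\tfrac12)$-NS. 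The delicate part is in matching indices: one must be careful that the same exponent $\tau_n$ used at scale $l$ survives the boosts coming from $\beta$ (bad region), from the combinatorics of covering, and from $L=[l^4]$; the hypothesis \eqref{tau} with $\beta\ge 18^Np_0/2\rho$ is exactly what absorbs these losses and closes the induction.
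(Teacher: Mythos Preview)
The paper does not give an independent proof here: it simply records that this is Lemma~3.2 of \cite{Shi2021} with the parameter choices $\xi=1$, $\alpha=4$, $\delta=\tfrac12$. Your parametrix-plus-perturbation outline is exactly the mechanism behind that cited lemma (the Sobolev-norm coupling scheme going back to Berti--Bolle that Shi adapted), and your identification of which of the three inequalities in \eqref{condition} controls which piece of the error is correct.

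One point in your sketch is not right as written and would not go through. For $\textbf{x}\in\Omega_j$ you propose to use ``an auxiliary local inverse on a mildly enlarged neighborhood of $\Omega_j$ (whose existence uses $E$-NR)''. But $E$-NR is a hypothesis on the \emph{full} cube $\Lambda^{(n)}_L(\textbf u)$; it says nothing about the spectrum of $\mathbf H^{(n)}$ restricted to any sub-cube, so there is no reason a local inverse near $\Omega_j$ should exist at all---indeed, the $\Omega_j$'s are precisely where local control fails. In the argument of \cite{Shi2021} the bad columns are handled instead through the \emph{global} resolvent $\textbf G^{(n)}_{\Lambda^{(n)}_L(\textbf u)}(E)$, whose existence and operator-norm bound $\|\textbf G^{(n)}_{\Lambda^{(n)}_L(\textbf u)}(E)\|\le L^{\beta}$ are exactly what $E$-NR provides. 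The sparseness of $B$ (each $\Omega_j$ has diameter $\le C_*l^2$ and the pieces are $l^2$-separated) is what keeps the resulting $l^{4\beta}$ loss under control, and this is where the factor $4\beta$ in the second inequality of \eqref{condition} comes from. Once you replace your ``local inverse on $\tilde\Omega_j$'' by the appropriate use of the full Green's function on bad columns, the rest of your outline matches the cited proof.
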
	
	\begin{proof}
		This is a modification of Lemma 3.2 in \cite{Shi2021} by taking $\xi=1,\ \alpha=4$ and $\delta=\frac{1}{2}$.
	\end{proof}
	
	\subsection{The Separation Property  of Singular Cubes}
	\begin{lem}\label{SC}
		Fix $\mathcal{J} \in \mathbb{N}$. Assume there exists $\mathcal{J} $ cubes of size $\tilde{C}(n)l$ such that for any $\textbf{v}$ does not belong to these cubes,  $\Lambda^{(n)}_{l}(\textbf{v})$ is $(E,\frac{1}{2})$-NS. Then for 
		$$l >2\tilde{C}(n),$$
		the n-particle cube $\Lambda^{(n)}_{L}({\textbf u})$ can be decomposed into two disjoint subjects $B\cup G$ with the following properties: we have 
		$$B=\cup_{1\leq j< +\infty}\Omega_j,$$
		where for each $j$, $\text{diam}(\Omega_j)\leq (2\mathcal{J}+1)l^{2}$, and for $j\neq j'$, dist$(\Omega_j,\Omega_{j'})\geq l^{2}$. For any ${\textbf v}\in G$,  $\Lambda^{(n)}_{l}(\textbf{v})$ is $\big(E,\frac{1}{2}\big)$-NS.
	\end{lem}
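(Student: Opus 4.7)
The plan is a standard greedy clustering of the $\mathcal{J}$ given ``bad'' cubes into the desired components $\Omega_j$. Denote the given cubes by $Q_1,\dots,Q_{\mathcal{J}}$, each of diameter at most $\tilde{C}(n)\,l$ (up to a harmless factor that the condition $l > 2\tilde{C}(n)$ absorbs). First I would initialize the family by taking each $Q_i$ as its own singleton cluster, and then repeatedly merge any two clusters whose mutual distance is strictly less than $l^2$. Since each merge strictly decreases the number of clusters, this process terminates after at most $\mathcal{J}-1$ steps, producing a family $\{\Omega_j\}_{j=1}^m$ with $m \leq \mathcal{J}$ and, by construction, with $\mathrm{dist}(\Omega_j,\Omega_{j'}) \geq l^2$ for $j \neq j'$.

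Next I would verify the diameter bound. Each $\Omega_j$ is a union of at most $\mathcal{J}$ of the original cubes, and by the merging rule these cubes are joined by a connected graph in which every edge has length strictly less than $l^2$. Traversing a spanning tree of this graph and using the elementary inequality $\mathrm{diam}(A\cup B) \leq \mathrm{diam}(A) + \mathrm{dist}(A,B) + \mathrm{diam}(B)$ iteratively gives
\begin{equation*}
\mathrm{diam}(\Omega_j) \;\leq\; \sum_{i}\mathrm{diam}(Q_i) + (k-1)\, l^2 \;\leq\; \mathcal{J}\,\tilde{C}(n)\, l + (\mathcal{J}-1)\, l^2,
\end{equation*}
where $k \leq \mathcal{J}$ is the number of component cubes absorbed into $\Omega_j$. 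The hypothesis $l > 2\tilde{C}(n)$ forces $\tilde{C}(n)\,l < l^2/2$, and substituting in yields $\mathrm{diam}(\Omega_j) < (3\mathcal{J}/2 - 1)\, l^2 < (2\mathcal{J}+1)\, l^2$, which is exactly the claimed bound.

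Finally I would set $B = \bigcup_j \Omega_j$ and $G = \Lambda^{(n)}_L(\textbf{u}) \setminus B$. For any $\textbf{v} \in G$, the point $\textbf{v}$ lies outside every $Q_i$ (since each $Q_i \subset B$), so the hypothesis of the lemma directly gives that $\Lambda^{(n)}_l(\textbf{v})$ is $(E,\tfrac{1}{2})$-NS. The argument carries no probabilistic input: it is a purely combinatorial/geometric statement about merging $\mathcal{J}$ small sets in a metric space. The only delicate point is the diameter estimate, where the threshold $l > 2\tilde{C}(n)$ is exactly what is needed to swallow the total linear size $\mathcal{J}\tilde{C}(n)\,l$ of the original cubes into a single extra copy of $l^2$; beyond this bookkeeping, no genuine obstacle arises.
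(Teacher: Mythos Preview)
Your proposal is correct and follows essentially the same route as the paper: both arguments cluster the $\mathcal{J}$ bad cubes via a chain/equivalence relation (the paper links centers at distance $\leq 2l^2$, you greedily merge cubes at distance $<l^2$), then bound each cluster's diameter by summing at most $\mathcal{J}$ cube diameters plus at most $\mathcal{J}-1$ gap lengths, using $l>2\tilde{C}(n)$ to absorb the linear term. The only cosmetic difference is that the paper works with centers and you with the cubes themselves; note that in the paper's convention $\Lambda^{(n)}_{\tilde{C}(n)l}$ has diameter $2\tilde{C}(n)l$, but as you remark this factor of two is harmless under the stated threshold.
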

	\begin{proof}
		Denote $\Lambda^{(n)}_{\tilde{C}(n)l}(\mathbf k^{(1)}),\cdots, \Lambda^{(n)}_{\tilde{C}(n)l}(\mathbf k^{(\mathcal{J})})$ as these $n$-particle cubes of size $\tilde{C}(n)l$, and define $\mathrm{Z}=\{\mathbf k^{(1)},\cdots,\mathbf k^{(\mathcal{J})}\}$. We can define a relation $\eqcirc$ on $\mathrm{Z}$ as follows. Letting $\mathbf k,\mathbf k' \in \mathrm{Z}$, we say that $\mathbf k\eqcirc \mathbf k'$, if there exists a sequence $\mathbf k_0,\cdots, \mathbf k_q \in \mathrm{Z}$, such that 
		$$\mathbf k=\mathbf k_0,\quad \mathbf k'=\mathbf k_q,$$
		and 
		\begin{equation}
			|\mathbf k_i-\mathbf k_{i+1}| \leq 2 l^2, \quad \forall 0\leq i\leq q-1.  
		\end{equation}
		As a consequence,  the set $\mathrm{Z}$  can be divided into disjoint equivalent classes, say $\chi_1,\cdots \chi_{t}$ with $t \leq \mathcal{J}$. We can obtain
		\begin{equation}\label{sep1}
			|\mathbf k-\mathbf k'|\leq 2\mathcal{J}l^2, \quad \forall\ \mathbf k,\mathbf k' \in \chi_{j}
		\end{equation}
		\begin{equation}\label{sep2}
			\text{dist}(\chi_i,\chi_j)>2l^2,\quad  \text{for}\ i\neq j.
		\end{equation}
		Correspondingly, we can deﬁne
		\begin{equation*}
			\Omega_j=\bigcup_{\mathbf k \in \chi_j}\big(\Lambda^{(n)}_{L}(\textbf{u}) \cap \Lambda^{(n)}_{\tilde{C}(n)l}(\mathbf k)\big).
		\end{equation*}
		From \eqref{sep1} and \eqref{sep2}, we see 
		\begin{equation*}
			\text{diam}(\Omega_j) \leq 2Jl^2+2\tilde{C}(n)l \leq (2J+1)l^2 
		\end{equation*}
		\begin{equation*}
			\text{dist}(\Omega_i,\Omega_j) >2l^2-2\tilde{C}(n)l>l^2, \ \text{for} \ i\neq j.
		\end{equation*}
		Let $B=\cup _{1\leq j\leq t}\Omega_j$, once $\textbf{v} \in \Lambda^{(n)}_{L}(\textbf{u}) \setminus B$, we see that $\Lambda^{(n)}_{l}(\textbf{v})$ is $(E,\frac{1}{2})$-NS.
	\end{proof}
	\subsection{The Summation of Sequence}
	\begin{lem}[Lemma A.1 of \cite{Shi2021}]\label{sum}
		Let $L>2$ with $L\in\N$ and $\Theta-Nd>1$. Then we have 
		\begin{align*}
			\sum_{\textbf u\in\Z^{Nd}:|\textbf u|\geq L}|\textbf u|^{-\Theta}\leq C(\Theta,Nd)L^{-(\Theta-Nd)/2},
		\end{align*}
		where $C(\Theta,Nd)$ depends only on $\Theta,\ N$ and $d$.
	\end{lem}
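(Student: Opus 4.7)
The plan is to isolate the $L$-dependent decay by the splitting
\[
|\textbf u|^{-\Theta}\;=\;|\textbf u|^{-(\Theta-Nd)/2}\cdot|\textbf u|^{-(\Theta+Nd)/2},
\]
using the crude bound $|\textbf u|\geq L$ only on the first factor. This gives immediately
\[
\sum_{\textbf u\in\Z^{Nd}:\,|\textbf u|\geq L}|\textbf u|^{-\Theta}\;\leq\;L^{-(\Theta-Nd)/2}\sum_{\textbf u\in\Z^{Nd}\setminus\{\mathbf 0\}}|\textbf u|^{-(\Theta+Nd)/2},
\]
and reduces the lemma to showing that the residual sum on the right is a finite constant depending only on $\Theta$ and $Nd$.

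For that residual sum I would use the standard lattice-point count $\#\{\textbf u\in\Z^{Nd}:|\textbf u|=k\}=O(k^{Nd-1})$, so
\[
\sum_{\textbf u\neq\mathbf 0}|\textbf u|^{-\alpha}\;\leq\;C(Nd)\sum_{k\geq 1}k^{Nd-1-\alpha},
\]
which converges as soon as $\alpha>Nd$. Plugging in $\alpha=(\Theta+Nd)/2$, the convergence condition becomes $\Theta>Nd$, and the hypothesis $\Theta-Nd>1$ of the lemma provides a margin of $1/2$ (in fact $\alpha-Nd=(\Theta-Nd)/2>1/2$), so the series is finite and an integral comparison bounds it by $C(\Theta,Nd)$. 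There is no genuine obstacle: the proof is essentially the choice of splitting, and the factor $1/2$ in the exponent $(\Theta-Nd)/2$ in the conclusion is precisely half of the total decay budget, with the other half traded away to keep the residual lattice sum absolutely convergent. One could sharpen the exponent to $-(\Theta-Nd)$ by comparing the original sum directly with an integral, but the weaker $-(\Theta-Nd)/2$ form stated in the lemma is all that is needed elsewhere in the paper, and it falls out of the splitting without any further work.
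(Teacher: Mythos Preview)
Your proof is correct: the splitting $|\textbf u|^{-\Theta}=|\textbf u|^{-(\Theta-Nd)/2}|\textbf u|^{-(\Theta+Nd)/2}$ together with the lattice-point count yields the stated bound, and the hypothesis $\Theta-Nd>1$ ensures the residual series converges. Note that the paper does not give its own proof of this lemma but simply cites it as Lemma~A.1 of \cite{Shi2021}, so there is nothing further to compare against here.
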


	\section{Acknowledgements}
	Our research is supported by the NSFC (No. 12101542; No. 12201392) and Science and Technology Commission of Shanghai Municipality (No. 22YF1414100).

	\section{Declarations}
	
	\textbf{Conflict of interest} \quad On behalf of all authors, the corresponding author states that there is no conflict of interest.

	\textbf{Data Availability} \quad 
	The manuscript has no associated data.
	
	\bibliographystyle{alpha} 
	\bibliography{KAM}
	
\end{document}